\newcommand{\indep}{\mathbin{\perp\!\!\!\perp}}
\newcommand{\tr}{\mbox{\textnormal{Tr}}}
\newtheorem{lemma}{Lemma}[section]
\newtheorem{assumption}{Assumption}[section]
\newtheorem{proposition}{Proposition}[section]
\newtheorem{theorem}{Theorem}[section]
\newtheorem{problem}{Problem}[section]
\newtheorem{definition}{Definition}[section]
\newtheorem{corollary}{Corollary}[section]
\newcommand{\id}{\mbox{id}}
\newcommand{\liouville}[1] { \mathcal{B}\left(\mathcal{H}_{#1}\right)}
\newcommand{\hilbert}[1] { \mathcal{H}_{#1}}
\newcommand{\qI}[1]{I_{{}_\rho}\left(#1 \right) }
\newcommand{\pTr}[2]{\text{Tr}_{#1}\left[#2\right] }
\newcommand{\Petz}[3]{\rho_{#1}^{\frac{1}{2}}\rho_{#2}^{-\frac{1}{2}}\rho_{#3}\rho_{#2}^{-\frac{1}{2}}\rho_{#1}^{\frac{1}{2}}}
\newcommand{\norm}[1]{\left\|\,#1\,\right\|}
\newcommand{\trnorm}[1]{\norm{#1}_{\rm tr}}
\newcommand{\trn}[1]{\trnorm{#1}}
\newcommand{\ketbra}[2]{\ket{#1}\!\bra{#2}}
\newcommand{\tensor}{\otimes}
\newcommand{\half}{{1 \over 2}}
\newcommand{\QSD}{\mbox{QSD}}
\newcommand{\poly}{\mbox{poly}}
\newcommand{\cp}[1]{\phi\left(\rho_{#1} \right) }
\title{On the complexity of finding the maximum entropy compatible quantum state}
\author[1,2]{S. Di Giorgio}
\author[1,2]{P. Mateus}
\affil[1]{Departamento de Matemática, Instituto Superior Técnico, Universidade de Lisboa, Av. Rovisco Pais, 1049-001 Lisboa, Portugal}
\affil[2]{Instituto de Telecomunica\c{c}\~oes, 1049-001 Lisboa, Portugal}
\date{}
\begin{document}

\maketitle

\begin{abstract}
Herein we study the problem of recovering a density operator from a set of compatible marginals, motivated from limitations of physical observations. Given that the set of compatible density operators is not singular, we adopt Jaynes' principle and wish to characterize a compatible density operator with maximum entropy. We first show that comparing the entropy of compatible density operators is QSZK-complete, even for the simplest case of 3-chains. Then, we focus on the particular case of quantum Markov chains and trees and establish that for these cases, there exists a quantum polynomial circuit that constructs the maximum entropy compatible density operator. Finally, we extend the Chow-Liu algorithm to the same subclass of quantum states.
\end{abstract}

\section{Introduction}

Determining whether a set of marginal density operators has a global density operator compatible with them has been a hot research subject in the area of quantum information and mathematical physics~\cite{kly:06,liu:06}. For the moment, there is no efficient or closed-form way to detect whether there exists such a global density operator, except for some particular cases~\cite{hig:ats:sud:szu:03,bra:ser:03,hub:2018}. In this paper, we focus on a different problem. We are given a promise that the marginals have a compatible global state and want to infer the global according to Jaynes principle~\cite{jay:57}. In physics, this corresponds to perform quantum tomography on the marginals, and then infer the global state using maximum entropy principle. Indeed, we might not be able to have the physical apparatus or the computational resources to perform the tomography over the global state, but rather, only over partial states.  

As a consequence, the following question arises: can we infer efficiently (even with a quantum computer) the global state from the marginals? Furthermore, can we infer the state which corresponds to performing fewer assumptions? The latter state is usually considered to be the state with the highest entropy. 

\paragraph{Contributions of the paper.}
We start by obtaining a negative result by showing that comparing the entropies of 3-chains (the simplest non-trivial scenario) is QSZK-complete~\cite{wat:08,ber:vaz:97,wat:02}. This result hints that to find the maximum entropy compatible state given, at least, two marginals should be not feasible, even for a quantum computer~\cite{aro:bar:09} (at least by performing some entropy-monotonic step-by-step optimization).

Next, we proceed to restrict the class of quantum states to make the problem feasible for such a class. We consider quantum Markov trees, states for which each 3-subchain form a quantum Markov chain~\cite{faw:ren:15}. In this case, we show that the maximum entropy compatible problem is in P,  and that there exists a polynomial-time quantum circuit that constructs the maximal entropy compatible state. Finally, we use this result to extend the Chow-Liu algorithm~\cite{cho:liu:68} for quantum states whose all 3-subchains are quantum Markov chains.

\paragraph{Organization of the paper.} In Section 2 we give some background and state clearly the problems we are addressing. In Section 3, we attain the hardness of comparing the entropy of a compatible chain. In Section 4, we consider the restriction of the maximum entropy problem to quantum Markov trees. There, we provide the polynomial-time solution for this case, how to construct the solution with a polynomial-quantum circuit and the generalization of Chow-Liu algorithm. Some of the proofs are left to the appendices. Finally, we draw some conclusions and leave some open problems in Section 5. 

\section{Background and problem statement}

Throughout this work we assume all quantum states and operators to be defined over a finite dimensional Hilbert space $\mathcal{H}$ that is composed of $n$ parts, such that $\mathcal{H}=\otimes_{i=1}^n\mathcal{H}_i$.
We denote by $\mathcal{I}$ a collection of subsets of $\{1,\dots n\}$, that is a set of subsets of $\{1\dots n\}$, and throughout the text we call $\mathcal{I}$ the {\em set of marginals indexes}. Elements of $\mathcal{I}$ are denoted by $J$, and its complement is represented by $\overline{J}$.
Given $\mathcal{I}$, we are interested in density operators that are compatible with a $\mathcal{I}$-indexed family of marginal density operators $\mathcal{C}$ where 
$\mathcal{C}=\{\rho_J\in\liouville{J}\}_{ J\in\mathcal{I}}$ such that  
\begin{equation}
\pTr{\overline{J\cap J'}}{\rho_J}=\pTr{\overline{J\cap J'}}{\rho_{J'}}\textrm{ for all }J,J'\in \mathcal{I},
\end{equation}
where $\mathcal{H}_{J}=\bigotimes_{i\in J}\mathcal{H}_i$. We call each element $\rho_J$ a {\em marginal density operator}. We also denote by  $\mathcal{Q}(\mathcal{C})=\{Q_J\}_{J\in \mathcal{I}}$ a family of quantum circuits such that $Q_J$ constructs the density operator $\rho_J$.

The {\em compatibility set} $\mbox{Comp}\left( \mathcal{C}\right)$ associated to a given family of compatible marginals $\mathcal{C}$ is the set of density operators over $\mathcal{H}$ that admits as partial traces all the elements of $\mathcal{C}$, that is:
\begin{equation}
\mbox{Comp}\left( \mathcal{C}\right):=\left\lbrace \rho\in\liouville{} :\, \pTr{\overline{J}}{\rho}=\rho_{J}\ \textrm{ for all } J\in\mathcal{I}\right\rbrace.
		\end{equation}
The family $\mathcal{C}$ is said to be {\em admissible} when $\mbox{Comp}\left( \mathcal{C}\right)\neq 0$, that is, if it admits at least one density operator whose marginals coincide with those in $\mathcal{C}$.

We start by noticing that, compatible sets, where all marginal density operators are diagonal for the same basis (that is, represent discrete probability distributions), collapse in the classical compatible marginal problem~\cite{yu:04}. This classical problem has been shown to be NP-complete for the three-dimensional case~\cite{loe:04}. There are many cases for which it is solvable~\cite{fri:13}, and there is always a solution if we consider only two-body marginals (bipartite marginals) that form an acyclic graph. 

The relevant case where the marginals are not diagonal for the same basis has been the target of several research works and is called the quantum compatible marginal problem. Liu showed that this problem is QMA-complete~\cite{liu:06}.\\

\begin{framed}
\begin{problem}\em 
 Quantum Compatible Marginal Problem (QCMP): \begin{itemize}
	\item \texttt{Input:} A family of  circuits $\mathcal{Q}(\mathcal{C})$ that construct the family of marginal density operators $\mathcal{C}$.
	\item \texttt{Accept:} if $\mathcal{C}$ is admissible.
    \item \texttt{Reject:} if $\mathcal{C}$ is not admissible.
\end{itemize}
\end{problem}
\end{framed}
\vspace*{3mm}

In some cases we know that $\mathcal{C}$ is admissible, for instance when we are promised that the marginals $\rho_J$ are indeed partial traces of a global state. In Physics, it is reasonable to assume that we can prepare many copies of a global system, but in general, we can only partially observe it. In this case, given that we have many copies of the global system, we would be able to characterize in full detail the partial traces and know that they form an admissible set. The question now is to infer the global state with maximum entropy among those in the compatibility set. This leads to the following problem.\\

\begin{framed}
\begin{problem}\em 
 Maximum Entropy Compatible Marginal Problem (MECMP):\begin{itemize}
	\item \texttt{Input:} A family of  circuits $\mathcal{Q}(\mathcal{C})$ promised to construct an admissible $\mathcal{C}$, and a real value $k$.
	\item \texttt{Accept:} if there exists a $\rho\in \mbox{Comp}\left( \mathcal{C}\right)$ such that $S(\rho)\geq k$
    \item \texttt{Reject:} otherwise.
\end{itemize}
\end{problem}
\end{framed}

Given the general complexity of this problem, we focus on the more straightforward case where all sets $J$ in $\mathcal{I}$ have two indexes. Thus we consider that we are given a set of compatible two-body marginals, and we want to reconstruct the maximum entropy state compatible with those marginals. For this two-body case, it is possible to construct an associated graph, where each two-body marginal denotes an edge.

\begin{definition}
	\label{def: associate graph}
\emph{	Let $\mathcal{C}$ be a $\mathcal{I}$-indexed family of two-body compatible marginal density operators. The {\em associated graph} $\mathcal{G}_\mathcal{C}$ is $\left(\{1,\dots, n\}, E\right)$, where $(i,j)\in E$ if $\{i,j\}\in \mathcal{I}$.}
\end{definition}

In the simplest non-trivial case we have that $n=3$ and $\mathcal{I}=\{\{1,2\},\{2,3\}\}$. We call this case a 3-chain. In the next section, we show that given two density operators $\rho_0$ and $\rho_1$ in the compatible set of a 3-chain, comparing who has higher entropy is QSZK-complete. We denote the subspaces $\mathcal{H}_1$, $\mathcal{H}_2$ and $\mathcal{H}_3$ by $\mathcal{H}_A$, $\mathcal{H}_B$ and $\mathcal{H}_C$, respectively.

\section{Hardness of comparing entropy of a compatible chain}

Ben-Aroya et al.~\cite{aro:tas:07} showed that, given two quantum circuits $Q_0$ and $Q_1$ that generate two mixed states $\rho_0$ and $\rho_1$, respectively, such that $|S(\rho_0)-S(\rho_1)|>\frac{1}{2}$, determining whether $S(\rho_0)>S(\rho_1)$ is QSZK-complete. Thus, they conclude that it is quite improbable that computing the von Neumann entropy of a mixed state can be done in BQP~\cite{aar:05}. We further look into this problem by restricting to the case when $\rho_0$ and $\rho_1$ live in the same Hilbert space and have the same marginals. 
We state our problem as follows:
\begin{framed}
\begin{problem}\em 
 3-Chain Compatible Quantum Entropy Difference (3cQED):
\begin{itemize}
	\item \texttt{Input:} Two quantum circuits $Q_0$ and $Q_1$  that generate tripartite density operators $\rho_0$ and $\rho_1$, respectively, over the same Hilbert space of the form $\mathcal{H}_A\otimes \mathcal{H}_B \otimes \mathcal{H}_C$, promised that:
\begin{itemize}
    \item $Tr_{A}(\rho_0)=Tr_{A}(\rho_1)$;
    \item $Tr_{C}(\rho_0)=Tr_{C}(\rho_1)$;
    \item $|S(\rho_0)-S(\rho_1)|\geq 1/2$;
\end{itemize}	
    then,
	\item \texttt{Accept:} if $S(\rho_0)-S(\rho_1)\geq 1/2$;
    \item \texttt{Reject:} if $S(\rho_1)-S(\rho_0)\geq 1/2$.
\end{itemize}
\end{problem}
\end{framed}

Clearly, 3cQED is a particular case of QED, wherein the latter the Hilbert space of $\rho_0$ and $\rho_1$ does not have to be the same, nor the densities need to be tripartite.

Obviously, 3cQED is reducible to QED, and therefore it relies in QSZK. It remains to show that it is QSZK hard. To do so, we adapt the proof of Ben-Aroya et al., and reduce QSD$_{\alpha,\beta}$ (a well-known problem that is QSZK-complete) to 3cQED, for $0\leq\alpha\leq\beta^2\leq 1$. 

\begin{framed}
\begin{problem}\em Quantum state distance  (QSD$_{\alpha,\beta}$) with $0\leq \alpha^2\leq \beta \leq 1$:
\begin{itemize}
\item \texttt{Input:} Two quantum circuits $Q_0$ and $Q_1$ that prepare the states $\rho_0$ or $\rho_1$ promised that  
\begin{itemize}
\item either $||\rho_0-\rho_1||_{tr}\geq \beta$;
    \item  or $||\rho_0-\rho_1||_{tr}\leq \alpha$;
\end{itemize}
then, 
	\item \texttt{Accept:} $||\rho_0-\rho_1||_{tr}\geq \beta$,
	\item \texttt{Reject:} $||\rho_0-\rho_1||_{tr}\leq \alpha$. 
\end{itemize}
\end{problem}
\end{framed}
\vspace*{2mm}

In Problem~4, $||\rho_0-\rho_1||_{tr}$ denotes the trace distance between the operators $\rho_0$ and $\rho_1$.


\begin{theorem}\label{thm1}
\em For any $0 \le \alpha < \beta^2\le 1$, $\QSD_{\alpha, \beta}$ is reducible to
3cQED.
\end{theorem}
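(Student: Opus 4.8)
The plan is to establish only the hardness direction, since 3cQED already lies in QSZK (it is a restriction of QED, and it reduces to $\QSD$ directly). So I would give a polynomial-time reduction from $\QSD_{\alpha,\beta}$ to 3cQED, following the skeleton of Ben-Aroya et al.~\cite{aro:tas:07}'s reduction of $\QSD$ to QED; the extra work is that 3cQED forces both output circuits to generate states on a single space $\hilbert{A}\otimes\hilbert{B}\otimes\hilbert{C}$ with \emph{equal} $AB$- and $BC$-marginals, which rules out the naive ``attach a flag'' gadget and is the source of the main subtlety.

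\textbf{Step 1: polarize.} First apply the polarization lemma of Sahai and Vadhan in its trace-distance form; the hypothesis $\alpha<\beta^2$ is precisely the condition under which one can turn the input circuits, in polynomial time, into circuits $Q_0',Q_1'$ preparing states $\rho_0',\rho_1'$ on some $\hilbert{B}$ with $\trnorm{\rho_0'-\rho_1'}\le 2^{-k}$ in the \texttt{reject} case and $\trnorm{\rho_0'-\rho_1'}\ge 1-2^{-k}$ in the \texttt{accept} case, for a polynomial $k$ we fix later.

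\textbf{Step 2: build the gadget.} Take $\hilbert{A}=\hilbert{C}=\mathbb{C}^2$, set $\bar\rho=\tfrac12(\rho_0'+\rho_1')$, fix the constant $q\in(0,\tfrac12)$ with $h(q)=\tfrac12$ (binary entropy $h$), and put $p_{00}=p_{11}=q/2$, $p_{01}=p_{10}=(1-q)/2$. Define
\[
\omega_0=\tfrac14\sum_{a,c\in\{0,1\}}\ketbra{a}{a}_A\otimes(\rho'_{a\oplus c})_B\otimes\ketbra{c}{c}_C,\qquad
\omega_1=\sum_{a,c\in\{0,1\}}p_{ac}\,\ketbra{a}{a}_A\otimes\bar\rho_B\otimes\ketbra{c}{c}_C,
\]
both preparable by polynomial-size circuits (for $\omega_0$: sample $a,c$ uniformly and run $Q'_{a\oplus c}$ on $B$; for $\omega_1$: sample $(a,c)$ from the distribution $p$ and run a coin-flipped mixture of $Q_0',Q_1'$ on $B$). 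The role of the ``$a\oplus c$'' is that summing over $c$ — or over $a$ — collapses the $B$-block to $\bar\rho$ regardless of the surviving index, so that
\[
\pTr{C}{\omega_0}=\pTr{C}{\omega_1}=\unif{A}\otimes\bar\rho,\qquad \pTr{A}{\omega_0}=\pTr{A}{\omega_1}=\bar\rho\otimes\unif{C};
\]
thus $\omega_0,\omega_1$ are two states of the same compatibility set. As both are classical--quantum along $AC$, a direct count gives $S(\omega_0)=2+\tfrac12 S(\rho_0')+\tfrac12 S(\rho_1')$ and $S(\omega_1)=1+h(q)+S(\bar\rho)$, so $S(\omega_1)-S(\omega_0)=\chi-\tfrac12$, where $\chi:=S(\bar\rho)-\tfrac12 S(\rho_0')-\tfrac12 S(\rho_1')\in[0,1]$ is the Holevo quantity of the ensemble $\{(\tfrac12,\rho_0'),(\tfrac12,\rho_1')\}$.

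\textbf{Step 3: amplify and conclude.} Output the 3cQED instance whose first circuit prepares $\omega_1^{\otimes 2}$ and whose second prepares $\omega_0^{\otimes 2}$, on $\hilbert{A}^{\otimes 2}\otimes\hilbert{B}^{\otimes 2}\otimes\hilbert{C}^{\otimes 2}$; the $A$-marginals agree and the $C$-marginals agree (tensor squares of the marginals of Step 2), and $S(\omega_1^{\otimes 2})-S(\omega_0^{\otimes 2})=2\chi-1$. The remaining step, which I expect to be the technical heart, is a continuity estimate for $\chi$: identifying $\chi$ with the mutual information $I(X{:}Q)$ of the cq-state $\tfrac12\ketbra{0}{0}_X\otimes\rho_0'+\tfrac12\ketbra{1}{1}_X\otimes\rho_1'$ and combining a Fannes-type bound with the Jordan decomposition of $\rho_0'-\rho_1'$, one shows that $\trnorm{\rho_0'-\rho_1'}\le 2^{-k}$ forces $\chi\le\tfrac14$ while $\trnorm{\rho_0'-\rho_1'}\ge 1-2^{-k}$ forces $\chi\ge\tfrac34$, once $k$ is a large enough polynomial — the only dimension dependence being the factor $\log\dim\hilbert{B}$ in Fannes, which is polynomial in the input size, so polynomial polarization suffices. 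Then $2\chi-1\ge\tfrac12$ in the \texttt{accept} case and $2\chi-1\le-\tfrac12$ in the \texttt{reject} case, which at once certifies the promise $|S(\omega_1^{\otimes2})-S(\omega_0^{\otimes2})|\ge\tfrac12$ and the accept/reject condition of 3cQED, finishing the reduction. Two copies are genuinely needed: a single copy gives only $|\chi-\tfrac12|<\tfrac12$, and no choice of $q$ avoids this, as it would force $1-h(q)$ to be simultaneously $\ge\tfrac12$ and $\le\tfrac12$. The two places demanding care are exactly that these constant gaps survive the continuity bounds and that the marginal identities are exact rather than approximate.
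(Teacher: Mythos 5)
Your reduction is correct, and its skeleton is the same as the paper's (which follows Ben-Aroya--Ta-Shma): polarize, build two polynomial-time-preparable states with \emph{identical} $AB$- and $BC$-marginals whose per-copy entropy gap is exactly $\chi-\tfrac12$ with $\chi$ the Holevo quantity of the polarized pair, take two copies to turn this into $2\chi-1$, and separate the two cases by Fannes (small trace distance $\Rightarrow \chi$ near $0$) and by the distinguishing-measurement bound (large trace distance $\Rightarrow \chi$ near $1$). What differs is the gadget. The paper keeps $B$ trivial ($\ketbra{0}{0}_B$), places in $A$ and $C$ two oppositely-oriented copies of $\xi_1=\half\ketbra{0}{0}\tensor\mu_0+\half\ketbra{1}{1}\tensor\mu_1$ for one state, and for the other two copies of $\xi_0=\half\mu_0+\half\mu_1$ padded with a Bell mixture $\zeta$ of entropy exactly $1$ shared between $A$ and $C$; compatibility of marginals then follows because tracing either register of $\xi_1$ yields $\half\mathrm{id}\tensor$ or $\tensor\,\bar\mu$. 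You instead put the polarized register in $B$, use XOR-correlated classical bits in $A$ and $C$ so that both marginals are exactly $\tfrac{\mathrm{id}_A}{2}\tensor\bar\rho$ and $\bar\rho\tensor\tfrac{\mathrm{id}_C}{2}$, and generate the half-bit offset with a biased parity satisfying $h(q)=\tfrac12$ rather than with $\zeta$. Both gadgets work; yours has the pleasant feature that marginal equality holds for \emph{every} value of $q$ (so gate-set approximation of the irrational bias is harmless, and there is ample slack since in fact $\chi\le 0.1$ and $\chi\ge 0.9$ after polarization, not just $\tfrac14$ and $\tfrac34$), while the paper's avoids any tuned constant by letting the entropy-$1$ state $\zeta$ supply the offset. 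Finally, the "technical heart" you defer in Step 3 is not new work: the lower bound on $\chi$ from the Helstrom/Jordan-decomposition measurement is precisely Lemma~\ref{lem:ANTV} (equivalently Lemma~\ref{lem:ANTV-with-trace-norm}), and the upper bound is Lemma~\ref{Lemma:Fannes}, i.e.\ exactly the two appendix lemmas the paper invokes, so your argument closes with off-the-shelf tools.
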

 

\begin{proof}
Given circuits $Q_0, Q_1$, that construct $\rho_0$ and $\rho_1$, we first apply the polarization lemma
(Lemma \ref{thm:polarization} in Appendix~\ref{ap1}) with $n=m_0$ and obtain circuits $R_0$ and $R_1$ that output density operators
$\mu_0, \mu_1$, respectively. We then construct two circuits $Z_0$ and $Z_1$ as
follows. $Z_1$ is implemented by a circuit which first applies a
Hadamard gate on a single qubit $b$, measures $b$ and then
conditioned on the result it applies either $R_0$ or $R_1$. The
output of $Z_1$ is $\xi_1=\half \ketbra{0}{0} \tensor \mu_0 + \half
\ketbra{1}{1} \tensor \mu_1$. Since we need to construct a tripartite system, we introduce the notation $\xi_1^{AC}$ to point out that the qubit part of $\xi_1$ belongs to system $A$ and the remaining part belongs to system $C$. As expected, $\xi_1^{CA}$ indicates that the qubit belongs to $C$ and remaining part to $A$.
Circuit $Z_0$ is the same as $Z_1$ except
that $b$ is traced out. The output of $Z_0$ is $\xi_0=\half \mu_0 +
\half \mu_1$. We shall denote by $\xi_0^A$ and $\xi_0^C$ if the state is in $A$ or $C$ subsystem, respectively. 

Finally, we denote by $\ket{\phi^\pm}^{AC}$ two maximally entangled states between $A$ and $C$. Moreover, we take $\zeta=\half \ket{\phi^+}\bra{\phi^+}+\half\ket{\phi^-}\bra{\phi^-}$ and
note that $S(\zeta)=1$. We denote by $Q$ the circuit that prepares $\zeta$.
Consider 
\begin{itemize}
 \item $\rho'=\xi_0^{A}\otimes \zeta^{AC}\otimes \xi_0^{C}\otimes \ket{0}\bra{0}_B$ ;
  \item $\rho''=\xi_1^{AC}\otimes \xi_1^{CA}\otimes \ket{0}\bra{0}_B$ .
\end{itemize}
Note that in $\rho'$ the subsystem of $A$ contains $\xi_0^{A}$ and a qubit of $\zeta^{AC}$; the subsystem of $C$ contains $\xi_0^{C}$ and the other qubit of $\zeta^{AC}$. Moreover, in $\rho''$, the subsystem of $A$ has a qubit entangled with $\mu_0$ and $\mu_1$ in the subsystem $C$ ($\xi_1^{AC}$); and has another $\mu_0$ and $\mu_1$ entangled with a qubit of $C$ ($\xi_1^{CA}$).

The reduction outputs the following pair of d.o. $(\rho', \rho'')$ together with the circuits that construct them, namely $Z_0 \tensor
Z_0 \tensor Q$ and  $Z_1 \tensor Z_1$. We ignore the construction of the state $\ket{0}\bra{0}_B$, which is trivial.

Start by observing that by tracing $C$ from both $\rho'$ and $\rho''$ we obtain $(\half\ket{0}\bra{0}+\half\ket{1}\bra{1})\tensor (\half\mu_0+\half\mu_1)\tensor \ket{0}\bra{0}$. The same state will be obtained by tracing subsystem $A$ from both $\rho'$ and $\rho''$. So, $\rho'$ and $\rho''$ have compatible marginals. 

\paragraph{Part 1}
If $(Q_0, Q_1) \in (\QSD_{\alpha, \beta})_{NO}$ then $(Z_0 \tensor
Z_0 \tensor Q, Z_1 \tensor Z_1) \in$ 3cQED$_{NO}$.
\ \\[2mm]
We know that $\trnorm{\rho_0 - \rho_1} \le \alpha$. By using the
Polarization lemma (Lemma~\ref{thm:polarization} in Appendix~\ref{ap1}) we get
$\trnorm{\mu_0 - \mu_1} \le 2^{-m_0}$. By the joint-entropy
theorem (Lemma \ref{Lemma:joint-entropy-theorem}),

$$S(\xi_1) = \half (S(\mu_0)+ S(\mu_1))~+~1.$$
\noindent
On the other hand, $\xi_0$ is very close both to $\mu_0$ and
to $\mu_1$. Specifically, $\trnorm{\xi_0 - \mu_1} =
\trnorm{\half \mu_0 - \half \mu_1} \le 2^{-m_0}$. Thus,
by Fannes' inequality (Lemma~\ref{Lemma:Fannes} in Appendix~\ref{ap1}) $|S(\xi_0) -
S(\mu_1)| \le 2^{-m_0} \cdot \poly(m_0) \leq 0.1~$, for large
enough $m_0$. Similarly, $|S(\xi_0) - S(\mu_0)| \le 0.1$. It
follows that

$$|S(\xi_0) - \half (S(\mu_0)+ S(\mu_1))| \le 0.1.$$

Combining the two equations we get $ S(\xi_1) - S(\xi_0) \ge
0.9$. Thus, $S(\rho'') - S(\rho') \ge 2*0.9 - 1 = 0.8$. Therefore, $(Z_0 \tensor Z_0
\tensor Q, Z_1 \tensor Z_1) \in$ 3cQED$_{NO}$.

\paragraph{Part 2}
If $(Q_0, Q_1) \in (\QSD_{\alpha, \beta})_{YES}$ then $(Z_0 \tensor
Z_0 \tensor Q, Z_1 \tensor Z_1) \in$ 3cQED$_{YES}$.
\ \\[2mm]
By the Polarization lemma (Lemma \ref{thm:polarization} in Appendix~\ref{ap1})
$\trnorm{\mu_0 - \mu_1} \ge 1 - 2^{-m_0}$. Using Lemma \ref{lem:ANTV-with-trace-norm} (in Appendix~\ref{ap1}) we get that 
$$S(\xi_0) \ge
\half[S(\mu_0) + S(\mu_1)] + 1-H(\half + \frac{\trn{\mu_0 -
\mu_1}}{2}) \ge \half[S(\mu_0) + S(\mu_1)] + 1 - H(2^{-m_0}).$$

By Lemma \ref{Lemma:joint-entropy-theorem} (in Appedix~\ref{ap1}) we know that $S(\xi_1) =
\half (S(\mu_0) + S(\mu_1)) + 1$. Therefore, for sufficiently large $m_0$ we have $S(\xi_1) -
S(\xi_0) = H(2^{-m_0}) < 0.1$.

In particular, $S(\rho'') - S(\rho') \le 2*0.1 - 1 = -0.8$ and $(Z_0 \tensor Z_0 \tensor Q,
Z_1 \tensor Z_1) \in$ 3cQED$_{YES}$.
\end{proof}

It follows that comparing the entropy of a set of compatible marginals is QSZK-complete, as this problem is also an instance of QED. As a consequence, we expect that finding the maximum entropy state is also hard. This fact does not imply that, given a state $\rho$ and the marginals, it is no possible to detect efficiently whether this state is one with maximum entropy among those compatible with the marginals. We now focus our attention on a particular case in which this problem can be addressed.

\section{Quantum Markov chains and trees}

Given that the general problem of finding the maximum entropy state is hard, even for 3-chains, we consider a simpler case. We focus on quantum Markov chains~\cite{sut:18} that rely on the Hilbert space $\mathcal{H}=\mathcal{H}_{A}\otimes\mathcal{H}_{B}\otimes\mathcal{H}_{C}$ and take $\mathcal{C}=\{\{A,B\},\{B,C\}\}$. To simplify notation, we drop the brackets and commas in the indexes and so, for instance, the partial trace $\rho_{\{A,B\}}$ is just denoted by $\rho_{AB}$ (the same simplification is applied for the Hilbert subspaces $\mathcal{H}_{\{A,B\}}$, which are denoted just by $\mathcal{H}_{AB}$).

Recall the definition of quantum Markov chain:
\begin{definition}
\label{def:QMC} \em \cite{sut:faw:ren:16}
A quantum Markov chain (QMC) is a 3-chain $A-B-C$ for which there exists a recovery map $\mathcal{R}_{B\to BC}:\mathcal{B}(\mathcal{H}_{B})\to \mathcal{B}(\mathcal{H}_{BC})$, i.e. an arbitrary trace-preserving completely positive (CPTP) map~(see, for instance,~\cite{cho:75,nie:chu:12}), s.t. $\rho_{ABC}=\left(\mathcal{I}_{A}\otimes\mathcal{R}_{B\to BC}\right)(\rho_{AB})$,
where $\mathcal{I}_{A}$ denotes the identity map on $\mathcal{B}(\mathcal{H}_A)$.
\end{definition}

By definition, the recovery map must fulfill that $\mathcal{R}_{B\to BC}\left( \rho_B\right) = \rho_{BC}$.
 
 \begin{definition}\label{def:PTQMC}
\em A family of QMC's $\{\rho^{(n)}_{ABC}\}_{n\in \mathbb{N}}$ is said to be {\em constructed in polynomial time} if all elements $\rho^{(n)}_{ABC}$ rely in the same (finite) Hilbert space $\mathcal{H}_{A}\otimes\mathcal{H}_{B}\otimes\mathcal{H}_{C}$ (that does not depend on $n$) and there is polynomial-time family of quantum circuits that generate both $\rho^{(n)}_{AB}$ and $\mathcal{R}^{(n)}_{B\to BC}$. 
 \end{definition}
 
 Given that the dimension of (a polynomial-time) quantum Markov chain does not grow with $n$, it can be represented in matrix form in polynomial-time by multiplying all the gates involved in the circuits that generate $\rho^{(n)}_{AB}$ and $\mathcal{R}^{(n)}_{B\to BC}$. We stress that to design circuits for density operators and CPTP maps we require only an ancilla space of the same dimension of the support of these operators/maps~\cite{aha:kit:nis:08}, and therefore the number of gates is polynomial in $n$, but the full dimension of the space (including ancillae) does not grow with $n$.
 
 From this point on, we assume that $\rho_{ABC}$ is invertible (on its support), as invertible density operators are dense. To derive the main result of the paper, we need to establish a central lemma relating quantum Markov chains with the Petz recovery map together with the strong subadditivity of von Neumann entropy. We give the proof in Appendix~\ref{appendix:proof:lemma6}.

\begin{lemma}\label{lemma: QMC - I(A:C|B)}\em Let $\rho_{ABC}$ be an invertible density operator. The following four assertions are equivalent:
	\begin{enumerate}
		\item $\rho_{ABC}$ is a QMC over the chain $A-B-C$.
		\item $ I_{\rho}(A:C|B)=0$, where $I_{\rho}(A:C|B):=S(\rho_{AB})+S(\rho_{BC})-S(\rho_{B})-S(\rho_{ABC})$.
		\item $\mathcal{P}_{B\to BC}(X):=\rho_{BC}^{\frac{1}{2}}((\rho_B^{-\frac{1}{2}}X \rho_B^{-\frac{1}{2}})\otimes \id_{C})\rho_{BC}^{\frac{1}{2}}, \text{ is a CPTP map for any } X\in \mathcal{B}(\mathcal{H}_{B})$ and preserves the partial trace $\rho_{AB}$.
		\item $\log\rho_{ABC}-(\log\rho_{AB})\otimes \id_{C}=\id_{A}\otimes (\log\rho_{BC})-\id_{A}\otimes(\log\rho_B)\otimes \id_{C}$.
		\end{enumerate}
\end{lemma}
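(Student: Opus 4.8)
The four conditions are the classical characterizations of a quantum Markov chain (Petz's recoverability theorem, the Hayden--Jozsa--Petz--Winter structure theorem, and Ruskai's analysis of the equality case of strong subadditivity), so the plan is to assemble these into one cyclic chain of implications, say $(1)\Rightarrow(2)\Rightarrow(4)\Rightarrow(3)\Rightarrow(1)$. Invertibility of $\rho_{ABC}$ makes $\rho_{AB},\rho_{BC},\rho_B$ invertible on their supports, so every logarithm and inverse below is well defined and $(4)$ is a genuine identity of Hermitian operators. For $(1)\Rightarrow(2)$: if $\rho_{ABC}=(\mathcal{I}_A\otimes\mathcal{R}_{B\to BC})(\rho_{AB})$, then $C$ is produced from $B$ alone, so the data-processing inequality applied to $\mathcal{I}_A\otimes\mathcal{R}_{B\to BC}$ gives $I_\rho(A:C|B)\le 0$, while strong subadditivity gives $I_\rho(A:C|B)\ge 0$; hence it vanishes.

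For $(2)\Leftrightarrow(4)$: expanding each von Neumann entropy as $-\Tr{\rho_{ABC}\log(\cdot)}$ yields the exact identity $I_\rho(A:C|B)=\Tr{\rho_{ABC}\,K}$ with $K:=\log\rho_{ABC}-(\log\rho_{AB})\otimes\id_C-\id_A\otimes\log\rho_{BC}+\id_A\otimes(\log\rho_B)\otimes\id_C$, making $(4)\Rightarrow(2)$ immediate. For $(2)\Rightarrow(4)$ I would use the Lieb-type trace inequality $\Tr{\widetilde\rho}\le 1$ with $\widetilde\rho:=\exp\!\big((\log\rho_{AB})\otimes\id_C+\id_A\otimes\log\rho_{BC}-\id_A\otimes(\log\rho_B)\otimes\id_C\big)$, the operator-convexity heart of strong subadditivity. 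The same expansion gives $D(\rho_{ABC}\,\|\,\widetilde\rho)=I_\rho(A:C|B)$, and Klein's inequality applied to the normalized operator $\widetilde\rho/\Tr{\widetilde\rho}$ gives $I_\rho(A:C|B)=D(\rho_{ABC}\,\|\,\widetilde\rho)\ge-\log\Tr{\widetilde\rho}\ge 0$. Thus $I_\rho(A:C|B)=0$ forces both $\Tr{\widetilde\rho}=1$ and $\rho_{ABC}=\widetilde\rho$, and taking logarithms of $\rho_{ABC}=\widetilde\rho=\exp(\cdot)$ is exactly $(4)$.

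For $(4)\Rightarrow(3)\Rightarrow(1)$: rewriting $(4)$ as $\log\rho_{ABC}=\id_A\otimes\log\rho_{BC}+\big(\log\rho_{AB}-\id_A\otimes\log\rho_B\big)\otimes\id_C=(\log\rho_{AB})\otimes\id_C+\id_A\otimes\big(\log\rho_{BC}-(\log\rho_B)\otimes\id_C\big)$ and combining these two forms with the fact that $\rho_{AB},\rho_{BC},\rho_B$ are the genuine marginals of $\rho_{ABC}$, one extracts that the three summands in $(4)$ commute pairwise; equivalently one obtains the Hayden--Jozsa--Petz--Winter block decomposition $\mathcal{H}_B\cong\bigoplus_j\mathcal{H}_{b_j^L}\otimes\mathcal{H}_{b_j^R}$ with $\rho_{ABC}=\bigoplus_j p_j\,\rho_{Ab_j^L}\otimes\rho_{b_j^RC}$. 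Once everything commutes, exponentiating $(4)$ and putting it in Hermitian form gives $\rho_{ABC}=\big(\id_A\otimes\rho_{BC}^{1/2}(\rho_B^{-1/2}\otimes\id_C)\big)(\rho_{AB}\otimes\id_C)\big(\id_A\otimes(\rho_B^{-1/2}\otimes\id_C)\rho_{BC}^{1/2}\big)=(\mathcal{I}_A\otimes\mathcal{P}_{B\to BC})(\rho_{AB})$, which is $(3)$; here $\mathcal{P}_{B\to BC}$ is visibly completely positive (a composition of the conjugations $X\mapsto\rho_B^{-1/2}X\rho_B^{-1/2}$ and $Y\mapsto\rho_{BC}^{1/2}Y\rho_{BC}^{1/2}$ with the ampliation $X\mapsto X\otimes\id_C$) and trace-preserving since $\Tr{\mathcal{P}_{B\to BC}(X)}=\Tr{\rho_B\rho_B^{-1/2}X\rho_B^{-1/2}}=\Tr{X}$, using $\pTr{C}{\rho_{BC}}=\rho_B$. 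Then $(3)\Rightarrow(1)$ is immediate: $\mathcal{P}_{B\to BC}$ is a CPTP map $\mathcal{B}(\mathcal{H}_B)\to\mathcal{B}(\mathcal{H}_{BC})$ recovering $\rho_{ABC}$ from $\rho_{AB}$, hence the recovery map of Definition~\ref{def:QMC}.

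I expect $(2)\Rightarrow(4)$ to be the real obstacle: the equality case of strong subadditivity is a genuine theorem resting on the case of equality in Lieb's concavity theorem (through $\Tr{\widetilde\rho}\le 1$), and it is the only step that is not routine bookkeeping with partial traces and complete positivity. A viable alternative, in the same ``Petz recovery map together with strong subadditivity'' spirit, is to prove $(2)\Rightarrow(3)$ directly from a recoverability strengthening of strong subadditivity of the form $I_\rho(A:C|B)\ge D_{\mathrm{M}}\big(\rho_{ABC}\,\|\,(\mathcal{I}_A\otimes\mathcal{P}_{B\to BC})(\rho_{AB})\big)$ (measured relative entropy), which vanishes iff the Petz map recovers. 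The second delicate point is the extraction of pairwise commutativity / the block structure from $(4)$ in the step $(4)\Rightarrow(3)$; once that is in hand the remaining manipulations are mechanical.
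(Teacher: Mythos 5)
Most of your cycle is sound, and in places it is more self-contained than the paper's own argument: your $(1)\Rightarrow(2)$ via data processing plus strong subadditivity is correct, and your $(2)\Leftrightarrow(4)$ is the Ruskai-style route (the identity $I_\rho(A:C|B)=D(\rho_{ABC}\|\widetilde\rho)$, Lieb's triple-matrix inequality giving $\mathrm{Tr}[\widetilde\rho]\le 1$, and Klein's inequality forcing $\rho_{ABC}=\widetilde\rho$ at equality), which is a legitimate and well-known proof of the equality case. For comparison, the paper does none of this by hand: it proves $(3)\Rightarrow(1)$ from the definition, gets $(1)\Leftrightarrow(2)$ and $(1)\Rightarrow(3)$ by citing Sutter's Theorem 5.2 (the rotated Petz map at $t=0$), and gets $(2)\Leftrightarrow(4)$ by citing Petz's Theorem 2 on the equality case of monotonicity of relative entropy applied to the partial-trace channel. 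So the hard content is outsourced to those two citations, whereas you try to carry part of it yourself.

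The genuine gap is your step $(4)\Rightarrow(3)$. You assert that, combining the two rewritings of $(4)$ with the marginal conditions, ``one extracts that the three summands commute pairwise, equivalently the Hayden--Jozsa--Petz--Winter block decomposition.'' That extraction is not bookkeeping: deriving either the pairwise commutativity of $\log\rho_{AB}\otimes\mathrm{id}_C$, $\mathrm{id}_A\otimes\log\rho_{BC}$, $\log\rho_B$ or the HJPW block structure from $(4)$ is essentially the full strength of the equality-case theorem (it is exactly what Petz's theorem, as cited in the paper, or the HJPW structure theorem provide), and no direct elementary manipulation of $(4)$ is known to yield it. As written, your chain therefore breaks at its crux unless you explicitly invoke one of those results as a black box, at which point your proof collapses onto the paper's citation-based one. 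Your proposed fallback is also shaky: the measured-relative-entropy strengthening $I_\rho(A:C|B)\ge D_{\mathrm{M}}\bigl(\rho_{ABC}\,\big\|\,(\mathcal{I}_A\otimes\mathcal{P}_{B\to BC})(\rho_{AB})\bigr)$ with the plain (unrotated) Petz map is not an established theorem -- the known universal recovery bounds require rotated or averaged Petz maps -- so it cannot be relied on in that exact form. Everything else (the trace-preservation computation for $\mathcal{P}_{B\to BC}$, $(3)\Rightarrow(1)$, and $(4)\Rightarrow(2)$) is fine.
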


The map $\mathcal{P}_{B\to BC}(X)$ is known as \emph{Petz recovery map or transpose map}.
Again, to ease notation, we drop the identities whenever they are obvious, for instance, for the expressions $\rho_{BC}^{\frac{1}{2}}((\rho_B^{-\frac{1}{2}}X \rho_B^{-\frac{1}{2}})\otimes \id_{C})\rho_{BC}^{\frac{1}{2}}$ we write just $\rho_{BC}^{\frac{1}{2}}\rho_B^{-\frac{1}{2}}X \rho_B^{-\frac{1}{2}}\rho_{BC}^{\frac{1}{2}}$, and the same for $\log\rho_{ABC}-(\log\rho_{AB})\otimes \id_{C}$, which we write just $\log\rho_{ABC}-\log\rho_{AB}$.

Observe that we can also recover a tripartite density operator from $\rho_{BC}$ through $\mathcal{P}_{B\to AB}(\cdot)$:
	\begin{align}
	\rho_{AB}^{\frac{1}{2}}\rho_B^{-\frac{1}{2}}\rho_{BC} \rho_B^{-\frac{1}{2}}\rho_{AB}^{\frac{1}{2}},
	\end{align}
	and by uniqueness, since the von Neumann entropy is operator-concave~\cite{low:34,ben:zyc:06}, they are the same. However,  it is not know whether, given a family of QMC that can be constructed in polynomial time via $\mathcal{P}^{(n)}_{B\to BC}(X)$, it is possible to build $\mathcal{P}^{(n)}_{B\to AB}(X)$ in polynomial-time. The next result states that solution to MECMP (Problem 2) and also QCMP (Problem 1), for 3-chains can be fully determined when a QMC belongs to the compatibility set, the proofs can be found in~\cite{dig:mer:mat:19}.

\begin{lemma}\em \label{lemma: QMC - comp}
	Given a 3-chain $\{\rho_{AB},\rho_{BC}\}$ compatible with a QMC, say $\rho_{ABC}$, then the solution of the maximum entropy estimator $\widetilde{\rho}_{ABC}$ is precisely  $\rho_{ABC}$. Moreover, the 3-chain $\{\rho_{AB},\rho_{BC}\}$ is compatible with a QMC in $\mathcal{B}\left(\mathcal{H}_{ABC}\right)$ iff $\tr_A(\rho_{AB})=\tr_C(\rho_{BC})$ and the operator $\Theta_{ABC}=\rho_{BC}^\frac{1}{2}\rho_{B}^{-\frac{1}{2}}\rho_{AB}^\frac{1}{2}$ is normal. Moreover, 
	if two marginals $\{\rho_{AB},\rho_{BC}\}$ are compatible with a QMC on $\liouville{ABC}$, say $\rho_{ABC}$, then the operator $\Theta_{ABC}$ is its square root.
\end{lemma}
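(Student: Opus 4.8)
I would dispatch the three assertions in turn, using Lemma~\ref{lemma: QMC - I(A:C|B)} and Definition~\ref{def:QMC} throughout (recall $\rho_{ABC}$, hence $\rho_{AB},\rho_{BC},\rho_B$, is assumed invertible on its support). For the first claim I would invoke strong subadditivity: every $\sigma\in\mbox{Comp}(\{\rho_{AB},\rho_{BC}\})$ satisfies $I_{\sigma}(A:C|B)\ge 0$, i.e.\ $S(\sigma)\le S(\rho_{AB})+S(\rho_{BC})-S(\rho_B)$, and the right-hand side is a constant over the whole compatibility set because it only involves the fixed marginals. By the equivalence $1\Leftrightarrow 2$ of Lemma~\ref{lemma: QMC - I(A:C|B)} the QMC $\rho_{ABC}$ has $I_{\rho_{ABC}}(A:C|B)=0$ and thus attains this bound. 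Since $\mbox{Comp}(\{\rho_{AB},\rho_{BC}\})$ is non-empty (it contains $\rho_{ABC}$), convex and compact while $S$ is strictly concave, the maximiser is unique, so $\widetilde{\rho}_{ABC}=\rho_{ABC}$.

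\textbf{From ``QMC-compatible'' to normality of $\Theta_{ABC}$.} Next, if $\rho_{ABC}$ is a QMC with marginals $\rho_{AB},\rho_{BC}$, then $\tr_A\rho_{AB}=\rho_B=\tr_C\rho_{BC}$ is immediate, and by $1\Leftrightarrow 3$ of Lemma~\ref{lemma: QMC - I(A:C|B)} together with Definition~\ref{def:QMC} the Petz map reconstructs the global state from $\rho_{AB}$: $\rho_{ABC}=(\mathcal{I}_A\otimes\mathcal{P}_{B\to BC})(\rho_{AB})=\rho_{BC}^{1/2}\rho_B^{-1/2}\rho_{AB}\rho_B^{-1/2}\rho_{BC}^{1/2}=\Theta_{ABC}\Theta_{ABC}^\dagger$, with $\Theta_{ABC}=\rho_{BC}^{1/2}\rho_B^{-1/2}\rho_{AB}^{1/2}$. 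Using the remark following that lemma (the symmetric Petz recovery from $\rho_{BC}$ yields the same operator, by uniqueness) I would likewise get $\rho_{ABC}=\rho_{AB}^{1/2}\rho_B^{-1/2}\rho_{BC}\rho_B^{-1/2}\rho_{AB}^{1/2}=\Theta_{ABC}^\dagger\Theta_{ABC}$. Hence $\Theta_{ABC}\Theta_{ABC}^\dagger=\Theta_{ABC}^\dagger\Theta_{ABC}$, i.e.\ $\Theta_{ABC}$ is normal, and $\rho_{ABC}=\Theta_{ABC}\Theta_{ABC}^\dagger$ is the claimed ``square root''.

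\textbf{Converse.} For the converse I would assume $\tr_A\rho_{AB}=\tr_C\rho_{BC}=:\rho_B$ and $\Theta_{ABC}$ normal, set $\widetilde{\rho}_{ABC}:=\Theta_{ABC}\Theta_{ABC}^\dagger=\rho_{BC}^{1/2}\rho_B^{-1/2}\rho_{AB}\rho_B^{-1/2}\rho_{BC}^{1/2}$, and note it is positive semidefinite by construction and of unit trace since the Petz map $\mathcal{P}_{B\to BC}$ is trace preserving, hence a density operator. Then I would compute partial traces: pulling out the factors trivial on $A$ (resp.\ $C$) and using $\tr_A\rho_{AB}=\rho_B=\tr_C\rho_{BC}$ gives $\tr_A(\Theta_{ABC}\Theta_{ABC}^\dagger)=\rho_{BC}$ and $\tr_C(\Theta_{ABC}^\dagger\Theta_{ABC})=\rho_{AB}$; normality upgrades the latter to $\tr_C\widetilde{\rho}_{ABC}=\rho_{AB}$, so $\widetilde{\rho}_{ABC}\in\mbox{Comp}(\{\rho_{AB},\rho_{BC}\})$. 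Since the marginals of $\widetilde{\rho}_{ABC}$ are then exactly $\rho_{AB},\rho_{BC},\rho_B$, the identity $\widetilde{\rho}_{ABC}=\rho_{BC}^{1/2}\rho_B^{-1/2}(\tr_C\widetilde{\rho}_{ABC})\rho_B^{-1/2}\rho_{BC}^{1/2}$ is precisely condition~3 of Lemma~\ref{lemma: QMC - I(A:C|B)} applied to $\widetilde{\rho}_{ABC}$, so it is a QMC, which closes the equivalence.

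\textbf{Anticipated obstacle.} The main difficulty I expect is conceptual rather than computational: recognising that normality of $\Theta_{ABC}$ is exactly the bridge forcing $\Theta\Theta^\dagger$ and $\Theta^\dagger\Theta$ to coincide — without it, $\Theta\Theta^\dagger$ is still a legitimate density operator with the correct $\rho_{BC}$-marginal but generically has the \emph{wrong} $\rho_{AB}$-marginal, so it is not compatible. The remaining work is routine but error-prone: the tensor-factor bookkeeping of the implicit identities hidden in $\Theta_{ABC}$, and the care needed with supports when inverting a possibly non-full-rank $\rho_B$ (one uses $\mathrm{supp}\,\rho_{BC}\subseteq(\mathrm{supp}\,\rho_B)\otimes\mathcal{H}_C$, and its $AB$ analogue), after which everything reduces to cyclicity of the trace and the defining relation $\tr_C\rho_{BC}=\rho_B=\tr_A\rho_{AB}$.
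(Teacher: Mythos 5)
Your proof is correct and follows essentially the route the paper intends: the paper itself defers this lemma's proof to~\cite{dig:mer:mat:19}, but your argument uses exactly the machinery set up here — saturation of strong subadditivity plus strict concavity for the maximum-entropy claim, the equivalences of Lemma~\ref{lemma: QMC - I(A:C|B)} with the two-sided Petz recovery remark to get normality and the square-root property, and the partial-trace computation (where normality is precisely what restores the $\rho_{AB}$ marginal) for the converse, mirroring the normality manipulations in Appendix~\ref{ap3}. No gaps beyond the support caveats you already flag, which the paper's standing invertibility assumption disposes of.
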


We are now able to extend the above result from 3-chains to a much more general setting, namely to trees. From this point on, we make the following assumption.

\begin{assumption}\label{assump:1}\em Assume  the graph $\mathcal{G}_\mathcal{C}$ associated to a Maximum Entropy Compatible Marginal Problem $\mathcal{C}$ over $\mathbb{X}=\{X_1,\dots X_n\}$ is a tree, that is, $\mathcal{G}_{\mathcal{C}}$ is an acyclic connected graph over $\mathbb{X}$. 
\end{assumption}

By taking any node as a root of $\mathcal{G}_{\mathcal{C}}$, we construct an arborescence (or a directed tree). For the sake of readability, we introduce the following notation. We call a 
    {\em constructive ordering of $\mathcal{C}$} any total order compatible with the topological order of an arborescence of $\mathcal{G}_{\mathcal{C}}$. W.L.O.G we consider a constructive order of the form $X_1<\dots <X_n$ and denote by $\mathcal{G}_k$ the induced subgraph of $\mathcal{G}_{\mathcal{C}}$ containing all the nodes $V_k=\{X_1,\dots X_k\}$ for $k\in \{1\dots n\}$. We also denote by ${\mathcal{C}_k}$ the marginals in $\mathcal{C}$ containing nodes in $\{X_1,\dots X_k\}$ and by $Y_k$, for $k\geq 2$, the node in $V_{k-1}$ connected to $X_k$ in $\mathcal{G}_k$ (the adjacent node of $X_k$ in $\mathcal{G}_k$). Finally, we denote by $\overline{Y_k}$ the set $V_{k-1}\setminus \{Y_k\}$, which is non-empty for $k\geq 3$.

	The next result follows easily:
	
\begin{proposition}
	\label{prop:leaf tree}
\emph{	 If $\mathcal{G}_{\mathcal{C}}$ is a tree, than all the subgraphs $\mathcal{G}_k$ are trees, and moreover, $X_k$ is a leaf of $\mathcal{G}_k$.}
\end{proposition}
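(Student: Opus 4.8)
The plan is to prove both claims by induction on $k$, working from the "top" of the constructive ordering downward. Recall that a constructive ordering $X_1 < \dots < X_n$ is, by definition, compatible with the topological order of an arborescence of $\mathcal{G}_{\mathcal{C}}$ rooted at $X_1$; so for each $k \geq 2$ there is a unique parent $Y_k \in V_{k-1}$ adjacent to $X_k$ in $\mathcal{G}_{\mathcal{C}}$, and moreover any edge of $\mathcal{G}_{\mathcal{C}}$ incident to $X_k$ other than $\{Y_k, X_k\}$ would go to a node $X_m$ with $m > k$ (a descendant of $X_k$). This last observation is the crux of the "leaf" claim.

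First I would establish that each $\mathcal{G}_k$ is a tree. Since $\mathcal{G}_k$ is the induced subgraph of $\mathcal{G}_{\mathcal{C}}$ on $V_k = \{X_1, \dots, X_k\}$, it is automatically acyclic (any cycle in $\mathcal{G}_k$ would be a cycle in $\mathcal{G}_{\mathcal{C}}$). So it only remains to check connectivity, which I would do by induction: $\mathcal{G}_1$ is a single vertex, hence trivially a tree; and assuming $\mathcal{G}_{k-1}$ is connected, the graph $\mathcal{G}_k$ is obtained by adding the vertex $X_k$ together with the edge $\{Y_k, X_k\}$ (this edge lies in $\mathcal{G}_k$ because both endpoints are in $V_k$), so $\mathcal{G}_k$ is connected as well. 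An acyclic connected graph is a tree.

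Next I would argue that $X_k$ is a leaf of $\mathcal{G}_k$. The edges of $\mathcal{G}_k$ incident to $X_k$ are exactly the edges $\{X_k, X_j\}$ of $\mathcal{G}_{\mathcal{C}}$ with $j < k$. In the arborescence, $X_k$ has a unique parent $Y_k$, and all of its neighbors in the underlying tree $\mathcal{G}_{\mathcal{C}}$ are either $Y_k$ or children of $X_k$; children of $X_k$ appear strictly after $X_k$ in any topological order, hence have index $> k$. Therefore the only neighbor of $X_k$ with index $< k$ is $Y_k$, so $X_k$ has degree exactly one in $\mathcal{G}_k$, i.e. it is a leaf. (Alternatively, one can avoid explicit reference to the arborescence: in a tree on $k$ vertices there are $k-1$ edges, and since $\mathcal{G}_{k-1}$ is a tree with $k-2$ edges while $\mathcal{G}_k$ is a tree with $k-1$ edges, exactly one edge of $\mathcal{G}_k$ is incident to $X_k$.)

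The statement is essentially a bookkeeping consequence of how the constructive ordering is defined, so I do not anticipate a genuine obstacle; the one point that needs care is making sure the edge $\{Y_k, X_k\}$ is indeed present in the induced subgraph $\mathcal{G}_k$ (it is, since $Y_k \in V_{k-1} \subseteq V_k$) and that no additional back-edges to $V_{k-1}$ are created — which is precisely where one invokes that the ordering respects a topological order of the arborescence, so that $X_k$'s only ancestor-neighbor is its parent. I would present the edge-counting variant as the main line of argument since it sidesteps re-deriving properties of the arborescence, and mention the parent/child picture as the intuition.
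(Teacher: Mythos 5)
Your proof is correct. The paper actually states this proposition without any proof (``The next result follows easily''), so there is nothing to compare against line by line; your argument --- acyclicity of $\mathcal{G}_k$ because it is an induced subgraph of the acyclic $\mathcal{G}_{\mathcal{C}}$, connectivity by induction using the edge from $X_k$ to its parent (which lies in $V_{k-1}$ because the constructive order respects the topological order of the arborescence), and the leaf property either from the fact that all other neighbours of $X_k$ are its children and hence have index greater than $k$, or from the edge count $(k-1)-(k-2)=1$ --- is exactly the routine bookkeeping the authors are implicitly invoking. The only pedantic remark is that the leaf claim should be read for $k\geq 2$ (for $k=1$ the graph $\mathcal{G}_1$ has a single isolated vertex), and that your use of the notation $Y_k$ before proving uniqueness is harmless since only the existence of a neighbour in $V_{k-1}$ is needed for the connectivity step.
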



We now define a quantum Markov tree, which, as we shall see later on, generalizes the notion of Markov random field, when the underlying graph is a tree.

\begin{definition}
	\em
Let $\rho\in\liouville{\mathbb{X}}$ with $\mathbb{X}:=\{X_1,\dots X_n\}$ be an invertible density operator over and 
$\mathcal{C}$ is a (non-singular) set of two-body marginals of $\rho$. We say that $\rho$ is {\em quantum Markov tree (QMT)} or is {\em factorizable via Petz according to $\mathcal{C}$} if its square root is such that $\rho=\Theta\Theta^\dagger=\Theta^\dagger\Theta$ where $\Theta$ admits a decomposition, for some constructive order $X_1<\dots<X_n$, of the form
		\begin{equation}\label{eq:4}
		\Theta=\Delta_n\dots\Delta_3 (\rho_{X_{1}X_{2}}^{\frac{1}{2}}\otimes\id_{\overline{\{X_1 X_2\}}})
		\end{equation}
with $\Delta_k=\left(\rho_{X_kY_k}^{\frac{1}{2}}\left( \id_{X_i}\otimes\rho_{Y_k}^{-\frac{1}{2}}\right)\right)\otimes\id_{\overline{\{X_kY_k\}}}$, for all $k=3\dots n$.
\end{definition}

We note that for Eq.~\eqref{eq:4} to be well defined, it must be the case that $\mathcal{G}_\mathcal{C}$ is a tree, that is, that we are working under Assumption~\ref{assump:1}. It is relatively simple to extend the notion to acyclic graphs (which may not be connected). The following result will shed some light on the relationship between Markov random fields and QMTs.

	\begin{theorem}
		\label{thm: QMT- Petz fact - log}
		 \em Let $\rho\in\liouville{\mathbb{X}}$ be an invertible density operator over and 
$\mathcal{C}$ is a (nontrivial) set of two-body marginals s.t. $\mathcal{G}_{\mathcal{C}}$ is a spanning tree over $\mathbb{X}$, then there exists  
$\rho\in \mathrm{Comp}(\mathcal{C})$ factorizable via Petz according to $\mathcal{C}$ iff there exists  
$\rho\in \liouville{\mathbb{X}}$ such that, equivalently, one of the following two hold:
	\begin{itemize}
\item[i)] $\log\rho=\sum_{\mathcal{C}}\log\rho_{X_iX_j}-\sum_{i=1}^{n}(\mbox{deg}\left(X_i \right)-1 )\log\rho_{X_i}$;
\item[ii)] we have for some constructive ordering $X_1<\dots<X_n$:$$\forall k=2, \dots, n:\quad \rho_{k}=\pTr{\overline{V_k}}{\rho}\ \textnormal{is s.t. }\ I_{\rho_k}\left(X_k:\overline{Y_k}| Y_k \right)=0.$$
	\end{itemize}	

	\end{theorem}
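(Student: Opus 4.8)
### Proof proposal

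\textbf{Overall strategy.} The statement asserts the equivalence of three conditions: (a) existence of a $\rho \in \mathrm{Comp}(\mathcal{C})$ factorizable via Petz according to $\mathcal{C}$, (i) the log-linear (Hammersley–Clifford–type) expansion of $\log\rho$ over the tree, and (ii) the vanishing of all conditional mutual informations $I_{\rho_k}(X_k:\overline{Y_k}\mid Y_k)$ along a constructive ordering. The plan is to prove (a) $\Leftrightarrow$ (ii) by induction on the number of vertices $n$, peeling off the leaf $X_n$ of $\mathcal{G}_n$ (Proposition~\ref{prop:leaf tree}), and then to prove (i) $\Leftrightarrow$ (ii) by iterating the equivalence ``1 $\Leftrightarrow$ 4'' of Lemma~\ref{lemma: QMC - I(A:C|B)} applied to the grouped tripartition $(\overline{Y_k}, Y_k, X_k)$ and telescoping the resulting chain of logarithmic identities.

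\textbf{Step 1: (a) $\Rightarrow$ (ii), the leaf-peeling induction.} Suppose $\rho$ is factorizable via Petz, so $\rho = \Theta\Theta^\dagger$ with $\Theta = \Delta_n \cdots \Delta_3\,(\rho_{X_1X_2}^{1/2}\otimes\id)$. The key observation is that $\Delta_n$ acts as the Petz-type map $\mathcal{P}_{Y_n \to X_n Y_n}$ from the grouped system $\overline{Y_n} \otimes Y_n$ into $\overline{Y_n}\otimes Y_n \otimes X_n$, while $\Delta_{n-1}\cdots\Delta_3(\rho_{X_1X_2}^{1/2}\otimes\id)$ is (up to the trivial tensor factor $\id_{X_n}$) a square root of $\rho_{V_{n-1}} = \pTr{X_n}{\rho}$, because each $\Delta_j$ for $j\le n-1$ commutes with $\id_{X_n}$ and the whole product is, by definition, the $\Theta$ for the $(n-1)$-vertex tree $\mathcal{G}_{n-1}$. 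Hence $\rho_{V_{n-1}}$ is itself factorizable via Petz according to $\mathcal{C}_{n-1}$, and by Lemma~\ref{lemma: QMC - comp} (the ``$\Theta$ is the square root'' clause, applied to the tripartition $(\overline{Y_n}, Y_n, X_n)$) the top-level chain $\overline{Y_n} - Y_n - X_n$ is a QMC, so by Lemma~\ref{lemma: QMC - I(A:C|B)} we get $I_\rho(X_n:\overline{Y_n}\mid Y_n) = 0$. By the induction hypothesis applied to $\rho_{V_{n-1}}$ we obtain the remaining equalities $I_{\rho_k}(X_k:\overline{Y_k}\mid Y_k)=0$ for $k = 2,\dots,n-1$; note $\rho_k = \pTr{\overline{V_k}}{\rho} = \pTr{\overline{V_k}}{\rho_{V_{n-1}}}$, so the marginals are consistent. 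The base case $n=2$ is vacuous.

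\textbf{Step 2: (ii) $\Rightarrow$ (a).} Run the induction the other way: given all the $I_{\rho_k}(X_k:\overline{Y_k}\mid Y_k)=0$, the $(n-1)$-vertex marginal $\rho_{V_{n-1}}$ satisfies the hypotheses for $\mathcal{C}_{n-1}$, so by induction it equals $\Theta'\Theta'^\dagger$ with $\Theta' = \Delta_{n-1}\cdots\Delta_3(\rho_{X_1X_2}^{1/2}\otimes\id)$. The equation $I_\rho(X_n:\overline{Y_n}\mid Y_n)=0$ together with Lemma~\ref{lemma: QMC - I(A:C|B)}(3) says the Petz map $\mathcal{P}_{Y_n\to X_nY_n}$ reconstructs $\rho$ from $\rho_{V_{n-1}}$; since $\Delta_n$ is exactly the operator implementing that Petz map on square roots (this is the content of the last clause of Lemma~\ref{lemma: QMC - comp}), we get $\rho = \Delta_n\,\rho_{V_{n-1}}\otimes\id_{X_n}\,\Delta_n^\dagger = \Theta\Theta^\dagger$, which is the required factorization. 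One subtlety to check: that $\Theta$ so produced is \emph{normal} (so that $\Theta\Theta^\dagger = \Theta^\dagger\Theta$ as demanded in the QMT definition) — this follows because each $\Delta_k$ is the square-root operator of a genuine QMC by Lemma~\ref{lemma: QMC - comp}, hence normal, and one argues that the product remains normal using the tree structure (disjoint supports past the branching point).

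\textbf{Step 3: (ii) $\Leftrightarrow$ (i).} For each $k$, apply the equivalence of conditions 2 and 4 in Lemma~\ref{lemma: QMC - I(A:C|B)} to the tripartition $(A,B,C) = (\overline{Y_k}, Y_k, X_k)$ of $\rho_k$: $I_{\rho_k}(X_k:\overline{Y_k}\mid Y_k)=0$ is equivalent to
\begin{equation}
\log\rho_k \;=\; \log\rho_{V_{k-1}} + \log\rho_{X_kY_k} - \log\rho_{Y_k},
\end{equation}
where all terms are padded with identities on the complementary factors. Summing (telescoping) these identities from $k=n$ down to $k=3$, and using $\rho_{V_2} = \rho_{X_1X_2}$ at the bottom, collapses the sum to
\begin{equation}
\log\rho \;=\; \sum_{(X_iX_j)\in E}\log\rho_{X_iX_j} \;-\; \sum_{k=3}^n \log\rho_{Y_k}.
\end{equation}
It remains a bookkeeping exercise to see that the number of times each vertex marginal $\log\rho_{X_i}$ appears with a minus sign equals $\deg(X_i)-1$: each $X_i$ is the ``parent'' $Y_k$ for exactly $\deg(X_i)-1$ of its children in the arborescence (all neighbours except its own parent), and the root contributes $\deg(\text{root})$ children but is never a child, so the count works out to $\sum_i(\deg(X_i)-1)$ overall. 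This gives (i); reversing the telescoping (each partial sum recovers $\log\rho_k$ by restricting/tracing, then invoke Lemma~\ref{lemma: QMC - I(A:C|B)}(4)$\Rightarrow$(2)) gives the converse.

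\textbf{Anticipated main obstacle.} The genuinely delicate point is Step 2's normality claim and, relatedly, making the ``$\Delta_n$ implements the Petz map on square roots'' identification fully rigorous when $\overline{Y_n}$ is a composite system rather than a single tensor factor — one must verify that grouping $\overline{Y_n}$ into a single ``$A$'' system is compatible with the recursive structure of $\Theta$, i.e. that the earlier $\Delta_j$'s genuinely act only within $V_{n-1}$ and in particular only touch $\overline{Y_n}\otimes Y_n$ in a way that leaves the $X_n$ slot untouched. The commutation/disjoint-support arguments are intuitively clear from the tree structure but need to be written carefully, since a misplaced identity factor would break the telescoping in Step 3 as well. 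I would isolate this as a short lemma on the algebra of the $\Delta_k$ operators before running either induction.
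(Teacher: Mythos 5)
Your proposal follows essentially the same route as the paper's proof: induction along a constructive ordering, applying Lemma~\ref{lemma: QMC - comp} to the coarse-grained tripartition $(\overline{Y_k},Y_k,X_k)$ to obtain a QMC at each step, and Lemma~\ref{lemma: QMC - I(A:C|B)} to convert this into the vanishing conditional mutual information and the logarithmic identity, with the degree count coming from the same telescoping (the paper merely folds your Steps 1--3 into a single complete induction). The ``main obstacle'' you flag (normality and identifying $\Delta_{k+1}$ with the Petz map after grouping $\overline{Y_{k+1}}$, including that the truncated product really yields the marginal $\rho_k$) is exactly what the paper handles by computing both $\Theta_{k+1}\Theta_{k+1}^\dagger$ and $\Theta_{k+1}^\dagger\Theta_{k+1}$ as the two Petz recoveries built from $\{\rho_{X_{k+1}Y_{k+1}},\rho_k\}$ and then invoking Lemma~\ref{lemma: QMC - comp}, so your outline is consistent with, and essentially reproduces, the published argument.
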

	\vspace*{-0.2cm}

\begin{proof}
The proof follows by induction on $k$, that is, by adding one edge per node following a constructive ordering in $\mathcal{C}$. So we have that
\begin{equation}\label{eq:cconstord}
\mathcal{C}=\left\lbrace \rho_{X_{k}Y_{k}}\in\mathcal{B}\left(\hilbert{X_kY_k} \right):\  Y_k\in\{X_1,\dots, X_{k-1}\};\ k=2,\dots,n\right\rbrace,
\end{equation}

\noindent
The proof follows by complete induction on $k$.\\[2mm]

\noindent (Basis $k=3$): The first chain occurs when the third node is added, that is, when $k=3$. Assume there exists $\rho_3\in\mbox{Comp}\left( \mathcal{C}_3\right)$ that is factorizable via Petz, i.e.:
\begin{equation}
\Theta_3=\rho_{3}^\frac{1}{2}=\rho_{X_3Y_3}^\frac{1}{2}\rho_{Y_3}^{-\frac{1}{2}}\rho_{\overline{Y_3}Y_3}^\frac{1}{2}=\rho_{X_3Y_3}^\frac{1}{2}\rho_{Y_3}^{-\frac{1}{2}}\rho_{X_1X_2}^\frac{1}{2}.
\end{equation}
Observe that we can use Lemma~\ref{lemma: QMC - comp} and so, $\Theta_3$ is exactly the operator described in the lemma, and since it is a square root, it is normal. Then, by Lemma~\ref{lemma: QMC - I(A:C|B)} we have the following equivalences: $\rho_3$ is a QMC iff 
$I_{\rho_{3}}\left(X_3:\overline{Y_3}| Y_3 \right)=0$ iff
$\log\rho_{3}=\log \rho_{X_{3}Y_{3}}+\log\rho_{X_{1}X_{2}}-\log\rho_{Y_{3}}$. The other direction follows immediately.\ \\[2mm]

\noindent
(Induction step $k\longrightarrow k+1$):\ \\[2mm]
\noindent
Complete induction hypothesis: $\forall j=3,\dots,k\ \exists\,\rho_j\in\mbox{Comp}\left( \mathcal{C}_j\right) $ factorizable via Petz according with $\mathcal{C}_{j}$ iff 
there exists $\rho_j\in \liouville{V_j}$ such that, equivalently, one of the following two hold:
\begin{itemize}
	\item $\log\rho_j=\sum_{\mathcal{C}_j}\log\rho_{X_iX_t}-\sum_{i=1}^{j}(\mbox{deg}_{\mathcal{G}_j}\left(X_i \right)-1 )\log\rho_{X_i}$;
	\item $I_{\rho_j}\left(X_j: \overline{Y_j}| Y_j \right)=0$.
\end{itemize}

\noindent
Induction step: Assume there $\exists\,\rho_{k+1}\in\mbox{Comp}\left( \mathcal{C}_{k+1}\right)$ factorizable via Petz according with $\mathcal{C}_{k+1}$, then, our goal is to show that the following holds for $\rho_{k+1}$:
\begin{itemize}
\item $\log\rho_{k+1}=\sum_{\mathcal{C}_{k+1}}\log\rho_{X_{i}X_{t}}-\sum_{i=1}^{{k+1}}(\mbox{deg}_{\mathcal{G}_k}\left(X_i \right)-1 )\log\rho_{X_i}$ \textnormal{ and },
\item $I_{\rho_{k+1}}\left(X_{k+1}:\overline{Y_{k+1}}| Y_{k+1} \right)=0$.
\end{itemize}
So, assume $\exists \rho_{k+1}\in\mbox{Comp}\left( \mathcal{C}_{k+1}\right) $ factorizable via Petz, i.e.:
\begin{equation}
\Theta_{k+1}=\rho_{{k+1}}^\frac{1}{2}=\Delta_{k+1}\Delta_{k}\dots\Delta_3\rho_{X_1X_2}^\frac{1}{2}\quad \textnormal{where}\quad \Delta_{i}:=\rho_{X_iY_i}^\frac{1}{2}\rho_{Y_i}^{-\frac{1}{2}}.
\end{equation}
Then:
\begin{equation*}
\begin{split}
\rho_{k+1}&=\Theta_{{k+1}}\Theta_{{k+1}}^\dagger\\ &=\Delta_{k+1}\Delta_{k}\dots\Delta_2\rho_{X_1Y_1}\Delta_2^\dagger\dots\Delta_{k}^\dagger\Delta_{k+1}\\
&=\rho_{X_{k+1}Y_{k+1}}^\frac{1}{2}\rho_{Y_{k+1}}^{-\frac{1}{2}}\rho_k\ \rho_{Y_{k+1}}^{-\frac{1}{2}}\rho_{X_{k+1}Y_{k+1}}^\frac{1}{2}\\[1.5ex]
&=\Theta_{{k+1}}^\dagger\Theta_{{k+1}}\\ &=\rho_{X_1X_2}^\frac{1}{2}\Delta_3^\dagger\dots\Delta_{k}^\dagger\Delta_{k+1}\Delta_{k+1}\Delta_{k}\dots\Delta_3\rho_{X_1X_2}^\frac{1}{2}\\&=\rho_{k}^\frac{1}{2}\rho_{Y_{k+1}}^{-\frac{1}{2}}\rho_{X_{k+1}Y_{k+1}}\ \rho_{Y_{k+1}}^{-\frac{1}{2}}\rho_{k}^\frac{1}{2}.
\end{split}
\end{equation*}
We can use Lemma~\ref{lemma: QMC - comp} on the set $\set{\rho_{X_{k+1}Y_{k+1}}, \rho_k}$ and conclude that $\rho_{k+1}$ is a QMC in the order $X_{k+1} - \overline{Y_{k+1}} - Y_{k+1}$. So, using Lemma~\ref{lemma: QMC - I(A:C|B)}, we have
$\rho_{k+1}$ is a QMC iff  $I_{\rho_{k+1}}\left(X_{k+1}:\overline{Y_{k+1}}| Y_{k+1} \right)=0$ iff 
\begin{equation*}
\begin{split}
\log\rho_{k+1}&= \log \rho_{X_{k+1}Y_{k+1}}+\log \rho_{\overline{Y_{k+1}}Y_{k+1}}-\log(\rho_{Y_{k+1}})\\
&\stackrel{I.H.}{=}\sum_{\mathcal{C}_{k+1}}\log\rho_{X_{i}X_{t}}-\sum_{i=1}^{{k+1}}(\mbox{deg}_{\mathcal{C}_i}\left(X_i \right)-1 )\log\rho_{X_i}.
\end{split}
\end{equation*}

The other direction is  straightforward. Just notice that $\tr_{X_{k+1}}(\rho_{k+1})=\rho_{k}$, and by induction hypothesis $\rho_{k}$ is compatible with $\mathcal{C}_k$, and so is $\rho_{k+1}$. Moreover, by construction of $\rho_{k+1}$ it is also compatible with $\mathcal{C}_{k+1}$.
\end{proof}

Note that the proof of the previous theorem does not depend on which constructive ordering one chooses. This fact follows from the fact that condition $i)$ is equivalent to condition $ii)$, and condition $i)$ does not assume any ordering. 

The reader conversant in Markov random fields will identify condition ii) as the quantum analogous of the {\em Local Markov Property} of a Markov random field - any variable $X_i$ is conditionally independent of the remaining nodes given its adjacent nodes:
$$X_i\indep \overline{\{X_i\}\cup \text{Ad} X_i}\mathbin{|}\text{ad}(X_i),$$ where $\text{Ad} X_i$ is the set of adjacent nodes to $X_i$. The notion of conditional independence is equivalently replaced by the conditional mutual information being null, that is
$$I(X_i:\overline{\{X_i\}\cup \text{Ad} X_i}\mathbin{|} \text{Ad} X_i)=0,$$
which, for the case of the tree $\mathcal{G}_k$ and for the node $X_k$ we have
$$I(X_k:\overline{Y_k}\mathbin{|}Y_k)=0.$$

The following results states that how to compute the solution Maximum Entropy Compatible Marginal Problem when $\mathcal{G}_\mathcal{C}$ is a tree and there exists $\rho\in\mbox{Comp}(\mathcal{C})$ that factorizes via Petz according to $\mathcal{C}$.

\begin{corollary}\label{cor1}\em  Let
$\rho\in\liouville{}$ factorizes via Petz according to $\mathcal{C}$ and $\mathcal{G}_\mathcal{C}$ a spanning tree. Then
\begin{equation}
\rho=\underset{\rho'\in\mbox{Comp}(\mathcal{C})}{\mbox{arg}\ \mbox{max}} S(\rho').
\end{equation}
\end{corollary}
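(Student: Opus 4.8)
The goal is to show that if $\rho$ factorizes via Petz according to $\mathcal{C}$ (with $\mathcal{G}_\mathcal{C}$ a spanning tree), then $\rho$ is the maximum entropy element of $\mathrm{Comp}(\mathcal{C})$. The natural route is to combine Theorem~\ref{thm: QMT- Petz fact - log} with a standard Lagrangian/relative-entropy argument. The plan is to take an arbitrary $\sigma\in\mathrm{Comp}(\mathcal{C})$ and show $S(\sigma)\le S(\rho)$ by analyzing the non-negative quantity $S(\rho\,\|\,\sigma)=\Tr{\rho\log\rho}-\Tr{\rho\log\sigma}\ge 0$, or more precisely the "Pythagorean'' identity $S(\rho)-S(\sigma)=S(\sigma\,\|\,\rho)\ge 0$, provided we can show that $-\Tr{\sigma\log\rho}=-\Tr{\rho\log\rho}=S(\rho)$.

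\textbf{Key steps.} First I would invoke Theorem~\ref{thm: QMT- Petz fact - log}\,i), which gives the closed form
\begin{equation*}
\log\rho=\sum_{\{i,j\}\in\mathcal{C}}\log\rho_{X_iX_j}-\sum_{i=1}^{n}(\mbox{deg}(X_i)-1)\log\rho_{X_i},
\end{equation*}
where all the operators appearing on the right are (tensored with identities built from) the marginals $\rho_{X_iX_j}$ and $\rho_{X_i}$. Second, I would compute $\Tr{\sigma\log\rho}$ for any $\sigma\in\mathrm{Comp}(\mathcal{C})$: by linearity of the trace and the fact that each term $\log\rho_{X_iX_j}$ acts nontrivially only on $\mathcal{H}_{X_iX_j}$, we get $\Tr{\sigma\,(\log\rho_{X_iX_j})} = \Tr{(\pTr{\overline{X_iX_j}}{\sigma})\log\rho_{X_iX_j}} = \Tr{\rho_{X_iX_j}\log\rho_{X_iX_j}}$, and similarly for the single-site terms; crucially this depends on $\sigma$ only through its marginals in $\mathcal{C}$, which by definition agree with those of $\rho$. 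Hence $\Tr{\sigma\log\rho}=\Tr{\rho\log\rho}=-S(\rho)$, the same value for every $\sigma\in\mathrm{Comp}(\mathcal{C})$. Third, I would write
\begin{equation*}
0\le S(\sigma\,\|\,\rho)=-S(\sigma)-\Tr{\sigma\log\rho}=-S(\sigma)+S(\rho),
\end{equation*}
so $S(\rho)\ge S(\sigma)$, with equality iff $\sigma=\rho$ by strict positivity of relative entropy. Taking the supremum over $\sigma\in\mathrm{Comp}(\mathcal{C})$ — which is compact and nonempty since $\rho$ itself lies in it — yields that $\rho$ is the (unique) argmax.

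\textbf{Main obstacle.} The one genuinely delicate point is the support/invertibility bookkeeping: Theorem~\ref{thm: QMT- Petz fact - log} is stated for invertible $\rho$ and the identity $i)$ involves $\log$ of the various marginals, which are themselves invertible on the relevant supports, so the expansion of $\Tr{\sigma\log\rho}$ into marginal traces is unproblematic as long as $\mathrm{supp}(\sigma)\subseteq\mathrm{supp}(\rho)$ — this is exactly what is needed for $S(\sigma\,\|\,\rho)$ to be finite, and otherwise $S(\sigma\,\|\,\rho)=+\infty$ and the inequality is vacuous. I would also need to note that $\rho\in\mathrm{Comp}(\mathcal{C})$ itself, which is immediate from Theorem~\ref{thm: QMT- Petz fact - log} (the "other direction'' of its proof establishes $\rho_k\in\mathrm{Comp}(\mathcal{C}_k)$ at each stage). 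Everything else is a routine consequence of linearity of the trace, locality of the operators in identity~i), and Klein's inequality for the relative entropy.
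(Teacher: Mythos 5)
Your proof is correct, but it reaches the conclusion by a genuinely different route than the paper. The paper's own proof is a one-liner: it invokes condition i) of Theorem~\ref{thm: QMT- Petz fact - log} and observes that the resulting log-decomposition saturates (strong) subadditivity of the von Neumann entropy along every 3-chain $\overline{Y_k}-Y_k-X_k$ of the spanning tree; implicitly, iterating strong subadditivity along a constructive ordering bounds $S(\rho')$ for \emph{every} $\rho'\in\mathrm{Comp}(\mathcal{C})$ by $\sum_{\mathcal{C}}S(\rho_{X_iX_j})-\sum_{i}(\mathrm{deg}(X_i)-1)S(\rho_{X_i})$, a quantity depending only on the shared marginals, and the Petz-factorized $\rho$ attains this bound with equality because all the relevant conditional mutual informations vanish. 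You instead run the standard Gibbs-variational (Jaynes/Klein) argument: from the same identity i), $\mathrm{Tr}[\sigma\log\rho]$ is a sum of marginal expectation values and hence equals $-S(\rho)$ for every $\sigma\in\mathrm{Comp}(\mathcal{C})$, so $0\le S(\sigma\|\rho)=S(\rho)-S(\sigma)$. Both proofs hinge on Theorem~\ref{thm: QMT- Petz fact - log}~i); yours avoids the iterated-SSA upper bound (whose details the paper leaves implicit), is self-contained modulo Klein's inequality, and yields uniqueness of the maximizer for free, whereas the paper's phrasing ties the corollary to the saturation-of-SSA picture used throughout the rest of the paper. Your support caveat is actually unnecessary here: the definition of a QMT already assumes $\rho$ invertible, so $\mathrm{supp}(\sigma)\subseteq\mathrm{supp}(\rho)$ holds automatically and $S(\sigma\|\rho)$ is finite for every compatible $\sigma$.
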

\begin{proof}
	It follows that in the case $\rho\in\liouville{}$ factorizes via Petz according to $\mathcal{C}$ we have $\log\rho=\sum_{\mathcal{C}}\log\rho_{X_iX_j}-\sum_{i=1}^{n}(\mbox{deg}\left(X_i \right)-1 )\log\rho_{X_i}$, which saturates the subadditivity of the von Neumann entropy for every 3-chain $\overline{Y_k} - Y_k - X_k$, $k=3,\dots ,n$ in the spanning tree.
\end{proof}

We are now ready to state our main theorem, which gives a stronger characterization for the existence of a compatible density operator that is a QMT. Previously, we needed multivariate measurements to establish whether there exists a QMT in the given compatibility set. Herein, we show that it is enough to consider two-body measurements, which makes the procedure feasible in practice. The proof requires some technical lemmas that we placed in Appendix~\ref{ap3}. 

\begin{theorem}
			\label{thm: MQT efficient}\em 
		Let
		$\mathcal{C}:=\{\rho_{X_iX_j}\in\mathcal{B}\left( \hilbert{X_iX_j}\right) , i\neq j\in\{1,\dots,n\}\}$ be a set of admissible two-body marginals and such that the associate graph $\mathcal{G}_\mathcal{C}=\left(V, E\right)$ is a spanning tree.
		Then, there exists $\tilde\rho\in\liouville{}$ such that  $\tilde\rho\in\text{Comp}\left(\mathcal{C}\right)$ factorizable via Petz according to $\mathcal{C}$ iff
		\begin{equation}
		\label{eq:thm: MQT efficient}
		I_{\rho}\left(X_i:\text{ad}\,X_j|\,X_j\right)=0,\ \forall \rho_{X_{i}X_{j}}\in\mathcal{C}\ \textnormal{and}\quad \forall \text{ad}\,X_j,\, \text{ad}\,X_j\neq X_i,
		\end{equation}
		where $\text{ad}\,X_j$ indicates an adjacent node of $X_j$ in $\mathcal{G}_\mathcal{C}$, that is $\text{ad} X_i\in \text{Ad} X_i$. Moreover, $$\tilde\rho:=\mbox{arg}\underset{{}^{\rho\in\text{Comp}\left(\mathcal{C}\right)} }{\mbox{max}}S(\rho).$$
\end{theorem}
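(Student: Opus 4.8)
The plan is to reduce Theorem \ref{thm: MQT efficient} to Theorem \ref{thm: QMT- Petz fact - log} by showing that the "two-body" conditional-independence conditions in Eq.~\eqref{eq:thm: MQT efficient} are equivalent to the "multivariate" conditions ii) of Theorem \ref{thm: QMT- Petz fact - log}, namely $I_{\rho_k}(X_k:\overline{Y_k}\,|\,Y_k)=0$ for all $k$ along some (equivalently, any) constructive ordering. One direction is immediate: if $\rho$ factorizes via Petz according to $\mathcal{C}$, then by Theorem \ref{thm: QMT- Petz fact - log} the logarithmic factorization i) holds, and tracing out all nodes except $\{X_i\}\cup\{X_j\}\cup\{\text{ad}\,X_j\}$ (which induces a subtree — a path of length two — since $\mathcal{G}_\mathcal{C}$ is a tree) reduces the factorization to exactly the 3-chain $X_i - X_j - \text{ad}\,X_j$, so $I_\rho(X_i:\text{ad}\,X_j\,|\,X_j)=0$ by Lemma \ref{lemma: QMC - I(A:C|B)}. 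The substance is the converse.

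For the converse I would argue by induction on $k$ along a constructive ordering $X_1 < \dots < X_n$, exactly paralleling the induction in the proof of Theorem \ref{thm: QMT- Petz fact - log}, but now the induction hypothesis only supplies the pairwise vanishing-CMI hypotheses and I must upgrade them to the statement "$\rho_k$ factorizes via Petz according to $\mathcal{C}_k$". The key step is: given that $\rho_{k-1}$ factorizes via Petz (so its square root is the product $\Delta_{k-1}\cdots\Delta_3\,\rho_{X_1X_2}^{1/2}$), and given that all the pairwise conditions involving edges of $\mathcal{G}_k$ hold, I want to conclude $I_{\rho_k}(X_k:\overline{Y_k}\,|\,Y_k)=0$. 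Since $X_k$ is a leaf of $\mathcal{G}_k$ attached at $Y_k$ (Proposition \ref{prop:leaf tree}), the marginal $\rho_{X_k Y_k}$ together with $\rho_{k-1}$ determines a candidate $\rho_k$ via the Petz map $\mathcal{P}_{Y_k \to X_k Y_k}$ (Lemma \ref{lemma: QMC - comp}), and what must be shown is that the resulting tripartite state on $X_k - Y_k - \overline{Y_k}$ is actually a QMC, i.e. that the operator $\Theta_k = \rho_{X_k Y_k}^{1/2}\rho_{Y_k}^{-1/2}\rho_{k-1}^{1/2}$ is normal. This is where I expect the technical lemmas in Appendix~\ref{ap3} to do the work: one needs to propagate the local commutation/normality data encoded by the pairwise CMI conditions at the single node $Y_k$ and its neighbours through the product structure of $\rho_{k-1}^{1/2}$. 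Concretely, the logarithmic form of the pairwise conditions should give, for the adjacent triple $X_i - X_j - X_k$ with $X_j = Y_k$, a relation like $\log\rho_{X_kX_j}-\log\rho_{X_j}$ commuting appropriately with the factors of $\rho_{k-1}^{1/2}$ supported away from $X_k$, which is exactly what is needed to collapse $\Theta_k\Theta_k^\dagger$ and $\Theta_k^\dagger\Theta_k$ to the same operator.

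The main obstacle, then, is the inductive upgrade: pairwise vanishing CMI is a priori much weaker than the global factorization, and closing the gap requires showing that local information at a vertex — the normality of $\rho_{X_j Y_j}^{1/2}\rho_{Y_j}^{-1/2}\rho_{X_i X_j}^{1/2}$-type operators for each neighbour pair — suffices to guarantee that the whole Petz product assembles consistently. I would handle this by first proving the logarithmic consequence of each pairwise condition via Lemma \ref{lemma: QMC - I(A:C|B)}(4), then observing that because $\mathcal{G}_\mathcal{C}$ is a tree the supports of the various $\log\rho_{X_iX_j}$ and $\log\rho_{X_i}$ terms overlap only in single-vertex factors, so the pairwise logarithmic identities literally add up — along the tree — to the global identity i); finally, invoking Theorem \ref{thm: QMT- Petz fact - log} gives the existence of $\tilde\rho$ factorizable via Petz, and Corollary \ref{cor1} identifies it as the maximum-entropy element of $\text{Comp}(\mathcal{C})$. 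The appendix lemmas are presumably what justify the "literally add up" step rigorously, handling the non-commutativity of the marginals carefully and verifying that the reconstructed operator is a genuine density operator (positivity, unit trace) with the prescribed two-body marginals.
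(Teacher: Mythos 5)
Your overall skeleton (induction along a constructive ordering, reduction to condition ii) of Theorem~\ref{thm: QMT- Petz fact - log}, Corollary~\ref{cor1} for the maximum-entropy claim) matches the paper, but the converse direction -- which you correctly identify as the substance -- has a genuine gap at exactly the step you defer. Your proposed mechanism, namely that the pairwise conditions yield logarithmic identities whose supports ``overlap only in single-vertex factors'' and therefore ``literally add up'' along the tree to identity i), does not work as an argument. Each pairwise condition $I_\rho(X_i:\mathrm{ad}\,X_j|X_j)=0$, via Lemma~\ref{lemma: QMC - I(A:C|B)}(4), is an identity among logarithms of \emph{three-body marginals}; summing such identities does not produce $\log\rho$ for any global $\rho$, because matrix logarithms of non-commuting marginals do not combine additively into the logarithm of a joint state, and -- more fundamentally -- the global state is not given in this direction: it has to be \emph{constructed}, with the correct partial traces, before condition i) or ii) can even be asserted about it. The actual work in the paper is an inductive construction of $\rho_{k+1}$ from $\rho_k$ via the Petz map, with a case split on $\deg Y_{k+1}$ in $\mathcal{G}_k$: when $Y_{k+1}$ is a leaf one uses the four-party chain lemma (Lemma~\ref{lemma4}(a)) together with the relaxation Lemma~\ref{lemma:relax} to promote the old condition $I_{\rho_{m_k}}(X_{m_k}:\overline{Y_{m_k}}|Y_{m_k})=0$ to one conditioned inside the larger graph; when $\deg Y_{k+1}>1$ one views $\mathcal{G}_{k+1}$ as a star centred at $Y_{k+1}$ and invokes Corollary~\ref{corollary: (lemma8):n-parties:chain e star}, whose hypotheses are in turn verified branch-by-branch using Lemma~\ref{lemma4}(a) on the chains $X_{k+1}-Y_{k+1}-(\mathrm{ad}\,Y_{k+1})_i-\mathcal{G}_i$. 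Your sketch of ``propagating commutation/normality data through the product structure of $\rho_{k-1}^{1/2}$'' points in the right direction, but none of this is carried out, and the specific additive-logarithm shortcut you offer in its place would fail.

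A smaller issue: in the forward direction you justify the pairwise conditions by saying that tracing out the complement of $\{X_i,X_j,\mathrm{ad}\,X_j\}$ ``reduces the factorization'' to a 3-chain. Partial traces do not commute with the exponential/logarithm, so the marginal of a state satisfying i) need not satisfy the corresponding truncated log identity by fiat. The paper instead derives $I_{\rho_k}(X_k:\overline{Y_k}|Y_k)=0$ from condition ii), uses that $X_k$ is a leaf (Proposition~\ref{prop:leaf tree}), and then applies the chain rule together with nonnegativity of quantum conditional mutual information to extract $I_\rho(X_k:\mathrm{ad}\,Y_k|Y_k)=0$; your conclusion is reachable, but not by the route you state.
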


\begin{proof}
	As in the previous theorem, we assume a constructive ordering $X_1<\dots <X_n$ for $\mathcal{C}$ which will be used in the induction proof. Moreover, we can rewrite $\mathcal{C}$ using such order as in Eq.~\eqref{eq:cconstord}. Thus, the set of conditions in Eq.~\eqref{eq:thm: MQT efficient} are:
 	\begin{equation}
 I_\rho\left(X_k:\text{ad}\,Y_k|\,Y_k\right)=0,\ \forall\,\text{ad}\,Y_{k}\in V_{k-1},\ k=3,\dots,n.
 \end{equation}
$\mathbf{(\Rightarrow)}$ Using the previous theorem  we have that 
$$I_{\rho_{k}}(X_k:\overline{Y_k}|Y_k)=I_{\rho}(X_k:\overline{Y_k}|Y_k)=0.$$ Moreover, by Proposition~\ref{prop:leaf tree}, $X_k$ is leaf in $\mathcal{G}_k$ and it is only connected to $Y_k$. Finally, by applying the chain rule of  the quantum conditional mutual information (c.f. in Appendix~\ref{ap3} Eq.~\eqref{eq:QCMI-chain-rule}) and choosing the chain to start in a node adjacent to $X_k$, say $\text{ad} {X_k}$, it follows that $I_\rho\left(X_k:\text{ad}\,Y_k|\,Y_k\right)=0$.\\ [2mm] 
\noindent
$\mathbf{(\Leftarrow)}$ 
The proof follows again by complete induction in the number of nodes $k$, following the assumed constructive ordering of $\mathcal{C}$. Again, the simplest tree where the equation has any meaning requires three nodes.\\[-2mm]

\noindent (Basis $k=3$): for this case the statement of this theorem coincides with ii) of Theorem~\ref{thm: QMT- Petz fact - log}, since $\text{ad}\,Y_3=\overline{Y_3}$.\ \\[2mm]
%
%
\noindent
(Induction step $k\longrightarrow k+1$):\ \\[2mm]
\noindent
Complete induction hypothesis: We assume 
$$
I_\rho\left(X_k:\text{ad}\,Y_k|\,Y_k\right)=0,\ \forall\,\text{ad}\,Y_{k}\in V_{k-1},\ k=3,\dots,n,
$$
and so, by hypothesis, $\rho_\ell$ is factorizable via Petz according to $\mathcal{C}_\ell$, and so, by Theorem~\ref{thm: QMT- Petz fact - log}, we have 
\begin{equation}\label{eq:consIHe}
	I_{\rho_{\ell}}\left(X_\ell:\overline{Y_\ell}|Y_\ell \right)=0 \ \forall \ell=3,\dots k.
\end{equation}
 Induction step:
 We assume $I_{\rho}(X_{k+1}:\mbox{ad}Y_{k+1}|Y_{k+1})=0\ \forall\,\text{ad}\,Y_{k+1}\in V_{k}$
and our goal is to show that there exists $\rho_{k+1}$ factorizable via Petz according to $\mathcal{C}_{k+1}$ such that its partial traces hold
$$
 I_{\rho_{k+1}}\left(X_{k+1}:\overline{Y_{k+1}}|Y_{k+1} \right)=0.
$$

Observe that, by definition, $Y_{k+1}\in V_k$, let $m_{k+1}$ be some step in which $Y_{k+1}$ was connected to some node (note that it might connect to some node in many steps). Clearly, we have $3\leq m_{k+1}\leq k$. We consider two cases, depending on the degree of $Y_{k+1}$ in $\mathcal{G}_{k}$.

\ \\[2mm]
Case 1) deg$Y_{k+1}$=1, then by construction, it must be that $Y_{k+1}=X_{m_k}$ and by Eq.\eqref{eq:consIHe} we have that for $\rho_{m_k}$ its partial traces hold $$I_{\rho_{m_k}}\left(X_{m_k}:\overline{Y_{m_k}}|Y_{m_k} \right)=0.$$
	By Lemma~\ref{lemma:relax} (in Appendix~\ref{ap3}) since 
	$$V_k\setminus\{X_{m_k},Y_{m_k}\}\supseteq \overline{Y_{m_k}}=V_{m_k}\setminus\{X_{m_k},Y_{m_k}\},$$ we also have for $\rho_k$ that 
	$$I_{\rho}\left(X_{m_k}:V_k\setminus\{X_{m_k},Y_{m_k}\}|Y_{m_k} \right)=I_{\rho}\left(Y_{k+1}:V_k\backslash\{Y_{k+1},\mbox{ad}Y_{k+1}\}|\mbox{ad}Y_{k+1} \right)=0,$$
	where the last equality is obtained by noticing that $X_{m_k}=Y_{k+1}$ and $Y_{m_k}=\mbox{ad}Y_{k+1}$. Recall that we have,
	$$
	I_\rho\left(X_{k+1}:\mbox{ad}\,Y_{k+1}|Y_{k+1} \right)=0.
$$
	Moreover, the set $\{V_k\backslash\{Y_{k+1},\mbox{ad}Y_{k+1}\},\mbox{ad}\,Y_{k+1},Y_{k+1},X_{k+1}\}$, forms the chain
	\begin{equation*}
	V_k\backslash\{Y_{k+1},\mbox{ad}Y_{k+1}\}-\mbox{ad}\,Y_{k+1}-Y_{k+1}-X_{k+1}.
	\end{equation*}
	Then, by using Lemma~\ref{lemma4} (a) (in Appendix~\ref{ap3}), there exists a density operator $\rho_{k+1}\in\liouville{V_{k+1}}$ such that its partial traces fulfill
	\begin{equation*}
	\qI{X_{k+1}:V_k\backslash\{Y_{k+1}\}|Y_{k+1}}=I_{\rho_{k+1}}(X_{k+1}:\overline{Y_{k+1}}|Y_{k+1})=0.
	\end{equation*}
	Moreover, by construction of this $\rho_{k+1}$ in Lemma~\ref{lemma4} (a) (in Appendix~\ref{ap3}) we have	
	$\pTr{X_{k+1}}{\rho_{k+1}}=\rho_k$, and so $\rho_{k+1}$ is s.t.:
	\begin{equation*}
	\qI{X_{i}:V_{i}\backslash\{X_{i},Y_{i}\}|Y_{i}}=0\quad \forall i:\,2\leq i\leq k+1.
	\end{equation*}
Case 2) deg$Y_{k+1}>1$, then $\mathcal{G}_{k+1}$ can be seen as a star centred in $Y_{k+1}$, with as many branches, as many as adjacent nodes $(\mbox{ad}Y_{k+1})_i$  in $\mathcal{G}_{k+1}$, whose number is precisely the degree $r_k$ of $Y_{k+1}$ in $\mathcal{G}_k$, plus the new added node $X_{k+1}$ (c.f. Fig.~\ref{fig:main thm:proof:stara}).
\begin{figure}[htp]
			
		\begin{displaymath}
		\xymatrix{ & Y_{k+1} \ar@{-}[dl]\ar@{-}[d]\ar@{-}[dr] \ar@{-}[drr]& \\
			X_{k+1} &(\mbox{ad}\,Y_{k+1})_1 \ar@{-}[d]& \{\dots\}&(\mbox{ad}\,Y_{k+1})_{r_{k}} \ar@{-}[d]\\
		&\mathcal{G}_1& & \mathcal{G}_{r_k}}
		\end{displaymath}
			\caption{The associate graph $\mathcal{G}_{k+1}$: can be seen as a star centred in $Y_{k+1}$, where every branch is an adjacent of $Y_{k+1}$ in $V_k$, plus the link to $X_{k+1}$. $\mathcal{G}_i$ indicates the rest of the graph (a tree) that is connected to the i-th adjacent $(\mbox{ad} Y_{k+1})_i$. The number of adjacent nodes to $Y_{k+1}$ in $\mathcal{G}_{k+1}$ is $r_k+1$ by adding $X_{k+1}$ to other $r_k$ nodes in $\mathcal{G}_{k}.$  
			}\label{fig:main thm:proof:stara}
	\end{figure}
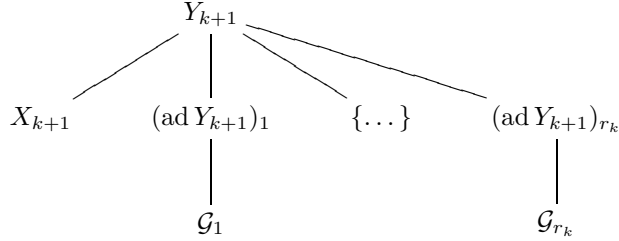
	To prove the thesis we must find $\rho_{k+1}$  such that, if
\begin{equation}
    I_\rho\left(X_{k+1}:(\mbox{ad}Y_{k+1})_i|Y_{k+1}\right)=0\  \forall i=1\dots r_k
\end{equation} then, accordingly to Theorem~\ref{thm: QMT- Petz fact - log}, it is enough to show:
\begin{equation}
\label{eq:proofthm3:k+1stepthm2}
    I_{\rho_{k+1}}\left(X_{k+1}:\overline{Y_{k+1}}|Y_{k+1}\right)=0.
\end{equation}
Moreover, by inductions hypothesis we know that 
	\begin{equation}
	\label{eq:proofthm3:kstep}
I_\rho\left(X_\ell:\mbox{ad}Y_\ell|Y_\ell\right)=0\ \forall \mbox{ad}Y_\ell\in V_{k},\ \ell=3, \dots k.
	\end{equation} 
and again, by Theorem~\ref{thm: QMT- Petz fact - log}, we must have:
\begin{equation}\label{eq:cih}
  I_{\rho_\ell}\left(X_\ell:\overline{Y_\ell}|Y_\ell\right)=0\ \forall \ell=3, \dots k. 
\end{equation}
We proceed to show Eq.~\eqref{eq:proofthm3:k+1stepthm2} by using
Corollary~\ref{corollary: (lemma8):n-parties:chain e star} (in Appendix~\ref{ap3}). Indeed, this results guarantees that the star $$\{X_{k+1}, Y_{k+1}, (\mbox{ad}Y_{k+1})_1\cup \mathcal{G}_1, \dots, (\mbox{ad}Y_{k+1})_{r_k}\cup \mathcal{G}_{r_k}\}$$ factorizes via Petz according to $$\{X_{k+1}Y_{k+1}, Y_{k+1}(\mbox{ad}{Y_{k+1}})_1\cup\mathcal{G}_1, \dots, 
Y_{k+1}(\mbox{ad}{Y_{k+1}})_{r_k}\cup\mathcal{G}_{r_k}\}$$ iff
	 \begin{align}
	 \label{eq:mainthm:cond:star}
 &\qI{X_{k+1}:\left(\mbox{ad}Y_{k+1}\right)_i\cup\mathcal{G}_i\ |\ Y_{k+1}}=0,\quad \forall i\in 1,\dots, r_k;\\[1.5ex]
 	 \label{eq:mainthm:cond:star2}
  &\qI{\left(\mbox{ad}Y_{k+1}\right)_i\cup\mathcal{G}_i:\left(\mbox{ad}Y_{k+1}\right)_j\cup\mathcal{G}_j\ |\ Y_{k+1}}=0,\quad \forall i\neq j\in 1\dots r_k.
  \end{align}
 From which, by Theorem~\ref{thm: QMT- Petz fact - log}, we get the goal, stated in Eq.~\eqref{eq:proofthm3:k+1stepthm2}.
	 
The conditions in Eq.~\eqref{eq:mainthm:cond:star2} come from the complete induction hypothesis Eq.~\eqref{eq:cih}. 
On the other hand, the conditions stated in Eq.~\eqref{eq:mainthm:cond:star}, come from observing that, for every $(\mbox{ad}Y_{k+1})_i$, there is a chain
\begin{equation}
X_{k+1} - Y_{k+1} - (\mbox{ad}Y_{k+1})_i - \mathcal{G}_i,
\end{equation}
for which we already have the conditions:
\begin{align}
\qI{X_{k+1}: (\mbox{ad}Y_{k+1})_i | Y_{k+1}}=0,\label{eq:chain1}\\[1ex]
\qI{Y_{k+1}: \mathcal{G}_i | (\mbox{ad}Y_{k+1})_i}=0.\label{eq:chain2}
\end{align}
Eq.~\eqref{eq:chain1} follows from induction hypothesis Eq.~\eqref{eq:proofthm3:kstep}. Moreover,  Eq.~\eqref{eq:chain2} follows from the fact that, by hypothesis, $\rho_k$ is a QMT, and so  $$Y_{k+1} - (\mbox{ad}Y_{k+1})_i -\mathcal{G}_i$$ is a quantum Markov chain. So, by using Lemma~\ref{lemma4} (a) (in Appendix~\ref{ap3}), we get the desired condition $$\qI{X_{k+1}:\left(\mbox{ad}Y_{k+1}\right)_i\cup\mathcal{G}_i\ |\ Y_{k+1}}=0.$$
Since the argument holds for all the adjacent nodes $\left(\mbox{ad}Y_{k+1}\right)_i$, we derive the whole set of conditions \eqref{eq:mainthm:cond:star}, which ends the proof for case 2)\ \\[2mm]

Finally, the fact that the obtained state maximizes the von Neumann entropy with the provided marginals, comes for free from Corollary~\ref{cor1}.
\end{proof}

We are now able to show that for QMTs, the MECM problem is in P and that there is a polynomial quantum circuit that constructs the Maximum entropy compatible density operator. Moreover, we also show that it is possible to extend the Chow-Liu algorithm efficiently for quantum Markov networks. To derive these results, we need first to compute the number of 3-chains in a graph with $n$ nodes.

\begin{lemma}\label{lemma:nchain}\em 
The number of 3-chains $\#c$ in a tree with $n\geq 2$ vertices is $n-2\leq\#c\leq \frac{1}{2}(n-1)(n-2)$. Moreover, the number of 3-chains for any graph is upper-bounded by $\frac{1}{2}n(n-1)(n-2)$, and it reaches the bound for a complete graph of $n$ nodes.   
\end{lemma}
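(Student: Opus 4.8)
The plan is to recast the counting in terms of vertex degrees. A 3-chain $i-j-k$ is nothing but a choice of a centre vertex $j$ together with an unordered pair $\{i,k\}$ of two distinct neighbours of $j$: the pair $\{i,k\}$ together with $j$ recovers the chain, and $i-j-k$ and $k-j-i$ denote the same chain. Hence
\begin{equation}\label{eq:chaincount}
\#c=\sum_{v\in V}\binom{\deg(v)}{2},
\end{equation}
and every estimate will follow from \eqref{eq:chaincount} combined with the handshake identity $\sum_{v\in V}\deg(v)=2\size{E}$, which for a tree on $n$ vertices reads $\sum_{v}\deg(v)=2(n-1)$.

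For the lower bound on a tree I would use the elementary inequality $\binom{d}{2}\ge d-1$, valid for every integer $d\ge 1$ since it is equivalent to $(d-1)(d-2)\ge 0$, with equality exactly when $d\in\{1,2\}$. Summing over all vertices and invoking the handshake identity,
$$\#c=\sum_{v}\binom{\deg(v)}{2}\ \ge\ \sum_{v}\bigl(\deg(v)-1\bigr)\ =\ 2(n-1)-n\ =\ n-2,$$
with the path $P_n$ (all internal degrees equal to $2$) attaining equality, and for $n=2$ both sides being $0$. For the upper bound on a tree I would argue by counting pairs of edges: two distinct edges of a simple graph meet in at most one vertex, and if they meet in a vertex they form a 3-chain; this gives a bijection between 3-chains and unordered pairs of distinct edges sharing a vertex. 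Since a tree on $n$ vertices has exactly $n-1$ edges, $\#c$ is at most the total number of unordered pairs of its edges, i.e. $\#c\le\binom{n-1}{2}=\tfrac12(n-1)(n-2)$; the star $K_{1,n-1}$, whose $n-1$ edges pairwise share the centre, attains equality.

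For an arbitrary graph on $n$ vertices, every vertex satisfies $\deg(v)\le n-1$, so \eqref{eq:chaincount} immediately gives $\#c\le n\binom{n-1}{2}=\tfrac12 n(n-1)(n-2)$; and in $K_n$ every vertex has degree exactly $n-1$, so the bound is met with equality (equivalently $\#c=3\binom{n}{3}$ for the complete graph). There is no substantive obstacle here: the only points that need a little care are the degenerate small cases $n=2$ and $n=3$ (where the two tree bounds coincide), and making the edge-pair correspondence used in the tree upper bound genuinely injective, which is precisely the fact that two distinct edges of a simple graph cannot share two vertices.
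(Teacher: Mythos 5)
Your proof is correct, and it is in fact tighter than the paper's own argument. Both start from the same counting principle (a 3-chain is a centre vertex together with an unordered pair of its neighbours, so $\#c=\sum_v\binom{\deg(v)}{2}$), but the paper then only \emph{evaluates} this quantity for two specific trees --- the path, giving $n-2$, and the star, giving $\tfrac12(n-1)(n-2)$ --- and implicitly takes for granted that these are the extremal trees; the general-graph bound is likewise obtained by multiplying the star count by $n$. You instead prove the bounds for an \emph{arbitrary} tree: the lower bound via $\binom{d}{2}\ge d-1$ together with the handshake identity $\sum_v\deg(v)=2(n-1)$, and the upper bound by injecting 3-chains into unordered pairs of the $n-1$ edges (using that two distinct edges of a simple graph share at most one vertex), with the path and star only serving as witnesses of tightness; the general bound follows from $\deg(v)\le n-1$ applied vertexwise. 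So your route supplies exactly the extremality argument the paper glosses over, at the cost of only a couple of elementary inequalities, and also handles the degenerate cases $n=2,3$ explicitly. No gaps.
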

\begin{proof}
We make the proof by counting, for each node $X_i$, how many 3-chains $X_j-X_i-X_k$ can be formed, and summing all of them afterwards.

For a spanning tree, the lower bound is the number of 3-chains in a $n$-chain (all nodes have degree 2, with exception of the root and the leaf). 
In this case, every node is the central node of only one 3-chain, aside for the root and the leaf; thus, $\#c=n-2$. 
The upper bound is derived by counting the number of 3-chains in a $n$-star (there is a root and all the remaining nodes are leaves).
The root, say $Y$, has $\mbox{deg}Y=n-1$, and the remaining nodes (enumerate them as $X_1,\dots X_{n-1}$), have degree one. In this case, consider the first edge $X_1Y$, it can be linked through Y to more n-2 nodes, which also gives the number of 3-chains it can be part of. 
The next edge $X_2Y$, it can be connected through $Y$ to $n-3$ nodes to form $n-3$ different chains  (the chain $X_2-Y-X_1$ is the same as $X_1-Y-X_2$, which has been already counted for). It is now clear that the number of 3-chains in an $n$-star is
\begin{equation}
\label{eq:prop3:3chain:nhub}
\#c_i=\sum_{k=2}^n(n-k)=\sum_{k=1}^{n-2} k=\frac{1}{2}(n-1)(n-2).
\end{equation}

The number of chains in a $n$-star is also the number of 3-chains that a node contributes in a complete graph. Then, to obtain the number of 3-chains in a complete graph it is enough to multiply Eq.~\eqref{eq:prop3:3chain:nhub} by the number of nodes, and so  $\#c=n\#c_i=\frac{1}{2}n(n-1)(n-2)$.

Another way of obtaining this value consists in using well-known formulas from combinatorial calculus, and observing that the number of 3-chains in a complete graph of $n$ vertices is the number of \emph{simple dispositions}, i.e. the number of ordered sequences of length 3 without repetitions in a set of $n$ elements, divided by two. The factor 2 comes from the symmetry of the 3-chains, that is $A-B-C$ is the same 3-chain as $C-B-A$. Then, once again, $$\#c=\frac{1}{2}\frac{n!}{(n-3)!}=\frac{1}{2}n(n-1)(n-2)$$
\end{proof}

We are now able to establish a sufficient condition for the MECMP problem to be in $P$.

	\begin{theorem}\em
		\label{thm: comp with QMN}
		 The Maximum Entropy Compatible Marginal Problem for $\mathcal{C}$ is in P when 
		 \begin{enumerate}
		 	\item $\mathcal{G}_C$ is a spanning tree
		 	\item $\rho_{ijk}$ is a QMC constructed in polynomial-time (with respect with the number of nodes $n$) where $\rho_{i,j},\rho_{j,k}\in \mathcal{C}$ and $i<j<k$ for some given constructive order of $\mathcal{G}_C$.
		 \end{enumerate}		  
		 Moreover, there exists a quantum polynomial circuit that constructs the maximum entropy compatible tree.
	\end{theorem}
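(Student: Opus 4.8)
The plan is to reduce the decision problem MECMP, under hypotheses (1)--(2), to the polynomial-time computation of the von Neumann entropy of a single Petz-factorizable state $\tilde\rho$, and then to build $\tilde\rho$ from its Petz factorization. Since $\mathcal{G}_\mathcal{C}$ is a spanning tree and (as argued below) the hypotheses of Theorem~\ref{thm: MQT efficient} hold, that theorem produces a $\tilde\rho\in\mbox{Comp}(\mathcal{C})$ factorizable via Petz according to $\mathcal{C}$ with $\tilde\rho=\mbox{arg}\,\mbox{max}_{\rho\in\mbox{Comp}(\mathcal{C})}S(\rho)$. Hence an instance $(\mathcal{Q}(\mathcal{C}),k)$ of MECMP is a YES-instance precisely when $S(\tilde\rho)\ge k$, so it suffices to compute $S(\tilde\rho)$ in time $\mbox{poly}(n)$ and, separately, to exhibit a $\mbox{poly}(n)$-size quantum circuit producing $\tilde\rho$.

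First I would verify that (1)--(2) imply the conditions in Eq.~\eqref{eq:thm: MQT efficient}. Fix a constructive ordering of $\mathcal{G}_\mathcal{C}$. Each path of length two in the tree is a cherry $X_i-X_j-X_k$ with $\rho_{X_iX_j},\rho_{X_jX_k}\in\mathcal{C}$; by hypothesis (2) its three-body marginal $\rho_{ijk}$ is a QMC on that chain, so by Lemma~\ref{lemma: QMC - I(A:C|B)} we have $I_{\rho_{ijk}}(X_i:X_k\mid X_j)=0$. Because conditional mutual information is symmetric in its first two arguments, the collection of all these identities is exactly $I_\rho(X_i:\mbox{ad}\,X_j\mid X_j)=0$ over every edge $X_iX_j$ and every $\mbox{ad}\,X_j\neq X_i$, i.e.\ Eq.~\eqref{eq:thm: MQT efficient}; and by Lemma~\ref{lemma:nchain} there are at most $\tfrac12(n-1)(n-2)$ cherries, so only polynomially many conditions are at stake. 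Theorem~\ref{thm: MQT efficient} then gives the claimed $\tilde\rho$ together with the maximum-entropy identification.

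Next I would compute $S(\tilde\rho)$ without ever forming the (possibly exponential-dimensional) operator $\tilde\rho$. By Theorem~\ref{thm: QMT- Petz fact - log}(i), $\log\tilde\rho=\sum_{\mathcal{C}}\log\rho_{X_iX_j}-\sum_{i=1}^{n}(\mbox{deg}(X_i)-1)\log\rho_{X_i}$. Multiplying by $-\tilde\rho$ and taking the trace, and using that $\tilde\rho\in\mbox{Comp}(\mathcal{C})$ together with the fact that $\log\rho_{X_iX_j}$ (resp.\ $\log\rho_{X_i}$) acts, up to identities, only on $X_iX_j$ (resp.\ $X_i$), yields
\begin{equation}
S(\tilde\rho)=\sum_{\mathcal{C}}S(\rho_{X_iX_j})-\sum_{i=1}^{n}\big(\mbox{deg}(X_i)-1\big)S(\rho_{X_i}).
\end{equation}
The right-hand side involves only the $n-1$ edge marginals and the $n$ single-site marginals. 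By hypothesis (2), through Definition~\ref{def:PTQMC}, each of these operators lives on a Hilbert space of dimension independent of $n$ and can be written in matrix form in time $\mbox{poly}(n)$ by multiplying out the gates of the corresponding circuit in $\mathcal{Q}(\mathcal{C})$ and tracing out ancillae; diagonalizing these fixed-size matrices and evaluating the associated entropies then costs $\mbox{poly}(n)$ to any prescribed precision. Summing the $O(n)$ terms gives $S(\tilde\rho)$, and comparing it with $k$ decides the instance, so MECMP is in P under (1)--(2).

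Finally, for the quantum-circuit claim I would realize $\tilde\rho=\Theta\Theta^\dagger$ directly from the Petz factorization~\eqref{eq:4}: start from the $\mbox{poly}(n)$-size circuit preparing $\rho_{X_1X_2}$ and then, processing nodes in the constructive order, apply for each $k=3,\dots,n$ the channel $\sigma\mapsto\Delta_k\sigma\Delta_k^\dagger$, i.e.\ the Petz recovery map $\mathcal{P}_{Y_k\to X_kY_k}$ that adjoins the new leaf $X_k$ to its parent $Y_k$ (tensored with the identity on the nodes already present). Each such map is CPTP by Lemma~\ref{lemma: QMC - I(A:C|B)}(3), depends only on the fixed-dimension operators $\rho_{X_kY_k}$ and $\rho_{Y_k}$, and is exactly the recovery map of the cherry having $X_k$ as its last node, so hypothesis (2) through Definition~\ref{def:PTQMC} supplies a $\mbox{poly}(n)$-size circuit for it with an ancilla of $n$-independent dimension (alternatively one dilates the channel after computing $\rho_{X_kY_k}^{1/2}$ and $\rho_{Y_k}^{-1/2}$). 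Concatenating the initial circuit with the $n-2$ recovery circuits gives a $\mbox{poly}(n)$-size quantum circuit preparing $\tilde\rho$. The step I expect to be the main obstacle is matching the conditions actually furnished by hypothesis (2) --- QMC-ness of the cherries along a constructive order --- to the full list Eq.~\eqref{eq:thm: MQT efficient} required by Theorem~\ref{thm: MQT efficient}, which needs the symmetry and, when a cherry sits at an internal node, the chain rule of the conditional mutual information; once that matching is in place, the remainder is routine simulation of constant-size circuits and composition of constant-size CPTP maps.
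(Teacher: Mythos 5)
Your proposal is correct and follows essentially the same route as the paper: invoke Theorem~\ref{thm: MQT efficient} (after checking, via Lemma~\ref{lemma: QMC - I(A:C|B)}, that the QMC hypothesis on the cherries yields the vanishing conditional mutual informations of Eq.~\eqref{eq:thm: MQT efficient}) to identify the maximum-entropy state as the Petz-factorizable QMT, evaluate its entropy through the marginal-entropy formula of Theorem~\ref{thm: QMT- Petz fact - log}(i) on fixed-dimension marginals obtainable in $\mbox{poly}(n)$ time from the polynomial-time QMC circuits of Definition~\ref{def:PTQMC}, and prepare $\tilde\rho$ by the circuit that builds the first chain and then applies the recovery maps in constructive order, exactly as in Figure~\ref{fig2}. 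Your extra care in matching hypothesis (2) to the full condition list of Theorem~\ref{thm: MQT efficient} is a detail the paper leaves implicit, but it does not constitute a different method.
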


\begin{proof}
	From Theorem~\ref{thm: MQT efficient}, the density operator that maximizes the Entropy is a QMT. Moreover, we can compute its entropy in polynomial time, by considering the  constructive ordering of point 2. Indeed, from Theorem~\ref{thm: QMT- Petz fact - log}~(i), when $\rho$ is a QMT we have that
	$$S(\rho)=\sum_{\mathcal{C}}S(\rho_{X_iX_j})-\sum_{i=1}^{n}(\mbox{deg}\left(X_i \right)-1 )S(\rho_{X_i}).$$
	
Moreover, since each $\rho_{X_iX_j}$ belongs to a QMC constructed in polynomial time, we can compute a matrix representation of the density operator of the QMC in polynomial-time as well. Recall in Definition~\ref{def:PTQMC}, that the Hilbert space of a polynomial-time QMC is fixed, and does not depend on the complexity parameter, that is, as usual, the dimension of the Hilbert space associated to each node is fixed (regarding) the complexity parameter $n$ (the number of nodes).

Moreover, given the constructive order, we are also able to make a quantum circuit to construct the maximum entropy compatible tree by constructing the first Markov chain $\rho_{X_1,X_2,X_3}$ and then applying the circuits for the recovery maps $\mathcal{R}$ of the remaining nodes.\\
\begin{figure}[htp]
\begin{center}
\hspace{1.3cm}\Qcircuit @C=1em @R=.7em @!R{
\lstick{\id_{ X_1}/d_{X_1} }& \gate{\Phi_{X_1}} & \multigate{1}{\mathcal{R}_{X_1X_2}} & \multigate{2}{\mathcal{I}_{\overline{Y_3}}\otimes\mathcal{R}_{Y_3X_3}} & \qw \\
\lstick{\id_{X_2}/d_{X_2} }& \qw & \ghost{\mathcal{R}_{X_1X_2}} & \ghost{\mathcal{I}_{\overline{Y_3}}\otimes\mathcal{R}_{Y_3X_3}} & \qw\\
\lstick{\id_{X_3}/d_{X_3} }& \qw & \qw & \ghost{\mathcal{I}_{\overline{Y_3}}\otimes\mathcal{R}_{Y_3X_3}} & \qw \\
\lstick{\cdots}& \qw & \qw & \qw & \qw \\
\lstick{\id_{X_n}/d_{X_n}} & \qw & \qw & \qw & \qw
} 
\Qcircuit @C=1em @R=2.25em {\rstick{\cdots}\\
\rstick{\cdots}\\
\rstick{\cdots}\\
\rstick{\cdots}\\
\rstick{\cdots}
}
\hspace{0.7cm}
\Qcircuit @C=1em @R=1.35em {
& \multigate{4}{\mathcal{I}_{\overline{Y_n}}\otimes\mathcal{R}_{Y_nX_n}} & \qw \\
& \ghost{\mathcal{\mathcal{I}_{\overline{Y_n}}\otimes\mathcal{R}_{Y_nX_n}}} & \qw \\
& \ghost{\mathcal{\mathcal{I}_{\overline{Y_n}}\otimes\mathcal{R}_{Y_nX_n}}} & \qw \\
& \ghost{\mathcal{\mathcal{I}_{\overline{Y_n}}\otimes\mathcal{R}_{Y_nX_n}}} & \qw \\
& \ghost{\mathcal{\mathcal{I}_{\overline{Y_n}}\otimes\mathcal{R}_{Y_nX_n}}} & \qw 
}
\end{center}
\caption{Quantum circuit that outputs the optimal QMT. Note that the $k$-th block $\mathcal{I}_{\overline{Y_k}}\otimes\mathcal{R}_{Y_kX_k}$ operates only over two components: $X_k$ and $Y_k$, for all $k=3\dots n$.}\label{fig2}
\end{figure}
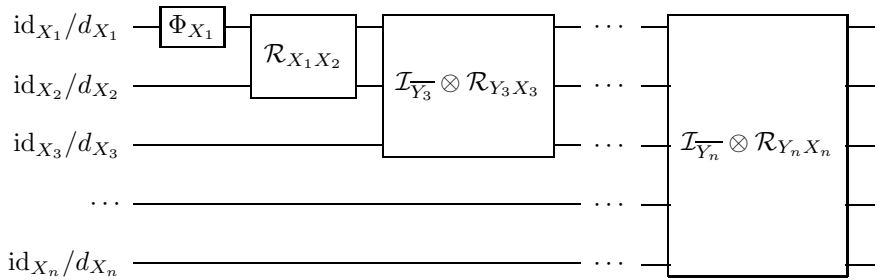
\ \\[-8mm]
\end{proof}
Two-body marginals for which all 3-chains form a QMC have another interesting property. It is possible to find the QMT closest, with regards to the quantum relative entropy (the generalization of the Kullback-Leibler divergence~\cite{cov:tho:12}), to the unknown  density operator. Note that, the number of spanning trees over a complete graph is given by Cayley’s formula~\cite{cay:89}, $n^{n-2}$
which is exponential on $n$. To extract the closest QMT, we need to construct a weighted graph (where the nodes are each component of the density operator), and the edges are weighted with the von Neumann mutual information between every two components. The optimal spanning tree (which can be found using the polynomial-time algorithm by Prim or Kruskal) gives the support to a QMT. Moreover, this QMT will be the one closest to the unknown state. When the density operators are diagonal, that is, describe a probability distribution; this algorithm coincides with the well-established Chow-Liu algorithm.

\begin{theorem}
\label{thm:chowLiu}\emph{
If the set of two body marginals $\mathcal{C}$ is s.t. every 3-chain is compatible with a QMC then  every subtree is a QMT. A QMT that minimizes the quantum relative entropy with respect to the (unknown) given quantum state, is the maximum weighted tree $\mathcal{G}_{T_C}$ where the weight of each edge is given by the quantum mutual information. Such tree can be obtained efficiently using the (generalized) Chow-Liu learning algorithm~\cite{cho:liu:68}.
}
\end{theorem}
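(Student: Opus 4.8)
The plan is to run the classical Chow--Liu argument in the von Neumann setting, using Theorem~\ref{thm: QMT- Petz fact - log}(i) as the substitute for the classical tree factorization. Write $\sigma$ for the unknown global state, so that $\mathcal{C}=\{\sigma_{X_iX_j}\}_{i\neq j}$ is the full family of pairwise marginals and the associated graph is the complete graph on $\mathbb{X}$.

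First I would prove the opening claim that every subtree is a QMT. Fix any tree $T$ on (a subset of) $\mathbb{X}$. The conditions that Theorem~\ref{thm: MQT efficient} requires for a compatible QMT to exist are of the form $I_\sigma(X_i:X_\ell\,|\,X_j)=0$ with $(X_i,X_j)\in E(T)$ and $X_\ell$ a further $T$-neighbour of $X_j$; each such condition is, by Lemma~\ref{lemma: QMC - comp} together with Lemma~\ref{lemma: QMC - I(A:C|B)}, exactly the statement that the $3$-chain $X_i-X_j-X_\ell$ is compatible with a QMC, which is the hypothesis. Hence Theorem~\ref{thm: MQT efficient} yields a QMT $\rho_T\in\mathrm{Comp}(\mathcal{C}|_T)$, and by Corollary~\ref{cor1} this $\rho_T$ is the unique maximum-entropy state compatible with the marginals on $T$; in particular its single-site marginals and its marginals on the edges of $T$ coincide with those of $\sigma$.

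The heart of the proof is an identity for $D(\sigma\,\|\,\rho_T)=\tr[\sigma\log\sigma]-\tr[\sigma\log\rho_T]$. Plugging the log-factorization of Theorem~\ref{thm: QMT- Petz fact - log}(i) for $\rho_T$ into $\tr[\sigma\log\rho_T]$, and using the elementary facts $\tr[\sigma\log\sigma_{X_iX_j}]=-S(\sigma_{X_iX_j})$, $\tr[\sigma\log\sigma_{X_i}]=-S(\sigma_{X_i})$ together with the degree handshake $\sum_i(\mbox{deg}_T(X_i)-1)S(\sigma_{X_i})=\sum_{(X_i,X_j)\in E(T)}\big(S(\sigma_{X_i})+S(\sigma_{X_j})\big)-\sum_i S(\sigma_{X_i})$, one obtains
$$D(\sigma\,\|\,\rho_T)=-S(\sigma)+\sum_{i=1}^n S(\sigma_{X_i})-\sum_{(X_i,X_j)\in E(T)}I_\sigma(X_i:X_j).$$
Only the last term depends on $T$; since quantum mutual information is nonnegative, minimizing $D(\sigma\,\|\,\rho_T)$ over trees is the same as finding a maximum-weight spanning tree $\mathcal{G}_{T_\mathcal{C}}$ with edge weights $I_\sigma(X_i:X_j)$ — which Prim's or Kruskal's algorithm does in polynomial time once the $\binom{n}{2}$ mutual informations have been computed. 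Specialising to marginals that are all diagonal in one common basis turns $I_\sigma$ into classical mutual information and $\rho_T$ into a classical tree distribution, recovering the original Chow--Liu theorem.

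The routine computations are the handshake identity and the trace identities above; the genuinely delicate points are (a) making ``every subtree is a QMT'' airtight against the precise reading of ``$3$-chain compatible with a QMC'' — this is where I would be most careful, and it is the main obstacle — and (b), if one wants the stronger phrasing that $\rho_{T_\mathcal{C}}$ is closest among \emph{all} QMTs rather than only those compatible with $\mathcal{C}$, a short first-order variational argument showing that for each fixed structure $T$ the QMT carrying $\sigma$'s edge marginals minimizes $D(\sigma\,\|\,\cdot)$; this uses the derivative identity $\tr[\tau\, D\log\tau[A]]=\tr A$ and operator convexity of relative entropy.
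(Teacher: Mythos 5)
Your proposal is correct and follows essentially the same route as the paper: it uses Theorem~\ref{thm: MQT efficient} to get a QMT on each subtree, then plugs the log-factorization of Theorem~\ref{thm: QMT- Petz fact - log}(i) into the relative entropy to obtain $S(\sigma\|\rho_T)=-S(\sigma)+\sum_i S(\sigma_{X_i})-\sum_{(X_i,X_j)\in E(T)}I_\sigma(X_i\!:\!X_j)$, which is exactly the paper's Eq.~\eqref{eq:minRelativeEntropy:deltaS} once $\Delta S(\widetilde{\rho}_{\mathcal{C}_T})=0$, and concludes with a maximum-weight spanning tree computed by Prim/Kruskal. The only cosmetic difference is that the paper isolates the correction term $\Delta S$ and shows it vanishes precisely for QMTs, whereas you derive the identity directly under the hypothesis; the delicate reading of ``3-chain compatible with a QMC'' that you flag is handled no more carefully in the paper's own proof.
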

\begin{proof}
The proof consists in applying Theorem~\ref{thm: MQT efficient} to the main result of section 6 in the paper \cite{dig:mer:mat:19}, that we are going to briefly recall.\\

Let $\rho_\mathbb{X}$ be the unknown quantum state that describes best the quantum system for which the bipartite marginals are known (for instance, they have been measured and collected in $\mathcal{C}$). Moreover, let $\widetilde{\rho}_{\mathcal{C}_T}$ be the maximum von Neumann entropy d.o. compatible with a subset ${\mathcal{C}}_T\subseteq\mathcal{C}$ s.t. $\mathcal{G}_{\mathcal{C}_T}$ is a tree. We refer to $\widetilde{\rho}_{\mathcal{C}_T}$ as quantum tree.
Their relative entropy can be written as
\begin{equation}
\label{eq:minRelativeEntropy}
S\left(\rho_{\mathbb{X}}||\widetilde{\rho}_{\mathcal{C}_T}\right)= -S\left(\rho_{\mathbb{X}}\right) +\tr\left(\rho_{\mathbb{X}}\log\widetilde{\rho}_{\mathcal{C}_{{T}}} \right)=S\left(\widetilde{\rho}_{\mathcal{C}_{{T}}}\right)-S\left(\rho_{\mathbb{X}}\right),
\end{equation}
where we have used condition (i) in Theorem~\ref{thm: QMT- Petz fact - log} on $\log\widetilde\rho_{\mathcal{C}_T}$.

So the optimal maximum entropy estimator $\widetilde\rho$ is computed over the subtree with minimal von Neumann entropy:
\begin{equation}
\label{eq:65}
\widetilde{\rho}=\underset{\mathcal{C_T}\subseteq\mathcal{C}}{\mbox{\textnormal{argmin}}}\,\underset{\rho\in{\small\mbox{Comp}(\mathcal{C}_T)}}{\mbox{\textnormal{max}}}\,S\left(\rho\right).
\end{equation}
Since, the number of possible spanning trees is $n^{n-2}$~\cite{cay:89},  we can not choose the best fitting tree efficiently, in general. However, in the case at hand, we can manipulate Eq.~\eqref{eq:minRelativeEntropy} and derive subcases for which the computation can be performed efficiently.

Observe that
$$\sum_{\mathcal{C}_T}S\left( \rho_{X_iX_j}\right)-\sum_{i=1}^{n}\left( \textnormal{deg}X_i-1\right) S\left( \rho_{X_i}\right)=-\sum_{\mathcal{C}_T}I_\rho(X_i,X_j)+\sum_{i=1}^{n}S\left( \rho_{X_i}\right),
$$
and set
\begin{equation}
\label{eq:deltaS}
\Delta S(\widetilde{\rho}_{\mathcal{C}_{\mathcal{T}}}):=\sum_{\mathcal{C}_T}S\left( \rho_{X_iX_j}\right)-\sum_{i=1}^{n}\left( \textnormal{deg}X_i-1\right) S\left( \rho_{X_i}\right) -S(\widetilde{\rho}_{\mathcal{C}_{\mathcal{T}}}),
\end{equation}
which is always non-nengative. By adding and subtracting the term 
$$\sum_{\mathcal{C}_T}S\left( \rho_{X_iX_j}\right)+\sum_{i=1}^{n}\left( \textnormal{deg}X_i-1\right) S\left( \rho_{X_i}\right)$$ to Eq.~\eqref{eq:minRelativeEntropy},
it assumes the form
\begin{equation}
\label{eq:minRelativeEntropy:deltaS}
S\left(\rho_{\mathbb{X}}||\widetilde{\rho}_{\mathcal{C}_{{T}}}\right)=-\sum_{\mathcal{C}_T}I_\rho(X_i,X_j)-\Delta S(\widetilde{\rho}_{\mathcal{C}_{\mathcal{T}}})+\sum_{i=1}^{n}S\left( \rho_{X_i}\right) -S\left(\rho_{\mathbb{X}} \right).
\end{equation}

By using condition (i) of Theorem~\ref{thm: QMT- Petz fact - log} we can replace the log term of $S\left(\widetilde\rho_{\mathcal{C}_T}\right)$ of Eq.~\eqref{eq:deltaS} and thus, for a QMT, $\Delta S (\widetilde{\rho}_{\mathcal{C}})=0$. Moreover, we also have the converse, that is, $\Delta S (\widetilde{\rho}_{\mathcal{C}})=0$ holds only for QMTs. The latter result can be derived by observing that 
\begin{equation}
\Delta S(\widetilde{\rho}_{\mathcal{C}_{\mathcal{T}}}) =\sum_{i=1}^{n-2}I_{\rho}(X_{l_i}: V_i\backslash\{X_{l_i},\mbox{ad} X_{l_i}\}|\mbox{ad} X_{l_i}),
\end{equation} 
which, by positivity of quantum conditional mutual information, must be 0 iff all the terms in the sum are 0. Then, by Theorem~\ref{thm: MQT efficient}, we have $\Delta S (\widetilde{\rho}_{\mathcal{C}})=0$ iff all the 3-chains in $\mathcal{C}_T$ are QMC.

So, when the provided set of marginals $\mathcal{C}$ is s.t. every 3-chain is compatible with a QMC, $\Delta S (\widetilde{\rho}_{\mathcal{C}})=0$ in Eq.~\eqref{eq:minRelativeEntropy:deltaS}. Therefore, the best tree is the one that maximizes the term
\begin{equation}
\sum_{\mathcal{C}_T}I_\rho(X_i,X_j),
\end{equation}
i.e. the maximum weighted spanning sub-tree, where the weights are given by the  mutual information between every couple of linked nodes.

This problem is efficiently solved for classical graphs by the Chow-Liu algorithm, which  we have here generalized to quantum states, be replacing the Shannon entropy with the von Neumann entropy.
\end{proof}
The general case of efficiently finding the optimal spanning tree which gives the support to a quantum tree remains open. Minimizing the general form of Eq.~\eqref{eq:minRelativeEntropy:deltaS} would require the maximization of the quantity $\sum_{\mathcal{C}_T}I_\rho(X_i,X_j)+\Delta S(\widetilde{\rho}_{\mathcal{C}_{\mathcal{T}}})$ with $S(\widetilde{\rho}_{\mathcal{C}_{\mathcal{T}}})\!>\!0$. Already in the tripartite scenario, it is evident that the maximum weighted tree is not necessarily solution to the problem.\\

For sake of completeness, we present the quantum version of the Chow-Liu algorithm in pseudo-code below.
\begin{algorithm} Quantum Chow-Liu algorithm\\

\noindent
\textbf{Input:} $\mathcal{C}$ two-body marginals over $\mathbb{X}=\{X_1,\dots X_n\}$ of $\rho$, such that all 3-chains $X_i-X_j-X_k$, with $i,j,k\in $, form a polynomial-time QMC, with respect to $n$. \\
\noindent
\textbf{Output:} a QMT $\rho_T$ such that $S(\rho||\rho_T)$ is minimal.
\begin{enumerate}
    \item Construct the complete (undirected) graph over $\mathbb{X}$ where each edge $(i,j)$ is weighted with $S(\rho_{X_i}:\rho_{X_j})$;
    \item Extract the maximum spanning tree $T$ for this graph using, for instance, Prim algorithm;
    \item Use the quantum circuit Figure~\ref{fig2} (in the proof of Theorem~\ref{thm: comp with QMN}) to construct $\rho_T$ where we consider any constructive order for $T$.
\end{enumerate}

\end{algorithm}

\section{Conclusions and open problems}
We have showed that  comparing the entropies of 3-chains (the simplest non-trivial scenario) is QSZK-complete~\cite{wat:08,ber:vaz:97,wat:02}. Although this result hints that finding the maximum entropy compatible state should be not feasible, it does not answer the problem of finding a lower-bound for the Maximum Entropy Compatible problem.

We addressed this problem partially, by showing a subclass where it is in P. Concretely, we addressed the problem for the case where there is a compatible QMT (quantum Markov tree) whose 3-chains are polynomial-time quantum Markov chains. With these results, we were able to extend the Chow-Liu algorithm~\cite{cho:liu:68} for quantum states whose 3-subchains are quantum Markov chains. Moreover, we showed that, in the case at hand, the maximum entropy quantum state could be constructed by a polynomial-time quantum circuit.  

Understanding other classes of quantum states for which this problem is tractable (at least in quantum polynomial time) would be a relevant problem. We note that the Chow-Liu algorithm is a standard machine learning algorithm to learn the fittest Bayesian network tree that describes data. We also expect that the proposed extension for quantum states sheds some light on the power of quantum machine learning.

\bibliographystyle{unsrt}
\bibliography{bib1}

\begin{thebibliography}{10}

\bibitem{kly:06}
Alexander~A Klyachko.
\newblock Quantum marginal problem and n-representability.
\newblock In {\em Journal of Physics: Conference Series}, volume~36, page~72.
  IOP Publishing, 2006.

\bibitem{liu:06}
Yi-Kai Liu.
\newblock Consistency of local density matrices is qma-complete.
\newblock In {\em Approximation, randomization, and combinatorial optimization.
  algorithms and techniques}, pages 438--449. Springer, 2006.

\bibitem{hig:ats:sud:szu:03}
Atsushi Higuchi, Anthony Sudbery, and Jason Szulc.
\newblock One-qubit reduced states of a pure many-qubit state: polygon
  inequalities.
\newblock {\em Physical review letters}, 90(10):107902, 2003.

\bibitem{bra:ser:03}
Sergey Bravyi.
\newblock Requirements for compatibility between local and multipartite quantum
  states.
\newblock {\em arXiv preprint quant-ph/0301014}, 2003.

\bibitem{hub:2018}
Felix~Michael Huber.
\newblock {\em Quantum states and their marginals: from multipartite
  entanglement to quantum error-correcting codes}.
\newblock PhD thesis, 2018.

\bibitem{jay:57}
Edwin~T Jaynes.
\newblock Information theory and statistical mechanics.
\newblock {\em Physical review}, 106(4):620, 1957.

\bibitem{wat:08}
John Watrous.
\newblock Quantum computational complexity.
\newblock {\em arXiv preprint arXiv:0804.3401}, 2008.

\bibitem{ber:vaz:97}
Ethan Bernstein and Umesh Vazirani.
\newblock Quantum complexity theory.
\newblock {\em SIAM Journal on computing}, 26(5):1411--1473, 1997.

\bibitem{wat:02}
John Watrous.
\newblock Limits on the power of quantum statistical zero-knowledge.
\newblock In {\em The 43rd Annual IEEE Symposium on Foundations of Computer
  Science, 2002. Proceedings.}, pages 459--468. IEEE, 2002.

\bibitem{aro:bar:09}
Sanjeev Arora and Boaz Barak.
\newblock {\em Computational complexity: a modern approach}.
\newblock Cambridge University Press, 2009.

\bibitem{faw:ren:15}
Omar Fawzi and Renato Renner.
\newblock Quantum conditional mutual information and approximate markov chains.
\newblock {\em Communications in Mathematical Physics}, 340(2):575--611, 2015.

\bibitem{cho:liu:68}
C.~Chow and C.~Liu.
\newblock Approximating discrete probability distributions with dependence
  trees.
\newblock {\em IEEE Transactions on Information Theory}, 14(3):462--467, May
  1968.

\bibitem{yu:04}
Yuchung~J. Wang.
\newblock Compatibility among marginal densities.
\newblock {\em Biometrika}, 91(1):234--239, 2004.

\bibitem{loe:04}
J.~{De Loera} and S.~{Onn}.
\newblock The complexity of three-way statistical tables.
\newblock {\em Siam Journal on Computing}, 33(4):819--817, 2004.

\bibitem{fri:13}
T.~{Fritz} and R.~{Chaves}.
\newblock Entropic inequalities and marginal problems.
\newblock {\em IEEE Transactions on Information Theory}, 59(2):803--817, 2013.

\bibitem{aro:tas:07}
Avraham Ben-Aroya and Amnon Ta-Shma.
\newblock Quantum expanders and the quantum entropy difference problem.
\newblock {\em arXiv preprint quant-ph/0702129}, 2007.

\bibitem{aar:05}
Scott Aaronson.
\newblock Quantum computing, postselection, and probabilistic polynomial-time.
\newblock {\em Proceedings of the Royal Society A: Mathematical, Physical and
  Engineering Sciences}, 461(2063):3473--3482, 2005.

\bibitem{sut:18}
David Sutter.
\newblock {\em Approximate Quantum Markov Chains}, pages 75--100.
\newblock Springer International Publishing, Cham, 2018.

\bibitem{sut:faw:ren:16}
David Sutter, Omar Fawzi, and Renato Renner.
\newblock Universal recovery map for approximate markov chains.
\newblock {\em Proc. R. Soc. A}, 472(2186):20150623, 2016.

\bibitem{cho:75}
Man-Duen Choi.
\newblock Completely positive linear maps on complex matrices.
\newblock {\em Linear algebra and its applications}, 10(3):285--290, 1975.

\bibitem{nie:chu:12}
Michael~A Nielsen and Isaac Chuang.
\newblock {\em Quantum computation and quantum information}.
\newblock Cambridge university press, 2010.

\bibitem{aha:kit:nis:08}
Dorit Aharonov, Alexei Kitaev, and Noam Nisan.
\newblock Quantum circuits with mixed states.
\newblock In {\em Proceedings of the thirtieth annual ACM symposium on Theory
  of computing}, pages 20--30, 1998.

\bibitem{low:34}
Karl L{\"o}wner.
\newblock {\"U}ber monotone matrixfunktionen.
\newblock {\em Mathematische Zeitschrift}, 38(1):177--216, 1934.

\bibitem{ben:zyc:06}
Ingemar Bengtsson and Karol {\.Z}yczkowski.
\newblock {\em Geometry of quantum states: an introduction to quantum
  entanglement}.
\newblock Cambridge university press, 2006.

\bibitem{dig:mer:mat:19}
Serena~Di Giorgio, Paulo Mateus, and Bruno Mera.
\newblock Recoverability from direct quantum correlations.
\newblock {\em Journal of Physics A: Mathematical and Theoretical}, 2020.

\bibitem{cov:tho:12}
Thomas~M Cover and Joy~A Thomas.
\newblock {\em Elements of information theory}.
\newblock John Wiley $\&$ Sons, 2012.

\bibitem{cay:89}
Arthur Cayley.
\newblock A theorem on trees.
\newblock {\em Quart. J. Math.}, 23:376--378, 1889.

\bibitem{amb:nay:tas:vaz:02}
A~Ambainis, A~Nayak, A~Ta-Shma, and UV~Vazirani.
\newblock Quantum dense coding and quantum finite qutomata.
\newblock {\em J. ACM}, 49:496--511, 2002.

\bibitem{ume:62}
Hisaharu Umegaki.
\newblock Conditional expectation in an operator algebra, iv (entropy and
  information).
\newblock In {\em Kodai Mathematical Seminar Reports}, volume~14, pages 59--85.
  Department of Mathematics, Tokyo Institute of Technology, 1962.

\bibitem{uhl:77}
Armin Uhlmann.
\newblock Relative entropy and the wigner-yanase-dyson-lieb concavity in an
  interpolation theory.
\newblock {\em Communications in Mathematical Physics}, 54(1):21--32, 1977.

\bibitem{pet:03}
D{\'e}nes Petz.
\newblock Monotonicity of quantum relative entropy revisited.
\newblock {\em Reviews in Mathematical Physics}, 15(01):79--91, 2003.

\end{thebibliography}

\appendix

\section{Lemmas for Theorem~\ref{thm1}}\label{ap1}
To perform the proof of Theorem~\ref{thm1} we need the following lemmas.

\begin{lemma}
\label{thm:polarization} \em \textbf{(Polarization lemma, Theorem 5 at
\cite{wat:02})} Let $\alpha$ and $\beta$ satisfy $0 \le \alpha <
\beta^2 \le 1$. Then there is a deterministic polynomial-time
procedure that, on input $(Q_0, Q_1, 1^n)$ where $Q_0$ and $Q_1$ are
quantum circuits, outputs descriptions of quantum circuits $(R_0,
R_1)$ (each having size polynomial in $n$ and in the size of $Q_0$
and $Q_1$) such that

\begin{eqnarray*}
\trnorm{\rho_0 - \rho_1} \le \alpha & \Rightarrow &
\trnorm{\mu_0 - \mu_1} \le 2^{-n}, \\
\trnorm{\rho_0 - \rho_1} \ge \beta & \Rightarrow &
\trnorm{\mu_0 - \mu_1} \ge 1-2^{-n}.
\end{eqnarray*}
\end{lemma}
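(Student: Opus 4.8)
This lemma is quoted from \cite{wat:02}; the plan is to recall that argument, which lifts the Sahai--Vadhan polarization for statistical distance to the quantum setting. Throughout write $T(\rho,\sigma)=\half\trnorm{\rho-\sigma}$, so that all hypotheses and conclusions are statements about $T$. The circuits $(R_0,R_1)$ are obtained from $(Q_0,Q_1)$ by composing, in an iterated fashion, two elementary transformations on pairs of state-preparing circuits, each of which is itself realized by a short circuit constructed in polynomial time.

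\emph{The two gadgets.} The \emph{tensor-power} gadget maps a pair preparing $(\sigma_0,\sigma_1)$ to the pair preparing $(\sigma_0^{\otimes\ell},\sigma_1^{\otimes\ell})$, i.e.\ $\ell$ independent copies. Its effect is read off from the multiplicativity of fidelity, $F(\sigma_0^{\otimes\ell},\sigma_1^{\otimes\ell})=F(\sigma_0,\sigma_1)^{\ell}$, together with the Fuchs--van de Graaf inequalities $1-F\le T\le\sqrt{1-F^{2}}$: if $T(\sigma_0,\sigma_1)\le\delta$ then the new distance is at most $\sqrt{1-(1-\delta)^{2\ell}}\le\sqrt{2\ell\delta}$, whereas if $T(\sigma_0,\sigma_1)\ge\gamma$ then it is at least $1-(1-\gamma^{2})^{\ell/2}$, which is exponentially (in $\ell$) close to $1$. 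The \emph{parity} gadget takes $m$ pairs $(\sigma_0^{(i)},\sigma_1^{(i)})$ to the single pair $(\tau_0,\tau_1)$ defined by $\tau_b=2^{-(m-1)}\sum_{a\in\{0,1\}^{m}:\,a_1\oplus\cdots\oplus a_m=b}\sigma_{a_1}^{(1)}\otimes\cdots\otimes\sigma_{a_m}^{(m)}$; a circuit prepares $\tau_b$ by sampling $a_1,\dots,a_{m-1}$ uniformly at random, setting $a_m=b\oplus a_1\oplus\cdots\oplus a_{m-1}$, and then running the preparation $Q_{a_i}^{(i)}$ on each block, so its size is $O\!\left(\sum_i|Q^{(i)}|\right)$ up to lower-order overhead. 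Because $\tau_0-\tau_1=2^{-(m-1)}\bigotimes_{i=1}^{m}(\sigma_0^{(i)}-\sigma_1^{(i)})$ and the trace norm is multiplicative across tensor factors, one obtains the exact identity $T(\tau_0,\tau_1)=\prod_{i=1}^{m}T(\sigma_0^{(i)},\sigma_1^{(i)})$. Taking all $m$ inputs equal to the current pair, this sends a distance $t$ to $t^{m}$: it contracts \emph{both} $\alpha$ and $\beta$, but leaves the ratio $\log(1/\alpha)/\log(1/\beta)$ invariant.

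\emph{Polarizing.} The hypothesis $\alpha<\beta^{2}$ says exactly $\log(1/\alpha)>2\log(1/\beta)$, i.e.\ there is a constant multiplicative slack in that ratio. The parity gadget drives the rejecting (small) side down much faster than the accepting (large) side; the tensor-power gadget then restores the large side to within $2^{-n}$ of $1$ at the cost of enlarging the small side by only a multiplicative $\sqrt{\ell}$; alternating the two, with parameters $m,\ell$ tuned to $n$, lets the slack be spent to improve both sides a little in each round, so that after suitably many rounds the small side is $\le 2^{-n}$ while the large side is $\ge 1-2^{-n}$. The whole recipe is a fixed deterministic polynomial-time procedure, and the output circuit sizes are polynomial in $n,|Q_0|,|Q_1|$.

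\emph{Main obstacle.} The real work is in the last paragraph: choosing $m$ and $\ell$ at each stage so that \emph{both} $\alpha$ shrinks and $\beta$ is (re)inflated every round, bounding the number of rounds, and, most delicately, keeping the cumulative multiplicative size blow-up polynomial rather than quasi-polynomial --- this is precisely where the strict inequality $\alpha<\beta^{2}$ is used, and the bookkeeping is exactly the classical Sahai--Vadhan argument. The only genuinely quantum ingredients are the multiplicativity of fidelity under tensor products and the Fuchs--van de Graaf inequalities; granting these, the rest transfers essentially verbatim from the classical setting.
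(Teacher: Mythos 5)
The paper does not prove this lemma at all: it is imported verbatim as Theorem~5 of \cite{wat:02} (Watrous's quantum polarization lemma), so there is no internal proof to compare against. Your sketch is, in substance, the standard Sahai--Vadhan argument as quantized by Watrous, and the two gadget analyses you give are correct: the parity gadget satisfies $\tau_0-\tau_1=2^{-(m-1)}\bigotimes_i(\sigma_0^{(i)}-\sigma_1^{(i)})$, hence exact multiplicativity of the (normalized) trace distance, and the tensor-power gadget is controlled via $F(\sigma_0^{\otimes\ell},\sigma_1^{\otimes\ell})=F(\sigma_0,\sigma_1)^{\ell}$ plus Fuchs--van de Graaf, exactly as in the cited proof.

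Two points in your final paragraph deserve tightening. First, the combination is not an open-ended alternation over many rounds: the actual schedule is a fixed three-stage construction. One XOR step with $j=O(\log n)$ copies uses $\alpha<\beta^{2}$ to make $(\alpha/\beta^{2})^{j}$ polynomially small; one tensor-power step with $\ell=O(n/\beta^{2j})=\mathrm{poly}(n)$ copies pushes the large side to $1-2^{-\Omega(n)}$ while the small side rises only to, say, $1/4$ (note the small side grows to at most $\ell\delta$ by subadditivity of trace distance under tensor products, or $\sqrt{2\ell\delta}$ via fidelity --- in either case not the ``multiplicative $\sqrt{\ell}$'' you state); and one final XOR step with $m=O(n)$ copies crushes the small side to $2^{-n}$ while the large side, being $\ge(1-2^{-\Omega(n)})^{m}\ge 1-m2^{-\Omega(n)}$, stays $\ge 1-2^{-n}$ after adjusting constants. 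Second, because the number of stages is constant and each parameter is polynomial in $n$ (here $\alpha,\beta$ are constants), the size blow-up you flag as the ``main obstacle'' is immediate rather than delicate; no round-by-round bookkeeping is needed. With that schedule made explicit your outline is a faithful reconstruction of the proof the paper cites.
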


The proof of the following lemmas can be found in \cite{nie:chu:12}.

\begin{lemma}
\label{Lemma:joint-entropy-theorem}\em \textbf{(Joint entropy theorem)}~\cite{nie:chu:12}
Suppose $p_i$ are probabilities, $\ket{i}$ are orthogonal states for
a system $A$, and $\rho_i$ is any set of density operators for
another system B. Then
$$S \left( \sum_i p_i \ketbra{i}{i} \tensor \rho_i \right) = H(p_i)+ \sum_i p_i S(\rho_i).$$
\end{lemma}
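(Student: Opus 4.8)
The plan is to compute the spectrum of $\rho:=\sum_i p_i\,\ketbra{i}{i}\otimes\rho_i$ explicitly and then evaluate $S(\rho)=-\tr{\rho\log\rho}$ directly. First I would diagonalize each marginal: write $\rho_i=\sum_j\lambda^i_j\,\ketbra{e^i_j}{e^i_j}$, where $\{\ket{e^i_j}\}_j$ is an orthonormal eigenbasis of $\hilbert{B}$ and $\lambda^i_j\ge 0$ with $\sum_j\lambda^i_j=1$. Since the $\ket{i}$ are mutually orthogonal in $\hilbert{A}$, the family $\{\ket{i}\otimes\ket{e^i_j}\}_{i,j}$ is orthonormal, and a one-line check gives $\rho\big(\ket{i}\otimes\ket{e^i_j}\big)=p_i\lambda^i_j\,\big(\ket{i}\otimes\ket{e^i_j}\big)$. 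Hence these vectors form an eigenbasis of $\rho$ with eigenvalues $p_i\lambda^i_j$; in particular $\rho$ is block-diagonal across the sectors indexed by $i$.

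Next I would substitute this spectrum into the von Neumann entropy formula:
$$S(\rho)=-\sum_{i,j}p_i\lambda^i_j\log\!\big(p_i\lambda^i_j\big)=-\sum_{i,j}p_i\lambda^i_j\big(\log p_i+\log\lambda^i_j\big).$$
Splitting the double sum and using $\sum_j\lambda^i_j=1$ in the first term and $-\sum_j\lambda^i_j\log\lambda^i_j=S(\rho_i)$ in the second, this collapses to $-\sum_i p_i\log p_i+\sum_i p_i S(\rho_i)=H(p_i)+\sum_i p_i S(\rho_i)$, which is exactly the claim. Terms with $p_i=0$ or $\lambda^i_j=0$ are handled by the usual convention $0\log 0=0$.

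There is essentially no real obstacle here; the only point that genuinely uses a hypothesis is that the orthogonality of the $\ket{i}$ is precisely what makes the joint eigenvalues factorize as $p_i\lambda^i_j$ — without it the overlaps $\braket{i}{i'}$ would break the block structure and the identity would fail. An equivalent packaging, if one prefers a basis-free statement, is the direct-sum identity $S\big(\bigoplus_i p_i\sigma_i\big)=H(p_i)+\sum_i p_i S(\sigma_i)$ for states $\sigma_i$ supported on mutually orthogonal subspaces, applied with $\sigma_i=\rho_i$ living in the sector $\ket{i}\otimes\hilbert{B}$; but the explicit eigenvalue computation above is already self-contained, so I would present that.
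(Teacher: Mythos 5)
Your proof is correct and complete: the eigenbasis $\{\ket{i}\otimes\ket{e^i_j}\}$ with eigenvalues $p_i\lambda^i_j$ is exactly the standard argument, which is the proof in Nielsen--Chuang that the paper cites for this lemma rather than proving it itself. Nothing is missing, and your remark on where orthogonality of the $\ket{i}$ enters is exactly the right point to flag.
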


\begin{lemma}
\label{Lemma:Fannes}\em \textbf{(Fannes' inequality)}~\cite{nie:chu:12} Suppose $\rho$ and
$\sigma$ are density matrices over a Hilbert space of dimension $d$.
Suppose further that the trace distance between them satisfies $t =
\trnorm{\rho - \sigma} \le 1/e$. Then
$$|S(\rho) - S(\sigma)| \le t( \ln d - \ln t ).$$
\end{lemma}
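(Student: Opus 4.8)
The plan is to peel the problem down to a purely classical inequality about the Shannon entropy of the two eigenvalue distributions and then prove that inequality using concavity of $\eta(x):=-x\ln x$ on $[0,1]$. Write the eigenvalues of $\rho$ and $\sigma$, each padded with zeros to length $d$ and sorted non-increasingly, as $r_1\ge\cdots\ge r_d$ and $s_1\ge\cdots\ge s_d$, so that $S(\rho)=\sum_i\eta(r_i)$ and $S(\sigma)=\sum_i\eta(s_i)$. The one genuinely matrix-analytic ingredient I would invoke is the eigenvalue perturbation bound $\sum_i|r_i-s_i|\le\trnorm{\rho-\sigma}=t$ (Mirsky's theorem, a consequence of the Lidskii--Wielandt inequalities), which I would simply cite. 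After this, everything is about real numbers: it remains to bound $\big|\sum_i\eta(r_i)-\sum_i\eta(s_i)\big|\le\sum_i|\eta(r_i)-\eta(s_i)|$ in terms of $t$, given that the perturbations $\delta_i:=|r_i-s_i|$ obey $\delta_i\le\sum_j\delta_j\le t\le 1/e<\tfrac12$.

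Next I would establish the elementary one-variable estimate $|\eta(x)-\eta(y)|\le\eta(|x-y|)$ whenever $x,y\in[0,1]$ and $|x-y|\le\tfrac12$. With $\delta=|x-y|$ held fixed, the map $x\mapsto\eta(x+\delta)-\eta(x)$ has derivative $\ln\frac{x}{x+\delta}<0$, hence is decreasing on $[0,1-\delta]$, so its values lie in $[-\eta(1-\delta),\,\eta(\delta)]$; and for $\delta\le\tfrac12$ one has $\eta(\delta)\ge\eta(1-\delta)$ (the difference vanishes at $\delta=0,\tfrac12$ and is positive in between), which gives the claimed bound. Applying this coordinatewise --- legitimate because every $\delta_i\le t\le 1/e<\tfrac12$ --- yields $\sum_i|\eta(r_i)-\eta(s_i)|\le\sum_i\eta(\delta_i)$.

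To finish I would use concavity of $\eta$ twice. Jensen's inequality gives $\sum_{i=1}^d\eta(\delta_i)\le d\,\eta\!\big(\tfrac1d\sum_i\delta_i\big)=d\,\eta(t'/d)=t'\,\ln(d/t')$, where $t':=\sum_i\delta_i\le t$. Finally, since $\frac{d}{du}\big(u\ln(d/u)\big)=\ln(d/u)-1\ge 0$ for $0<u\le 1/e$, the function $u\mapsto u\ln(d/u)$ is non-decreasing on $(0,1/e]$, so $t'\ln(d/t')\le t\ln(d/t)=t(\ln d-\ln t)$. Chaining the three bounds gives $|S(\rho)-S(\sigma)|\le t(\ln d-\ln t)$, as desired.

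The main obstacle is the very first step, the spectral perturbation bound $\sum_i|r_i-s_i|\le\trnorm{\rho-\sigma}$: it is the only ingredient that is not a manipulation of the scalar concave function $\eta$, and a self-contained proof of it typically passes through the Birkhoff--von Neumann description of doubly stochastic matrices or a variational argument. Everything after that is routine bookkeeping, with the hypothesis $t\le 1/e$ used exactly twice --- to keep each $\delta_i$ below $\tfrac12$ so the one-variable estimate applies, and to guarantee monotonicity of $u\mapsto u\ln(d/u)$ on the relevant range.
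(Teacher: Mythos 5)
Your proposal is correct, and it is essentially the standard argument for this lemma: the paper itself gives no proof, citing Nielsen and Chuang, whose proof follows the same route you take. The only cosmetic differences are that you invoke Mirsky's bound $\sum_i|r_i-s_i|\le\|\rho-\sigma\|_{\rm tr}$ as a citation where the textbook derives it directly, and you use Jensen's inequality for $\eta$ where the textbook uses an explicit regrouping, before the same final monotonicity step for $u\mapsto u\ln(d/u)$ on $(0,1/e]$.
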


\begin{lemma}
\label{lem:ANTV}\em \textbf{(Lemma 3.2 at \cite{amb:nay:tas:vaz:02})} Let $\rho_0$
and $\rho_1$ be two density matrices, and let $\rho = \half(\rho_0 +
\rho_1)$. If there exists is a measurement with outcome $0$ or $1$
such that making the measurement on $\rho_b$ yields the bit $b$ with
probability at least $p$, then
$$S(\rho) \ge \half[S(\rho_0) + S(\rho_1)] + (1-H(p)).$$
\end{lemma}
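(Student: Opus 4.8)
The plan is to recognize the quantity $S(\rho)-\half\big(S(\rho_0)+S(\rho_1)\big)$ as the Holevo information of the ensemble $\{(\half,\rho_0),(\half,\rho_1)\}$ and to lower bound it by the classical correlations the promised measurement extracts. First I would adjoin a fresh classical register $R$ carrying a uniform bit and form the classical--quantum state $\sigma_{RQ}=\half\,\ketbra{0}{0}\otimes\rho_0+\half\,\ketbra{1}{1}\otimes\rho_1$, with the rank-one projectors living on $R$ and the $\rho_b$ on $Q$. By the joint entropy theorem (Lemma~\ref{Lemma:joint-entropy-theorem}), $S(\sigma_{RQ})=H(\half)+\half\big(S(\rho_0)+S(\rho_1)\big)=1+\half\big(S(\rho_0)+S(\rho_1)\big)$, while the marginal $\sigma_R$ is the maximally mixed qubit state, so $S(\sigma_R)=1$, and $\sigma_Q=\half(\rho_0+\rho_1)=\rho$. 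Therefore
\[
I(R:Q)_\sigma=S(\sigma_R)+S(\sigma_Q)-S(\sigma_{RQ})=S(\rho)-\half\big(S(\rho_0)+S(\rho_1)\big),
\]
and it suffices to prove $I(R:Q)_\sigma\ge 1-H(p)$.

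To this end I would apply the measurement channel $\mathcal{M}$ furnished by the hypothesis to system $Q$, recording its outcome in a classical register $M$. Monotonicity of the quantum mutual information under the local CPTP map $\id_R\otimes\mathcal{M}$ (i.e. data processing, equivalently the Holevo bound) gives $I(R:Q)_\sigma\ge I(R:M)$ for the resulting state. Since $R$ and $M$ are classical and $R$ is uniform, $I(R:M)=H(R)-H(R\mid M)=1-H(R\mid M)$. The promise states that, conditioned on $R=b$, the outcome is $b$ with probability at least $p$, so $M$ predicts $R$ with error probability $P_e\le 1-p$; Fano's inequality for the binary variable $R$ then gives $H(R\mid M)\le H(P_e)\le H(1-p)=H(p)$, the last inequality using $P_e\le 1-p\le\half$ and monotonicity of the binary entropy on $[0,\half]$. (This invokes the standing assumption $p\ge\half$, which is no loss in the intended application and is in fact necessary: for $\rho_0=\rho_1$ the claimed bound fails when $p<\half$.) Chaining the inequalities, $S(\rho)-\half\big(S(\rho_0)+S(\rho_1)\big)=I(R:Q)_\sigma\ge I(R:M)\ge 1-H(p)$, as required.

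There is no serious obstacle here: the statement is a repackaging of the Holevo bound and Fano's inequality, both standard, and the von Neumann entropy chain rule used to extract the mutual information is exactly Lemma~\ref{Lemma:joint-entropy-theorem}. The only points that need care are the passage from $Q$ to the classical outcome register via data processing (which is where strong subadditivity enters), and the bookkeeping that turns the ``success probability at least $p$'' promise into the bound $H(R\mid M)\le H(p)$ through Fano in its binary form, together with flagging the $p\ge\half$ hypothesis.
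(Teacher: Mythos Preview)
The paper does not supply a proof of this lemma at all: it is merely quoted from \cite{amb:nay:tas:vaz:02} in Appendix~\ref{ap1} as an auxiliary result, with no argument given. So there is nothing in the paper to compare your approach against.

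That said, your proof is correct and is the standard one: identify $S(\rho)-\half\big(S(\rho_0)+S(\rho_1)\big)$ with the Holevo information $I(R:Q)_\sigma$ of the cq-state $\sigma_{RQ}$, apply data processing for the measurement channel to get $I(R:Q)_\sigma\ge I(R:M)$, and bound $I(R:M)=1-H(R\mid M)$ from below via binary Fano. Your flagging of the implicit hypothesis $p\ge\half$ is a genuine and necessary caveat (the inequality is false otherwise, as your $\rho_0=\rho_1$ counterexample shows), and the paper's own statement of the lemma indeed omits it; in the paper's application (Lemma~\ref{lem:ANTV-with-trace-norm} and the proof of Theorem~\ref{thm1}) one always has $p=\half+\half\trn{\rho_0-\rho_1}\ge\half$, so the omission is harmless there.
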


In particular, by choosing the right observable we have

\begin{lemma} \em\textbf{(Lemma B.6 at \cite{aro:tas:07})}
\label{lem:ANTV-with-trace-norm} Let $\rho_0$ and $\rho_1$ be two
density matrices, and let $\rho = \half(\rho_0 + \rho_1)$. Then
$$S(\rho) \ge \half[S(\rho_0) + S(\rho_1)] + (1-H(\half + \frac{\trn{\rho_0 - \rho_1}}{2})).$$
\end{lemma}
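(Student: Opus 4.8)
The plan is to read the right-hand side through the lens of distinguishability: the penalty term $1-H(p)$ in Lemma~\ref{lem:ANTV} grows precisely when there is a two-outcome measurement that tells $\rho_0$ and $\rho_1$ apart with success probability $p$, so it suffices to exhibit such a measurement with $p=\half+\frac{\trn{\rho_0-\rho_1}}{2}$. In other words, Lemma~\ref{lem:ANTV-with-trace-norm} is just Lemma~\ref{lem:ANTV} specialized to the \emph{optimal} distinguishing measurement, whose success probability is governed by the trace distance via the Holevo--Helstrom bound. First I would fix the normalization: the authors' $\trn{\cdot}$ is the trace distance, i.e.\ $\trn{\rho_0-\rho_1}=\half\norm{\rho_0-\rho_1}_1$ (this is the convention under which the polarization Lemma~\ref{thm:polarization} can drive the distance up to $1$), so the target probability lies in $[\half,1]$ and $1-H(p)\ge 0$.

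Next I would build the Helstrom measurement explicitly. Write $\Delta=\rho_0-\rho_1$ and let $\Delta=\Delta_+-\Delta_-$ be its decomposition into orthogonal positive and negative parts; let $P$ be the projector onto the support of $\Delta_+$ and take the two-outcome measurement $\{P,\id-P\}$, guessing $b=0$ on outcome $P$. With the uniform prior on $b$, the average success probability is $\half\tr(P\rho_0)+\half\tr((\id-P)\rho_1)=\half+\half\tr(P\Delta)$. Because $\tr\Delta=0$, the positive and negative eigenvalue masses coincide, so $\tr(P\Delta)=\tr(\Delta_+)=\half\norm{\Delta}_1=\trn{\rho_0-\rho_1}$, giving success probability exactly $p=\half+\frac{\trn{\rho_0-\rho_1}}{2}$. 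Feeding this $p$ into Lemma~\ref{lem:ANTV} yields $S(\rho)\ge\half[S(\rho_0)+S(\rho_1)]+(1-H(p))$, which is the claimed bound.

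The one point that needs care --- and which I expect to be the only real obstacle --- is that Lemma~\ref{lem:ANTV} must be invoked with the \emph{average} success probability rather than a per-state guarantee: a quick two-dimensional example (e.g.\ $\rho_0=\ketbra{0}{0}$, $\rho_1=\half\id$, where $\trn{\rho_0-\rho_1}=\half$) shows that no measurement can force both conditional success probabilities up to $\half+\frac{\trn{\rho_0-\rho_1}}{2}$ at once, so the Helstrom measurement only balances them on average. This is harmless, since the content of Lemma~\ref{lem:ANTV} rests on applying Holevo's bound to the classical--quantum state $\half\ketbra{0}{0}\tensor\rho_0+\half\ketbra{1}{1}\tensor\rho_1$ together with Fano's inequality for the guess, and both of these only see the average error. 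If one prefers a self-contained route, the inequality can be re-derived directly from this classical--quantum state: by the joint-entropy theorem (Lemma~\ref{Lemma:joint-entropy-theorem}) the difference $S(\rho)-\half[S(\rho_0)+S(\rho_1)]$ equals the Holevo quantity $I(B\!:\!Q)$, which is at least the accessible information of the Helstrom outcome, and Fano's inequality bounds the latter below by $1-H(p)$ with the same $p$.
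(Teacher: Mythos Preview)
Your proposal is correct and follows exactly the route the paper indicates: the paper's entire justification is the one line ``In particular, by choosing the right observable we have,'' and the observable in question is precisely the Helstrom measurement you construct, which attains success probability $p=\half+\frac{\trn{\rho_0-\rho_1}}{2}$ and feeds into Lemma~\ref{lem:ANTV}. Your caution about per-state versus average success is well taken given the literal phrasing of Lemma~\ref{lem:ANTV} here, and your fallback derivation via the Holevo quantity of the cq-state together with Fano's inequality cleanly closes that gap; the paper does not address this point at all.
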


\section{Proof of the central Lemma\ref{lemma: QMC - I(A:C|B)}}
\label{appendix:proof:lemma6}
The proof of the lemma comes straightforwardly from the following definitions and previously established theorems.
\begin{definition}\emph{
		Let $\hilbert{}$ be a finite dimensional Hilbert space and $\rho_i\in\liouville{}$, i=1,2, density operators. Their \emph{relative entropy} is defined as:
		\begin{equation}
		S\left( \rho_1\|\rho_2\right):=\begin{cases}
		\mbox{Tr}\rho_1\left(\log\rho_1-\log\rho_2\right) & \text{if}\ \mbox{supp}\left(\rho_1\right)\subseteq \mbox{supp}\left(\rho_2\right)\\
		+\infty& \text{otherwise}.
		\end{cases} 
		\end{equation}}
\end{definition}
Originally defined by Umegaki \cite{ume:62}. A relevant property of the quantum relative entropy is its monotonicity under CPTP maps, also known as Uhlmann's theorem \cite{uhl:77}.
\begin{theorem}\label{thm:monotonicity-relative-entropy}
		\em Let $\mathcal{H}$ and $\mathcal{K}$ be finite dimensional Hilbert spaces, $\rho_i\in\liouville{}$, i=1,2, density operators with $\mbox{supp}\left(\rho_1\right)\subseteq \mbox{supp}\left(\rho_2\right)$.
		For a CPTP map $\Phi:\mathcal{B}\left(\mathcal{H}\right) \rightarrow\mathcal{B}\left(\mathcal{K}\right) $ the following inequality holds:
		\begin{equation}
		\label{eq:thm:monotonicity-relative-entropy}
		S\left(\rho_1\|\rho_2\right)\geq S\left(\Phi(\rho_1)\|\Phi(\rho_2) \right).
		\end{equation}
\end{theorem}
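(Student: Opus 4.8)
\medskip
\noindent\textbf{Proof sketch (proposal).} The plan is to reduce the general data-processing inequality to the single special case of the partial trace, and then to derive that case from the joint convexity of the relative entropy. First observe that the hypothesis $\mathrm{supp}(\rho_1)\subseteq\mathrm{supp}(\rho_2)$ is preserved by $\Phi$: if $\rho_1\le c\,\rho_2$ for some $c>0$ then $\Phi(\rho_1)\le c\,\Phi(\rho_2)$ by positivity and linearity, so both sides of the claimed inequality are finite and the $+\infty$ clause of the definition never intervenes.

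The reduction goes via Stinespring dilation: write $\Phi(\cdot)=\mathrm{Tr}_E\big(V(\cdot)V^\dagger\big)$ for a finite-dimensional ancilla $\mathcal{H}_E$ and an isometry $V\colon\mathcal{H}\to\mathcal{K}\otimes\mathcal{H}_E$. Relative entropy is invariant under such a $V$: the operators $V\rho_1V^\dagger$ and $V\rho_2V^\dagger$ are supported on the subspace $V\mathcal{H}$, on which $\log(V\rho_iV^\dagger)=V(\log\rho_i)V^\dagger$, so using $V^\dagger V=\id_{\mathcal{H}}$ and cyclicity of the trace one checks in one line that $S(V\rho_1V^\dagger\,\|\,V\rho_2V^\dagger)=S(\rho_1\|\rho_2)$. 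Hence it suffices to prove monotonicity under a partial trace, $S(\mathrm{Tr}_E\,\omega_1\,\|\,\mathrm{Tr}_E\,\omega_2)\le S(\omega_1\|\omega_2)$, for density operators $\omega_1,\omega_2$ on $\mathcal{K}\otimes\mathcal{H}_E$ with $\mathrm{supp}(\omega_1)\subseteq\mathrm{supp}(\omega_2)$.

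For the partial-trace case I would twirl over the ancilla. Let $\{U_j\}_{j=1}^{d_E^2}$ be the Heisenberg--Weyl unitaries on $\mathcal{H}_E$ and put $W_j:=\id_{\mathcal{K}}\otimes U_j$, so that $\frac{1}{d_E^2}\sum_j W_j Y W_j^\dagger=(\mathrm{Tr}_E Y)\otimes\frac{\id_E}{d_E}$ for every operator $Y$. Combining this with additivity $S(\xi\otimes\tau\,\|\,\eta\otimes\tau)=S(\xi\|\eta)$ (immediate from $\log(\xi\otimes\tau)=\log\xi\otimes\id+\id\otimes\log\tau$), the unitary invariance of relative entropy, and its joint convexity, one gets
\begin{align*}
S\!\left(\mathrm{Tr}_E\,\omega_1\,\|\,\mathrm{Tr}_E\,\omega_2\right)
&= S\!\left(\tfrac{1}{d_E^2}\sum_j W_j\omega_1 W_j^\dagger\,\Big\|\,\tfrac{1}{d_E^2}\sum_j W_j\omega_2 W_j^\dagger\right)\\
&\le \tfrac{1}{d_E^2}\sum_j S\!\left(W_j\omega_1 W_j^\dagger\,\|\,W_j\omega_2 W_j^\dagger\right)= S(\omega_1\|\omega_2),
\end{align*}
which is exactly the desired inequality.

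It remains to establish the joint convexity of $(\rho,\sigma)\mapsto S(\rho\|\sigma)$, and for this I would invoke Lieb's concavity theorem: for $0<t<1$ the map $(A,B)\mapsto\mathrm{Tr}\big[A^{1-t}B^{t}\big]$ is jointly concave on pairs of positive semidefinite operators. Consequently, for $\rho$ normalized and $\sigma$ invertible (the general case following by homogeneity and continuity), $\tfrac{1}{t}\big(1-\mathrm{Tr}[\rho^{1-t}\sigma^{t}]\big)$ is jointly convex for every $t>0$, and letting $t\to0^+$ it tends to $S(\rho\|\sigma)$, so joint convexity survives the limit. The main obstacle is precisely Lieb's concavity theorem: it is the one genuinely deep analytic input (essentially equivalent, in this circle of ideas, to strong subadditivity of the von Neumann entropy), while the dilation and twirling steps are routine bookkeeping. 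Two alternatives worth noting are Petz's argument via the relative modular operator $\Delta_{\sigma|\rho}$ together with operator convexity of $t\mapsto-\log t$, which merely repackages the same difficulty, and, since the statement is classical, simply citing the original proofs of Uhlmann and Lindblad.
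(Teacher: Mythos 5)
Your proposal is correct, but it does not track the paper, because the paper contains no proof of this statement at all: Theorem~\ref{thm:monotonicity-relative-entropy} is quoted as Uhlmann's monotonicity theorem with a citation to the original literature, and is only \emph{used} afterwards (Corollary~\ref{remark:SSAfromUhlmann}) by specializing $\Phi$ to a partial trace to obtain strong subadditivity, i.e.\ nonnegativity of $I_\rho(A:C|B)$. Your sketch instead supplies the standard Lindblad--Uhlmann argument: the support hypothesis is stable under $\Phi$, a Stinespring dilation together with isometry-invariance of $S(\cdot\|\cdot)$ reduces the claim to monotonicity under partial trace, the Heisenberg--Weyl twirl combined with additivity and unitary invariance reduces that to joint convexity, and joint convexity is extracted from Lieb's concavity theorem via the limit $\tfrac{1}{t}\bigl(1-\mathrm{Tr}[\rho^{1-t}\sigma^{t}]\bigr)\to S(\rho\|\sigma)$ as $t\to 0^{+}$. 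Each step is sound (the twirl identity and $\log(V\rho V^{\dagger})=V(\log\rho)V^{\dagger}$ on the support are routine, and the invertibility/continuity caveat you mention is the right one), so the proposal is a complete proof modulo Lieb's theorem, which you correctly flag as the genuine analytic input. The trade-off is clear: the paper's citation-only treatment is appropriate because the result is classical and not among its contributions, whereas your route makes the appendix self-contained at the price of importing Lieb concavity (or an equivalent deep input, such as operator convexity of $-\log$ through the relative modular operator), which is of essentially the same depth as the strong subadditivity the paper ultimately wants.
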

\begin{corollary}\label{remark:SSAfromUhlmann} The von Neumann entropy is \emph{strong sub-additivite}:
\begin{equation}
S\left( \rho_{ABC}\|\rho_{AB}\otimes\frac{\id_c}{d_C}\right)\geq S\left( \rho_{BC}\|\rho_{B}\otimes\frac{\id_c}{d_C}\right).
\end{equation}
\end{corollary}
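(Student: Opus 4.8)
The plan is to obtain strong subadditivity as an immediate consequence of the monotonicity of the relative entropy (Theorem~\ref{thm:monotonicity-relative-entropy}), choosing the CPTP map to be the partial trace over the first factor, $\Phi=\pTr{A}{\cdot}:\liouville{ABC}\to\liouville{BC}$. First I would verify the support hypothesis required by Theorem~\ref{thm:monotonicity-relative-entropy}: letting $P_{AB}$ be the projector onto $\mathrm{supp}(\rho_{AB})$, we have $\Tr{((\id_{AB}-P_{AB})\otimes\id_C)\,\rho_{ABC}}=\Tr{(\id_{AB}-P_{AB})\,\rho_{AB}}=0$, so $\mathrm{supp}(\rho_{ABC})\subseteq\mathrm{range}(P_{AB}\otimes\id_C)=\mathrm{supp}\!\left(\rho_{AB}\otimes\tfrac{\id_C}{d_C}\right)$, the last equality because $\id_C/d_C$ is full rank. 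Hence the left-hand side relative entropy is finite and the theorem applies.

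Next I would identify the images of the two operators under $\Phi$, namely $\Phi(\rho_{ABC})=\rho_{BC}$ and $\Phi\!\left(\rho_{AB}\otimes\tfrac{\id_C}{d_C}\right)=\rho_B\otimes\tfrac{\id_C}{d_C}$. Substituting these into inequality~\eqref{eq:thm:monotonicity-relative-entropy} yields precisely the stated inequality $S\!\left(\rho_{ABC}\|\rho_{AB}\otimes\tfrac{\id_C}{d_C}\right)\ge S\!\left(\rho_{BC}\|\rho_B\otimes\tfrac{\id_C}{d_C}\right)$.

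Finally, to make the content transparent I would expand both sides using $\log\!\left(\sigma\otimes\tfrac{\id_C}{d_C}\right)=\log\sigma\otimes\id_C-(\log d_C)\,\id$, together with $\Tr{\rho_{ABC}(\log\rho_{AB}\otimes\id_C)}=\Tr{\rho_{AB}\log\rho_{AB}}=-S(\rho_{AB})$ and $\Tr{\rho_{ABC}}=1$. This gives left-hand side $=-S(\rho_{ABC})+S(\rho_{AB})+\log d_C$ and right-hand side $=-S(\rho_{BC})+S(\rho_B)+\log d_C$; the constant $\log d_C$ cancels, leaving $S(\rho_{AB})+S(\rho_{BC})\ge S(\rho_{ABC})+S(\rho_B)$, i.e. $I_{\rho}(A:C|B)\ge 0$. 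There is no genuine obstacle here: the only point that deserves care is the support condition above, which is why I would spell it out explicitly; the remaining steps are a routine manipulation of traces of logarithms.
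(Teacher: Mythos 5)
Your proposal is correct and follows essentially the same route as the paper: apply the monotonicity of relative entropy (Uhlmann's theorem) with $\rho_1=\rho_{ABC}$, $\rho_2=\rho_{AB}\otimes\id_C/d_C$ and $\Phi=\pTr{A}{\cdot}$, and note the result is equivalent to $I_\rho(A:C|B)\geq 0$. Your added verification of the support condition and the explicit trace-of-logarithm expansion are sound details the paper leaves implicit.
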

\begin{proof}
Observe that setting in \eqref{eq:thm:monotonicity-relative-entropy} $\rho_{1}\rightarrow\rho_{ABC}$, $\rho_{2}\rightarrow\rho_{AB}\otimes\id_C/d_C$ and $\phi(\cdot)\rightarrow\pTr{A}{\cdot}$, we obtain which is equivalent to the non-negativity of the quantum conditional mutual information $I_\rho\left(A:C|B \right)\geq 0 $.
\end{proof}

The following two theorems characterize the case of the equality and they will be the core of the proof of $Lemma~\ref{lemma: QMC - I(A:C|B)}$.
\begin{theorem}\label{thm:ssa=logcondition}\em\textbf{(Theorem 2 at \cite{pet:03})}
		 Let $\Phi:\mathcal{B}\left(\mathcal{H}\right) \rightarrow\mathcal{B}\left(\mathcal{K}\right) $ be a CPTP map and let $\rho_i\in\liouville{}$, i=1,2, and $\cp{i}\in\mathcal{B}\left(\mathcal{K} \right) $ be all invertible density operators. Then, the equality holds in the Uhlmann theorem iff the following equivalent conditions hold:
		\begin{itemize}
			\item[i)] $\phi^\dagger\left(\cp{2}^{it}\cp{1}^{-it} \right)=\rho_{2}^{it}\rho_{1}^{-it} $ $t\in\mathbb{R}$;
			\item[ii)] $\phi^\dagger\left(\log\cp{1}-\log\cp{2}\right)=\log\rho_{1}-\log\rho_{2}$;
		\end{itemize}
		where $(ii)$ is obtained differentiating $(i)$ in t=0.
\end{theorem}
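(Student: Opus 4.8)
The plan is to prove the cycle of implications ``equality in Theorem~\ref{thm:monotonicity-relative-entropy} $\Rightarrow$ (i) $\Rightarrow$ (ii) $\Rightarrow$ equality'', which yields all three asserted equivalences at once. Write $\Phi$ for the map denoted $\phi$ in the statement and let $\Phi^\dagger$ be its adjoint for the Hilbert--Schmidt inner product $\langle X,Y\rangle=\tr(X^\dagger Y)$, so that $\tr[A\,\Phi^\dagger(B)]=\tr[\Phi(A)\,B]$ whenever $A,B$ are self-adjoint. The step (ii) $\Rightarrow$ equality is then a one-line verification: assuming $\Phi^\dagger(\log\Phi(\rho_1)-\log\Phi(\rho_2))=\log\rho_1-\log\rho_2$ we get
\begin{equation}
S(\rho_1\|\rho_2)=\tr[\rho_1\,\Phi^\dagger(\log\Phi(\rho_1)-\log\Phi(\rho_2))]=\tr[\Phi(\rho_1)(\log\Phi(\rho_1)-\log\Phi(\rho_2))]=S(\Phi(\rho_1)\|\Phi(\rho_2)).
\end{equation}
The step (i) $\Rightarrow$ (ii) is the differentiation remark already appended to the statement: since $\rho_1,\rho_2,\Phi(\rho_1),\Phi(\rho_2)$ are invertible, the maps $t\mapsto\rho_2^{it}\rho_1^{-it}$ and $t\mapsto\Phi^\dagger(\Phi(\rho_2)^{it}\Phi(\rho_1)^{-it})$ are norm-differentiable at $t=0$, with derivatives $i(\log\rho_2-\log\rho_1)$ and $i\,\Phi^\dagger(\log\Phi(\rho_2)-\log\Phi(\rho_1))$ (using continuity and linearity of $\Phi^\dagger$); equating these and simplifying gives (ii).

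The substance is therefore equality $\Rightarrow$ (i), and for this I would work inside the standard proof of monotonicity. On $\mathcal{B}(\mathcal{H})$ with the Hilbert--Schmidt inner product, introduce the relative modular operator $\Delta\colon X\mapsto\rho_2\,X\,\rho_1^{-1}$, a positive invertible operator, and recall the quadratic-form representation $S(\rho_1\|\rho_2)=\langle\rho_1^{1/2},\,g(\Delta)\,\rho_1^{1/2}\rangle$ with $g=-\log$, together with the analogous identity on $\mathcal{B}(\mathcal{K})$ for $\Phi(\rho_1),\Phi(\rho_2)$ and their modular operator $\Delta'$. From a Stinespring dilation of $\Phi$ one builds an isometry $V\colon\mathcal{B}(\mathcal{K})\to\mathcal{B}(\mathcal{H})$ (both with the Hilbert--Schmidt inner product) satisfying $V\,\Phi(\rho_1)^{1/2}=\rho_1^{1/2}$ and such that the compression $V^\dagger(\cdot)V$ turns the monotonicity inequality of Theorem~\ref{thm:monotonicity-relative-entropy} into an instance of the Hansen--Pedersen operator Jensen inequality for the strictly operator convex function $g$. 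The crucial point is the rigidity of operator Jensen for a \emph{strictly} operator convex $g$: equality forces $V$ to intertwine the two modular flows on the cyclic subspace generated by $\rho_1^{1/2}$, i.e.
\begin{equation}
V^\dagger\,(\Delta')^{it}\,V\,\rho_1^{1/2}=\Delta^{it}\,\rho_1^{1/2}\qquad(t\in\mathbb{R}).
\end{equation}
Using $\Delta^{it}\rho_1^{1/2}=\rho_2^{it}\rho_1^{1/2}\rho_1^{-it}$ and the analogous identity for $\Delta'$, and unwinding the definitions of $V$ and $\Phi^\dagger$, one cancels the harmless right-hand factors attached to the Hilbert--Schmidt vectors and obtains $\Phi^\dagger(\Phi(\rho_2)^{it}\Phi(\rho_1)^{-it})=\rho_2^{it}\rho_1^{-it}$ for all $t\in\mathbb{R}$, which is condition (i). (This is essentially Petz's original argument.)

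I expect the rigidity step to be the main obstacle. Strict operator convexity of $-\log$ controls the compression only on the part of the spectrum of $\Delta$ that the vector $\rho_1^{1/2}$ actually sees, so the intertwining identity is first obtained only on the subspace cyclically generated by $\rho_1^{1/2}$; upgrading it to the global operator identity (i) uses that $\rho_1$ is invertible, so that $\rho_1^{1/2}$ is separating for the left- and right-multiplication algebras on $\mathcal{B}(\mathcal{H})$ --- this is precisely where the invertibility hypotheses on the $\rho_i$ and $\Phi(\rho_i)$ enter. A more routine but still delicate point is checking that the isometry $V$ extracted from the Stinespring dilation genuinely implements $\Phi^\dagger$ on the relevant vectors and has the stated compression property; once that bookkeeping is done, the remaining manipulations with operator powers and logarithms are routine.
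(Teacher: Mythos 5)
You should first note that the paper does not prove this statement at all: it is quoted verbatim (as ``Theorem 2 at [pet:03]'') and used as an imported ingredient in the proof of Lemma~\ref{lemma: QMC - I(A:C|B)}, so there is no internal proof to compare against; what you are really attempting is a reconstruction of Petz's own argument. Your two easy arrows are fine: (ii)~$\Rightarrow$~equality is indeed the one-line duality computation $\tr[\rho_1\,\Phi^\dagger(B)]=\tr[\Phi(\rho_1)B]$ (legitimate here because all four operators are invertible, so the supports cause no trouble), and (i)~$\Rightarrow$~(ii) is the differentiation at $t=0$ already recorded in the statement.

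The genuine gap is the arrow equality~$\Rightarrow$~(i), which is the entire content of Petz's theorem and which your proposal only names rather than proves. Invoking ``rigidity of the Hansen--Pedersen operator Jensen inequality for strictly operator convex $g$'' assumes exactly what has to be shown: the equality $S(\rho_1\|\rho_2)=S(\Phi(\rho_1)\|\Phi(\rho_2))$ is a \emph{scalar} identity evaluated at the single Hilbert--Schmidt vector $\rho_1^{1/2}$ (resp.\ $\Phi(\rho_1)^{1/2}$), so no operator-inequality rigidity can be applied wholesale; Petz's actual argument expands $-\log$ through its integral representation $-\log x=\int_0^\infty\bigl((x+t)^{-1}-(1+t)^{-1}\bigr)\,dt$, deduces from saturation that the resolvent identities $V^\dagger(\Delta+t)^{-1}V\,\Phi(\rho_1)^{1/2}=(\Delta'+t)^{-1}\Phi(\rho_1)^{1/2}$ hold for every $t>0$, and only then passes to the unitaries $\Delta^{it}$ and unwinds $V$ to get (i), using invertibility to move off the cyclic vector. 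Two further inaccuracies in your setup would also need repair: Petz's map $V(X)=\Phi^\dagger\bigl(X\,\Phi(\rho_1)^{-1/2}\bigr)\rho_1^{1/2}$ is in general only a contraction (by the Schwarz inequality for the unital CP map $\Phi^\dagger$), not an isometry built from a Stinespring dilation; and your displayed intertwining relation is type-mismatched, since $\Delta'$ acts on $\mathcal{B}(\mathcal{K})$ while $V$ maps $\mathcal{B}(\mathcal{K})\to\mathcal{B}(\mathcal{H})$ and $\rho_1^{1/2}$ lives in $\mathcal{B}(\mathcal{H})$ --- it should read $V^\dagger\Delta^{it}V\,\Phi(\rho_1)^{1/2}=(\Delta')^{it}\,\Phi(\rho_1)^{1/2}$. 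As it stands, the proposal is a correct plan whose decisive step is left unestablished, so it does not yet constitute a proof.
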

The adjoint map $\phi^\dagger(\cdot)$ is understood with respect to the \emph{Hilbert-Schmidt inner product}.
\begin{theorem}\label{thm:PetzMap}\em\textbf{(Theorem 5.2 at \cite{sut:18})}
A tripartite state $\rho_{ABC}\in\liouville{ABC}$ is a QMC in the order A-B-C iff $\qI{A:C|B}=0$. Furthermore, one can always choose as recovery map the rotated Petz map:
\begin{align}
\label{eq:rotatedPetzmap}
\mathcal{P}_{B\to BC}^{t}(X):=\rho_{BC}^{\frac{1+it}{2}}\left(\rho_B^{-\frac{1+it}{2}}X \rho_B^{-\frac{1-it}{2}}\otimes \id_C\right)\rho_{BC}^{\frac{1-it}{2}}, \text{ for any } X\in \mathcal{B}(\mathcal{H}_{B}),\ t\in\mathbb{R}.
\end{align}
\end{theorem}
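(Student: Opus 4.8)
The plan is to establish the two implications separately, in each case reducing to the monotonicity of relative entropy (Theorem~\ref{thm:monotonicity-relative-entropy}) together with its equality condition (Theorem~\ref{thm:ssa=logcondition}); throughout I assume $\rho_{ABC}$, and hence all of its marginals, invertible so that those results apply (the general case then follows by a density/continuity argument). Write $S(A|B)_\rho:=S(\rho_{AB})-S(\rho_B)$ for the conditional entropy. The starting point is that $I_\rho(A:C|B)$ is a difference of two Umegaki relative entropies linked by a partial trace: for $\Phi=\mathrm{Tr}_C$, $\rho_1=\rho_{ABC}$ and $\rho_2=\rho_{BC}\otimes\id_A/d_A$, so that $\Phi(\rho_1)=\rho_{AB}$ and $\Phi(\rho_2)=\rho_B\otimes\id_A/d_A$, one checks in one line (the same computation underlying Corollary~\ref{remark:SSAfromUhlmann}, with $A$ and $C$ interchanged) that $S(\rho_1\|\rho_2)-S(\Phi(\rho_1)\|\Phi(\rho_2))=I_\rho(A:C|B)$. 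Thus $I_\rho(A:C|B)=0$ is precisely the equality case of Uhlmann's theorem for this channel and this pair of states.

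For ``$\rho_{ABC}$ a QMC $\Rightarrow I_\rho(A:C|B)=0$'' I would dilate the recovery map instead of using its explicit form. Given $\rho_{ABC}=(\mathcal{I}_A\otimes\mathcal{R}_{B\to BC})(\rho_{AB})$, fix a Stinespring isometry $V\colon\mathcal{H}_B\to\mathcal{H}_{BC}\otimes\mathcal{H}_E$ with $\mathcal{R}_{B\to BC}(X)=\mathrm{Tr}_E[VXV^\dagger]$ and put $\omega_{ABCE}:=(\id_A\otimes V)\,\rho_{AB}\,(\id_A\otimes V)^\dagger$, so that $\mathrm{Tr}_E\,\omega_{ABCE}=\rho_{ABC}$. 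Since $\id_A\otimes V$ is an isometry and $\mathrm{Tr}_A\,\omega_{ABCE}=V\rho_B V^\dagger$, we get $S(\omega_{ABCE})=S(\rho_{AB})$ and $S(\omega_{BCE})=S(\rho_B)$, hence $S(\omega_{ABCE})-S(\omega_{BCE})=S(A|B)_{\rho_{AB}}$. Strong subadditivity (Corollary~\ref{remark:SSAfromUhlmann}) applied to $\omega_{ABCE}$ gives $S(\rho_{ABC})-S(\rho_{BC})\ge S(\omega_{ABCE})-S(\omega_{BCE})$, and applied directly to $\rho_{ABC}$ it gives $S(A|B)_{\rho_{AB}}\ge S(\rho_{ABC})-S(\rho_{BC})$; chaining these forces $S(\rho_{ABC})-S(\rho_{BC})=S(A|B)_{\rho_{AB}}$, which is exactly $I_\rho(A:C|B)=0$.

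For the converse, $I_\rho(A:C|B)=0$ means equality in Uhlmann's theorem for the data above, so condition (i) of Theorem~\ref{thm:ssa=logcondition} reads $\Phi^\dagger\big(\Phi(\rho_2)^{it}\Phi(\rho_1)^{-it}\big)=\rho_2^{it}\rho_1^{-it}$ for all $t\in\mathbb{R}$; unwinding $\Phi^\dagger(\cdot)=(\cdot)\otimes\id_C$ and cancelling the maximally mixed factor $\id_A/d_A$ this collapses to the modular identity $\rho_{ABC}^{\,it}=\rho_{BC}^{\,it}\rho_B^{-it}\rho_{AB}^{\,it}$ (identities on $A$ and $C$ suppressed) and, after taking adjoints, also $\rho_{ABC}^{\,it}=\rho_{AB}^{\,it}\rho_B^{-it}\rho_{BC}^{\,it}$. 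Because $z\mapsto\rho^{z}=e^{z\log\rho}$ is entire for an invertible $\rho$, each identity extends by analytic continuation from the imaginary axis to all $z\in\mathbb{C}$, in particular to $z=\tfrac{1\pm it}{2}$. Substituting $\rho_{AB}=\rho_{AB}^{(1+it)/2}\rho_{AB}^{(1-it)/2}$ into~\eqref{eq:rotatedPetzmap} and applying the two continued identities then gives $(\mathcal{I}_A\otimes\mathcal{P}^t_{B\to BC})(\rho_{AB})=\rho_{ABC}^{(1+it)/2}\rho_{ABC}^{(1-it)/2}=\rho_{ABC}$; cyclicity of the trace shows $\mathcal{P}^t_{B\to BC}$ is trace preserving, and its factorization into maps of the form $Z\mapsto MZM^\dagger$ (conjugation by $\rho_B^{-(1+it)/2}$, then ampliation by $\id_C$, then conjugation by $\rho_{BC}^{(1+it)/2}$) shows it is completely positive, so it is an admissible recovery map and $\rho_{ABC}$ is a QMC in the order $A-B-C$. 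The main obstacle I anticipate is the bookkeeping in this last step — identifying the modular flow coming out of Theorem~\ref{thm:ssa=logcondition} with the closed form~\eqref{eq:rotatedPetzmap}, which genuinely requires the analytic-continuation argument since $\log\rho_{ABC},\log\rho_{AB},\log\rho_{BC},\log\rho_B$ do not commute; everything else is a direct entropy computation or an appeal to Corollary~\ref{remark:SSAfromUhlmann} and Theorem~\ref{thm:ssa=logcondition}.
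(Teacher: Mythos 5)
Your proposal is correct, but note that the paper itself gives no proof of this statement: it is quoted verbatim as Theorem 5.2 of \cite{sut:18} and used as a black box in Appendix~\ref{appendix:proof:lemma6} to establish Lemma~\ref{lemma: QMC - I(A:C|B)}. What you supply is a self-contained derivation from ingredients the paper already quotes, namely monotonicity of relative entropy (Theorem~\ref{thm:monotonicity-relative-entropy}, Corollary~\ref{remark:SSAfromUhlmann}) and Petz's equality condition (Theorem~\ref{thm:ssa=logcondition}). Your forward direction (QMC $\Rightarrow I_\rho(A:C|B)=0$) via a Stinespring dilation of the recovery map is sound: the identifications $S(\omega_{ABCE})=S(\rho_{AB})$ and $S(\omega_{BCE})=S(\rho_B)$ hold because $\id_A\otimes V$ is an isometry, and the two applications of strong subadditivity do squeeze $I_\rho(A:C|B)$ between $0$ and $0$. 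Your converse is essentially Petz's original modular-flow argument specialized to the partial-trace channel: $I_\rho(A:C|B)$ is exactly the monotonicity gap for $\Phi=\mathrm{Tr}_C$, $\rho_1=\rho_{ABC}$, $\rho_2=\rho_{BC}\otimes\id_A/d_A$; condition (i) of Theorem~\ref{thm:ssa=logcondition} unwinds to $\rho_{ABC}^{it}=\rho_{AB}^{it}\rho_B^{-it}\rho_{BC}^{it}$ (together with its adjoint ordering) for all real $t$; and the entrywise identity theorem applied to the entire function $z\mapsto\rho_{ABC}^{-z}\rho_{AB}^{z}\rho_B^{-z}\rho_{BC}^{z}$ legitimately extends both identities to $z=\tfrac{1\pm it}{2}$, from which $(\mathcal{I}_A\otimes\mathcal{P}^{t}_{B\to BC})(\rho_{AB})=\rho_{ABC}^{\frac{1+it}{2}}\rho_{ABC}^{\frac{1-it}{2}}=\rho_{ABC}$, and the CPTP property of $\mathcal{P}^{t}_{B\to BC}$ follows as you indicate. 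Compared with simply citing \cite{sut:18}, your route buys self-containedness (everything rests on results already stated in Appendix~\ref{appendix:proof:lemma6}), at the price of leaning on Theorem~\ref{thm:ssa=logcondition}, which the paper likewise only quotes.

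The one soft spot is your closing claim that the general case "follows by a density/continuity argument." That is not automatic: a non-invertible state with vanishing conditional mutual information need not be approximable by invertible states that still have vanishing conditional mutual information, and exact recoverability by the rotated Petz map (whose definition involves $\rho_B^{-\frac{1\pm it}{2}}$) is delicate at the boundary; the general statement in \cite{sut:18} is handled with support projections rather than limits. Within this paper the issue is immaterial, since invertibility of $\rho_{ABC}$ is a standing assumption and Lemma~\ref{lemma: QMC - I(A:C|B)} is stated for invertible states; note also that your forward direction needs no invertibility at all.
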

\begin{proof} \emph{of Lemma~\ref{lemma: QMC - I(A:C|B)}}:
	\begin{itemize}
\item[]\textbf{(3 $\Rightarrow$ 1)} This implication comes for free from the definition of QMC. Moreover, the map $\mathcal{P}_{B\rightarrow BC}\left( \cdot\right)$ is clearly CPTP. The complete positivity indeed comes for free from the Hermitianicity of $\rho_B\otimes\id_C/d_C$ and $\rho_{BC}$, then of their square-roots. 

\item[]\textbf{(1 $\Rightarrow$ 3)} Equation~\eqref{eq:rotatedPetzmap} for t=0 gives exactly the Petz map in (3), so the implication comes as corollary of Theorem~\ref{thm:PetzMap}.

\item[]\textbf{(1 $\Leftrightarrow$ 2)} This follows from the statement of Theorem~\ref{thm:PetzMap}.

\item[]\textbf{(2 $\Leftrightarrow$ 4)} It comes as corollary of Theorem~\ref{thm:ssa=logcondition}, using the settings in Corollary~\ref{remark:SSAfromUhlmann}.

	\end{itemize}

\end{proof}
\section{Lemmas for Theorem 2 and 3}\label{ap3}
We need the following Lemmas to derive the proof.
\begin{lemma}
	\label{lemma4}
	\emph{Let $\mathbb{X}=\{A,B,C,D\}$ be the labeling of  parts of a finite dimensional Hilbert space $\hilbert{}$ and $\mathcal{C}=\{\rho_{XY}\in\mathcal{B}\left( \hilbert{XY}\right),\, X,Y\in \mathbb{X}\}$ an admissible set of two-body marginals.
		Assume $\rho_{AB},\rho_{BC}\in\mathcal{C}$ and $\exists\, \tilde{\rho}_{ABC}\in\liouville{ABC}:\, \tilde{\rho}_{ABC}\in\mbox{Comp}(\rho_{AB},\rho_{BC})$ such that $$I_\rho\left( A:C|B\right)=0.$$
		\textbf{a)}$\ $ If the associate graph $\mathcal{G}_\mathcal{C}$ is a chain A-B-C-D (i.e. $\mathcal{C}=\{\rho_{AB},\rho_{BC},\rho_{CD}\}$) then
		\begin{equation}
		\begin{split}
 & \exists\,\tilde{\rho}\in\liouville{}:\, \tilde{\rho}\in\mbox{Comp}\left(\mathcal{C}\right)\ \textnormal{s.t}\ I_\rho\left( A:CD|B\right)=0\quad \textnormal{iff}\\[1.5ex]
 & \exists\, \tilde{\rho}_{BCD}\in\liouville{BCD}:\, \tilde{\rho}_{BCD}\in\mbox{Comp}(\rho_{BC},\rho_{CD})\ \textnormal{s.t}\ I_\rho\left( B:D|C\right)=0.
		\end{split}
		\end{equation}
\textbf{b)}$\quad$If the associate graph $\mathcal{G}_\mathcal{C}$ is a star centred in B (i.e. $\mathcal{C}=\{\rho_{AB},\rho_{BC},\rho_{BD}\}$) 
\begin{equation*}
\mathcal{G}_{\mathcal{C}}:\xymatrix{ & B \ar@{-}[dl]\ar@{-}[d]\ar@{-}[dr] & \\
	A &C & D }
\end{equation*}
then
\begin{equation}
\exists\,\tilde{\rho}\in\liouville{}:\, \tilde{\rho}\in\mbox{Comp}\left(\mathcal{C}\right)\ \textnormal{s.t}\ I_\rho\left( A:CD|B\right)=0\quad \textnormal{iff}
\end{equation}
		\begin{itemize}
			\item[i)] $\exists\, \tilde{\rho}_{CBD}\in\mbox{Comp}(\rho_{BC},\rho_{BD})$, $\tilde{\rho}_{BCD}\in\liouville{BCD}$ s.t. $I_\rho\left( C:D|B\right)=0$ and
			\item[ii)] $\exists\, \tilde{\rho}_{ABD}\in\mbox{Comp}(\rho_{AB},\rho_{BD})$, $\tilde{\rho}_{ABD}\in\liouville{ABD}$ s.t. $I_\rho\left( A:D|B\right)=0$.
		\end{itemize}
	In both the cases, $\tilde\rho=\mbox{arg}\underset{{}^{\rho\in\text{Comp}\left(\mathcal{C}\right)} }{\mbox{max}}S(\rho)$ and factorizes over the elements of $\mathcal{C}$ via Petz following a constructive ordering for $\mathcal{C}$.}
	\end{lemma}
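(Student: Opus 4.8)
The plan is to read Lemma~\ref{lemma4} as the four‑party base case of the ``add a leaf'' induction used later, so no induction is needed here: in both (a) and (b) one glues together the two (resp.\ three) triangle‑wise quantum Markov chains that are assumed to exist, checks that the glued state still carries all the prescribed two‑body marginals, and then reads off the grouped conditional‑mutual‑information identity via the chain rule of the quantum conditional mutual information (Eq.~\eqref{eq:QCMI-chain-rule}); the closing ``$\tilde\rho=\arg\max S$ and factorizes via Petz'' clause is then exactly Corollary~\ref{cor1} applied to the four‑node tree.

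For the ``$\Leftarrow$'' direction of (a): the standing hypothesis gives a QMC $\tilde\rho_{ABC}$ over $A-B-C$ and the right‑hand side gives a QMC $\tilde\rho_{BCD}$ over $B-C-D$, which by Lemma~\ref{lemma: QMC - comp} are $\Theta_{ABC}\Theta_{ABC}^{\dagger}$ with $\Theta_{ABC}=\rho_{BC}^{1/2}\rho_{B}^{-1/2}\rho_{AB}^{1/2}$ normal and $\Theta_{BCD}\Theta_{BCD}^{\dagger}$ with $\Theta_{BCD}=\rho_{CD}^{1/2}\rho_{C}^{-1/2}\rho_{BC}^{1/2}$ normal. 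I would first show that the ``collapsed'' operator $\Theta:=\rho_{CD}^{1/2}\rho_{C}^{-1/2}\tilde\rho_{ABC}^{1/2}$ is again normal: writing $\Theta\Theta^{\dagger}$ and $\Theta^{\dagger}\Theta$ out, both are Petz reconstructions of the extension of the pair $\{\tilde\rho_{ABC},\tilde\rho_{BCD}\}$ (which share the marginal $\rho_{BC}$), so they coincide by uniqueness of the maximum‑entropy extension (operator‑concavity of the von Neumann entropy), modulo the marginal‑preservation point remarked below. By Lemma~\ref{lemma: QMC - comp} applied to the collapsed 3‑chain $(AB)-C-D$ this produces a four‑party state with $I_{\tilde\rho}(D:AB|C)=0$ and $ABC$‑marginal $\tilde\rho_{ABC}$, hence also $I_{\tilde\rho}(C:A|B)=0$; so both local‑Markov conditions along the chain hold and Theorem~\ref{thm: QMT- Petz fact - log} yields a $\tilde\rho\in\mathrm{Comp}(\mathcal{C})$ factorizable via Petz according to $\mathcal{C}$. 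Expanding $I_{\tilde\rho}(A:CD|B)=I_{\tilde\rho}(A:C|B)+I_{\tilde\rho}(A:D|BC)$ and $I_{\tilde\rho}(D:AB|C)=I_{\tilde\rho}(D:B|C)+I_{\tilde\rho}(D:A|BC)$, and using nonnegativity of each summand, forces $I_{\tilde\rho}(A:CD|B)=0$. Corollary~\ref{cor1} then identifies $\tilde\rho$ with $\arg\max_{\mathrm{Comp}(\mathcal{C})}S$.

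For the ``$\Rightarrow$'' direction of (a): from $I_{\tilde\rho}(A:CD|B)=0$ and the chain rule we get $I_{\tilde\rho}(A:C|B)=0$ and $I_{\tilde\rho}(A:D|BC)=0$; the first, together with uniqueness of the QMC extension of $\{\rho_{AB},\rho_{BC}\}$ (Lemma~\ref{lemma: QMC - comp}), forces $\tr_{D}\tilde\rho=\tilde\rho_{ABC}$, and then $\tilde\rho_{BCD}:=\tr_{A}\tilde\rho$ lies in $\mathrm{Comp}(\rho_{BC},\rho_{CD})$; that $I_{\tilde\rho_{BCD}}(B:D|C)=0$ I would extract by feeding the two log‑identities of Lemma~\ref{lemma: QMC - I(A:C|B)}\,(4), one for $A-B-(CD)$ and one for $A-(BC)-D$, into $\log\tilde\rho_{BCD}=\log\tilde\rho+\log\rho_{B}-\log\rho_{AB}$ and recognizing the $B-C-D$ log‑factorization, which by Lemma~\ref{lemma: QMC - I(A:C|B)} again is the desired vanishing. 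Part (b) is the same scheme with the chain step replaced by a star step: with a constructive ordering $A<B<C<D$ rooted so that $Y_{3}=Y_{4}=B$, one glues the three pairwise QMCs around $B$ using $\Delta_{C}=\rho_{BC}^{1/2}\rho_{B}^{-1/2}$ and $\Delta_{D}=\rho_{BD}^{1/2}\rho_{B}^{-1/2}$, concludes by the analogous star bookkeeping, and in the ``$\Rightarrow$'' part again decomposes $I(A:CD|B)=0$ through the chain rule to isolate conditions (i) and (ii); the $\arg\max$/Petz clause is again Corollary~\ref{cor1}.

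I expect the main obstacle to be the verification, in the ``$\Leftarrow$'' direction, that the Petz‑glued four‑party state really restricts to all of the prescribed two‑body marginals — equivalently that it restricts to $\tilde\rho_{ABC}$ on $ABC$ and to $\tilde\rho_{BCD}$ on $BCD$. This is not a formal consequence of trace‑preservation of the Petz recovery map; it uses admissibility of $\mathcal{C}$ together with the Markov structure of the pieces being glued, and it is the point at which I would lean on the corresponding marginal‑preservation step already carried out in the induction of the proof of Theorem~\ref{thm: QMT- Petz fact - log} rather than redo it. A secondary subtlety, in the ``$\Rightarrow$'' direction, is that the hypothesis is the \emph{grouped} identity $I(A:CD|B)=0$ and not the ``local'' one $I(D:AB|C)=0$, so the chain rule must be applied in the right order, every nonnegative summand argued to vanish before the standing QMC hypothesis can be invoked to pin down the $ABC$‑marginal, and then the same care exercised to descend to the $BCD$‑marginal.
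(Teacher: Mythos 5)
Your skeleton matches the paper's: one direction by the chain rule for quantum conditional mutual information, the other by collapsing the problem to a 3-chain $(AB)-C-D$ (resp.\ the star steps in case b)), forming the operator $\Theta_{ABCD}=\tilde\rho_{ABC}^{1/2}\rho_C^{-1/2}\rho_{CD}^{1/2}$, and invoking Lemma~\ref{lemma: QMC - comp} to obtain the four-party QMC; the closing $\arg\max$/Petz clause is indeed Corollary~\ref{cor1}. But the heart of the lemma is precisely the step you do not prove: that $\Theta_{ABCD}$ is normal and that the resulting state carries the prescribed marginals. The paper establishes this by a self-contained algebraic computation: it substitutes the Petz factorizations $\tilde\rho_{ABC}=\rho_{AB}^{1/2}\rho_B^{-1/2}\rho_{BC}\rho_B^{-1/2}\rho_{AB}^{1/2}$ and $\tilde\rho_{BCD}=\rho_{CD}^{1/2}\rho_C^{-1/2}\rho_{BC}\rho_C^{-1/2}\rho_{CD}^{1/2}$ (available from Lemma~\ref{lemma: QMC - I(A:C|B)} and the hypotheses), and uses that $\rho_{AB}^{1/2}\rho_B^{-1/2}$ and $\rho_{CD}^{1/2}\rho_C^{-1/2}$ act on disjoint tensor factors and hence commute, to rewrite $\Theta_{ABCD}\Theta_{ABCD}^\dagger$ symmetrically and conclude $\Theta_{ABCD}\Theta_{ABCD}^\dagger=\Theta_{ABCD}^\dagger\Theta_{ABCD}$ (and likewise, with three Petz identities, for the star). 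Your substitute argument --- ``both are Petz reconstructions of the extension of the pair, so they coincide by uniqueness of the maximum-entropy extension'' --- is circular: the existence of any common extension of $\tilde\rho_{ABC}$ and $\tilde\rho_{BCD}$ is exactly what is being proved, and the paper's earlier uniqueness remark applies to the two Petz recoveries of a state already known to be a QMC. Deferring the marginal-preservation check to ``the induction in Theorem~\ref{thm: QMT- Petz fact - log}'' inverts the paper's logical order: this lemma (via Corollary~\ref{corollary: (lemma8):n-parties:chain e star}) is an input to Theorem~\ref{thm: MQT efficient}, not a consequence of it, so the normality-plus-compatibility computation must be done here, as the paper does.

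A secondary problem is your treatment of the direction in which the four-party state is given. The paper reads that state as the Petz-factorized (QMT) one, so the grouped Markov condition at the other end of the chain, $I(AB:D|C)=0$, is available and the chain rule immediately yields $I(B:D|C)=0$ for its $BCD$ marginal. Your route instead starts only from $I(A:CD|B)=0$, deduces $I(A:C|B)=0$ and $I(A:D|BC)=0$, and tries to extract the $B-C-D$ log-factorization of $\mathrm{Tr}_A\tilde\rho$ from the two log-identities of Lemma~\ref{lemma: QMC - I(A:C|B)}(4); but combining those identities only reproduces the already-known $A-B-C$ factorization, since $\rho_{CD}$ never enters them, so the desired $\log\rho_{BC}+\log\rho_{CD}-\log\rho_C$ identity does not follow (a product state $\rho_A\otimes\rho_{BCD}$ with $I(B:D|C)>0$ shows the implication cannot be derived from those hypotheses alone). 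You should instead argue, as the paper implicitly does, at the level of the QMT state where the chain rule applied to the grouped condition at $C$ closes the argument.
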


\begin{proof} We prove cases a) and b) together, but each direction of the equivalence at a time. We notice than one direction follows easily from the chain rule, we start with that direction
	\textbf{$\mathbf{(\Leftarrow)}$} Recall the chain rule for quantum conditional mutual information:
	\begin{equation}\label{eq:QCMI-chain-rule}
\begin{split}	
\qI{A:X_1,\dots,  X_n | B}=\qI{A:X_1|B}+&\qI{A:X_2|BX_1}+\\&+\dots+\qI{A:X_n|BX_1,\dots, X_{n-1} }
\end{split}
	\end{equation}
and recall that the conditional mutual information is non negative.\\
Case \textbf{a)}
\begin{equation}
\label{3p Chain Rule}
\qI{AB:D|C}=\qI{B:D|C}+\qI{B:D|AC}=0\quad \Rightarrow\quad\qI{B:D|C}=0
\end{equation}
 The case \textbf{b)} is analogous:
 \begin{equation}
 \begin{split}
   \qI{AC:D|B}=\qI{A:D|B}+\qI{A:D|BC}=0\quad \Rightarrow\quad\qI{A:D|B}=0\\
     \qI{AC:D|B}=\qI{C:D|B}+\qI{C:D|AB}=0\quad \Rightarrow\quad\qI{C:D|B}=0
 \end{split}
\end{equation}
\textbf{$\mathbf{(\Rightarrow)}$} \textbf{a)} To prove the other direction of the statement, we show that there exists a $\tilde{\rho}\in\liouville{}$: $\tilde{\rho}\in\mbox{Comp}\left( \tilde{\rho}_{ABC}, \tilde{\rho}_{BCD}\right) $ and QMC on the order AB-C-D.\\
By hypothesis and using Lemma~\ref{lemma: QMC - I(A:C|B)}, the tripartite states can be recovered from two of its two-body marginals using the Petz recovery map:
\begin{align}
\qI{A:C|B}=0\quad \textnormal{iff}\quad\tilde{\rho}_{ABC}=\Petz{BC}{B}{AB}=\Petz{AB}{B}{BC},\label{eq:QMC ABC}\\[1.5ex]
\qI{B:D|C}=0\quad \textnormal{iff}\quad\tilde{\rho}_{BCD}=\Petz{BC}{C}{CD}=\Petz{CD}{C}{BC}.\label{eq:QMC BCD}
\end{align}
Using Lemma~\ref{lemma: QMC - comp}, we check the compatibility of the two marginals with the desired QMC showing that the operator $\Theta_{ABCD}:=\tilde{\rho}_{ABC}^\frac{1}{2}\rho_C^{-\frac{1}{2}}\rho_{CD}^\frac{1}{2}$ is normal:
\begin{align}
\Theta_{ABCD}\Theta_{ABCD}^\dagger&=\tilde{\rho}_{ABC}^\frac{1}{2}\rho_C^{-\frac{1}{2}}\rho_{CD}\rho_C^{-\frac{1}{2}}\tilde{\rho}_{ABC}^\frac{1}{2}\\[1.0ex]
&=\rho_{AB}^\frac{1}{2}\,\rho_{B}^{-\frac{1}{2}}\underset{\tilde{\rho}_{BCD}}{\underbrace{\Petz{BC}{C}{CD}}}\rho_{B}^{-\frac{1}{2}}\rho_{AB}^\frac{1}{2}\label{l9aeq1}\\[1.0ex] 
&=\underbrace{\rho_{AB}^\frac{1}{2}\rho_B^{-\frac{1}{2}}\rho_{CD}^\frac{1}{2}\rho_C^{-\frac{1}{2}}}\rho_{BC}\underbrace{\rho_C^{-\frac{1}{2}}\rho_{CD}^\frac{1}{2}\rho_B^{-\frac{1}{2}}\rho_{AB}^\frac{1}{2}}\label{l9aeq2}\\[1.0ex]
&=\rho_{CD}^\frac{1}{2}\rho_{C}^{-\frac{1}{2}}\underset{\tilde{\rho}_{ABC}}{\underbrace{\Petz{AB}{B}{BC}}}\rho_{C}^{-\frac{1}{2}}\rho_{CD}^\frac{1}{2}\label{l9aeq3}\\
&=\Theta_{ABCD}^\dagger\Theta_{ABCD}.
\end{align}
Equality in Eq.~\eqref{l9aeq1} follows for Eq.~\eqref{eq:QMC ABC} and Lemma~\ref{lemma: QMC - comp}. Equality in Eq.~\eqref{l9aeq2} follows from permuting density operators in different Hilbert spaces.\\[2mm]
\textbf{b)} Similarly to \textbf{a)}, we show that there exists a $\tilde{\rho}\in\liouville{}$: $\tilde{\rho}\in\mbox{Comp}\left( \tilde{\rho}_{ABC}, \tilde{\rho}_{CBD}, \tilde{\rho}_{ABD}\right) $ and QMC on the order AC-B-D. Again, using Lemma~\ref{lemma: QMC - I(A:C|B)}, the tripartite states can be recovered from two of its two-body marginals using the Petz recovery map:
\begin{align}
\qI{A:C|B}=0\quad \textnormal{iff}\quad\tilde{\rho}_{ABC}=\Petz{BC}{B}{AB}=\Petz{AB}{B}{BC},\label{eq:starQMCs ABC}\\[1.5ex]
\qI{C:D|B}=0\quad \textnormal{iff}\quad\tilde{\rho}_{CBD}=\Petz{BC}{B}{BD}=\Petz{BC}{B}{BD},\label{eq:starQMCs CBD}\\[1.5ex]
\qI{A:D|B}=0\quad \textnormal{iff}\quad\tilde{\rho}_{ABD}=\Petz{AB}{B}{BD}=\Petz{AB}{B}{BD}\label{eq:starQMCs ABD}.
\end{align}
First, using Lemma~\ref{lemma: QMC - comp}, we check the compatibility of the two marginals $\rho_{BD}$ and $\rho_{ABC}$ with the desired QMC showing that the operator $\Theta_{ABCD}:=\tilde{\rho}_{ABC}^\frac{1}{2}\rho_B^{-\frac{1}{2}}\rho_{BD}^\frac{1}{2}$ is normal:
\begin{align}
\Theta_{ABCD}\Theta_{ABCD}^\dagger&=\tilde{\rho}_{ABC}^\frac{1}{2}\rho_B^{-\frac{1}{2}}\rho_{BD}\rho_B^{-\frac{1}{2}}\tilde{\rho}_{ABC}^\frac{1}{2}\\[1.0ex]&
=\rho_{AB}^\frac{1}{2}\,\rho_{B}^{-\frac{1}{2}}\underset{\tilde{\rho}_{BCD}}{\underbrace{\Petz{BC}{B}{BD}}}\rho_{B}^{-\frac{1}{2}}\rho_{AB}^\frac{1}{2}\\[1.0ex]
&=\underset{\tilde{\rho}_{ABD}}{\underbrace{\rho_{AB}^\frac{1}{2}\rho_B^{-\frac{1}{2}}\rho_{BD}^\frac{1}{2}}}\rho_B^{-\frac{1}{2}}\rho_{BC}\rho_B^{-\frac{1}{2}}\underset{\tilde{\rho}_{ABD}}{\underbrace{\rho_{BD}^\frac{1}{2}\rho_B^{-\frac{1}{2}}\rho_{AB}^\frac{1}{2}}}\\[1.0ex]
&=\rho_{BD}^\frac{1}{2}\rho_{B}^{-\frac{1}{2}}\underset{\tilde{\rho}_{ABC}}{\underbrace{\Petz{AB}{B}{BC}}}\rho_{B}^{-\frac{1}{2}}\rho_{BD}^\frac{1}{2}\\
&=\Theta_{ABCD}^\dagger\Theta_{ABCD}.
\end{align}
Moreover, the QMC $\tilde{\rho}=\Theta_{ABCD}\Theta_{ABCD}^\dagger$ is in $\mbox{Comp}\left( \tilde{\rho}_{ABC}, \tilde{\rho}_{CBD}, \tilde{\rho}_{ABD}\right)$ by using  Eq.~\eqref{eq:starQMCs ABC}, Eq.~\eqref{eq:starQMCs CBD} and Eq.~\eqref{eq:starQMCs ABD}. 
\end{proof}
The previous lemma can be trivially extended to the n-partite scenario, i.e. to an arbitrary chain and a star:
\begin{corollary}
	\label{corollary: (lemma8):n-parties:chain e star}
	\emph{Let $\mathbb{X}=\{X_1,\dots X_n\}$ be the labeling set of the parts of a finite dimensional Hilbert space $\hilbert{}$ and $\mathcal{C}=\{\rho_{XY}\in\mathcal{B}\left( \hilbert{XY}\right),\, X,Y\in \mathbb{X}\}$ a set of two-body marginals on it classically compatible.
	Assume $\mathcal{G}_\mathcal{C}$ is a star centred in some $Y\in\mathbb{X}$, i.e. $\mathcal{C}=\{\rho_{X_iY}, i=1,\dots,n-1\}$ then there exists $\tilde{\rho}\in\liouville{}:\, \tilde{\rho}\in\mbox{Comp}\left(\mathcal{C}\right)$ such that $\tilde{\rho}$ is a quantum Markov network
	iff 
	$$I_\rho\left( X_{i}:X_{j}|Y\right)=0\ \forall i\neq j\in{1,\dots,n-1}.$$
	Moreover, $$\tilde\rho=\mbox{arg}\underset{{}^{\rho\in\text{Comp}\left(\mathcal{C}\right)} }{\mbox{max}}S(\rho)$$ and factorizes over the elements of $\mathcal{C}$ via Petz following a constructive ordering for $\mathcal{C}$.}
\end{corollary}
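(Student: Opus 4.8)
The statement is the $n$-party lift of Lemma~\ref{lemma4}~b), so the plan is to prove it by induction on the number $r=n-1$ of leaves of the star. Write $Y$ for the centre and $X_1,\dots,X_r$ for the leaves, and fix the constructive ordering $X_1<Y<X_2<\dots<X_r$ of $\mathcal{C}$. I read ``$I_\rho(X_i:X_j|Y)=0$'' as ``$\{\rho_{X_iY},\rho_{X_jY}\}$ is compatible with a QMC $X_i-Y-X_j$'', exactly as in Lemma~\ref{lemma4}.

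\textbf{Easy direction.} Suppose $\tilde\rho\in\mbox{Comp}(\mathcal{C})$ is a QMN factorizing via Petz. Each leaf $X_i$ has $Y$ as its only neighbour, so the local Markov property (condition ii) of Theorem~\ref{thm: QMT- Petz fact - log}, which by the remark after that theorem holds for every constructive ordering, in particular for one adding $X_i$ last) gives $I_{\tilde\rho}(X_i:\overline{\{X_i,Y\}}\,|\,Y)=0$. The chain rule \eqref{eq:QCMI-chain-rule} together with non-negativity of the quantum conditional mutual information then forces $I_{\tilde\rho}(X_i:X_j\,|\,Y)=0$ for every $j\neq i$, and restricting $\tilde\rho$ to $X_iYX_j$ exhibits the required QMC (Lemma~\ref{lemma: QMC - I(A:C|B)}).

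\textbf{Hard direction.} The base case $r=2$ is Lemma~\ref{lemma: QMC - comp}. For the step $r\to r+1$ (the instance $r=2$ of which is precisely Lemma~\ref{lemma4}~b)), I would first apply the induction hypothesis to the sub-star on leaves $X_1,\dots,X_r$ to obtain a Petz-factorized QMN $\tilde\rho^{(r)}=\Theta^{(r)}(\Theta^{(r)})^\dagger$ with $\Theta^{(r)}=\Delta_r\cdots\Delta_2\,\rho_{X_1Y}^{1/2}$, $\Delta_k:=\rho_{X_kY}^{1/2}\rho_Y^{-1/2}$, lying in $\mbox{Comp}(\{\rho_{X_iY}\}_{i\le r})$; since $X_r$ is a leaf of this tree, $I_{\tilde\rho^{(r)}}(X_1\cdots X_{r-1}:X_r\,|\,Y)=0$. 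Now collapse $U:=X_1\cdots X_{r-1}$ to a single system and apply Lemma~\ref{lemma4}~b) to the three-branch star $U-Y$, $X_r-Y$, $X_{r+1}-Y$: the hypothesis ``$U-Y-X_r$ is a QMC'' is the line just obtained; ``$X_r-Y-X_{r+1}$ is a QMC'' is the given pairwise condition; and ``$U-Y-X_{r+1}$ is a QMC'' follows by applying the induction hypothesis once more, now to the sub-star on leaves $X_1,\dots,X_{r-1},X_{r+1}$ (again an $r$-leaf star whose pairwise conditions are a sub-collection of the hypothesis), which yields a QMN in which $X_{r+1}$ is a leaf, hence $I(X_1\cdots X_{r-1}:X_{r+1}|Y)=0$; its $U,Y$-marginal coincides with that of $\tilde\rho^{(r)}$ because both are the unique Petz-factorized, i.e.\ maximum-entropy, state with two-body marginals $\{\rho_{X_iY}\}_{i\le r-1}$. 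Lemma~\ref{lemma4}~b) then produces $\tilde\rho\in\mbox{Comp}(\rho_{UY},\rho_{YX_r},\rho_{YX_{r+1}})$ that is a QMN factorizing via Petz, so by Theorem~\ref{thm: QMT- Petz fact - log}~i) $\log\tilde\rho=\log\rho_{UY}+\log\rho_{X_rY}+\log\rho_{X_{r+1}Y}-2\log\rho_Y$; substituting the log-formula $\log\rho_{UY}=\sum_{i\le r-1}\log\rho_{X_iY}-(r-2)\log\rho_Y$ for the $U,Y$-marginal collapses the telescoping sums to $\log\tilde\rho=\sum_{i=1}^{r+1}\log\rho_{X_iY}-r\log\rho_Y$, which is condition i) of Theorem~\ref{thm: QMT- Petz fact - log} for the full $(r+1)$-leaf star. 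Hence $\tilde\rho$ factorizes via Petz according to $\mathcal{C}$; it lies in $\mbox{Comp}(\mathcal{C})$ because tracing it down to any pair $X_iY$ returns the prescribed $\rho_{X_iY}$; and it maximises the von Neumann entropy over $\mbox{Comp}(\mathcal{C})$ by Corollary~\ref{cor1}, the star being a spanning tree.

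\textbf{Main obstacle.} The delicate point is the step $r\to r+1$: to make Lemma~\ref{lemma4}~b) applicable one needs the \emph{multipartite} conditional independence $I(X_1\cdots X_{r-1}:X_{r+1}|Y)=0$, which does not follow from the pairwise hypotheses by monotonicity alone and instead requires the nested use of the induction hypothesis, together with the bookkeeping that all the restricted states are consistent — this last being supplied by uniqueness of the Petz-factorized (equivalently, maximum-entropy) compatible state.
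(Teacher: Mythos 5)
Your proof is correct, and its engine is the same as the paper's: induction over the leaves of the star, with Lemma~\ref{lemma4}~b) applied at each step to a coarse-grained three-branch star in which one branch is the composite system $U=X_1\cdots X_{r-1}$. The difference is in how the composite-branch hypotheses are discharged. The paper never invokes uniqueness or the log condition: at each new leaf it decomposes the required multipartite condition $I(X_{k+1}:X_1\cdots X_k|Y)=0$ into the pairwise 3-chain conditions by a \emph{cascade} of applications of Lemma~\ref{lemma4}~b) (e.g.\ $I(X_4:X_1X_2X_3|Y)=0$ iff $I(X_3:X_1X_2|Y)=0$ and $I(X_4:X_1X_2|Y)=0$, and the latter is decomposed again), so each step is a chain of equivalences and both directions of the corollary come out simultaneously. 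You instead invoke the induction hypothesis twice, on the two overlapping sub-stars $\{X_1,\dots,X_r\}$ and $\{X_1,\dots,X_{r-1},X_{r+1}\}$, which forces you to argue that the two constructions share the same $UY$-marginal; your appeal to uniqueness of the maximum-entropy (equivalently Petz-factorized) compatible state is valid (strict concavity of $S$ plus convexity and compactness of $\mbox{Comp}$, together with Corollary~\ref{cor1}), but this uniqueness is nowhere stated in the paper, so you should either state it explicitly or replace it by the simpler observation used in the proof of Theorem~\ref{thm: QMT- Petz fact - log} that tracing out the last-added leaf of a Petz factorization returns the previous step's state, $\pTr{X_{k+1}}{\rho_{k+1}}=\rho_k$, so both marginals equal the same explicit operator $\Delta_{r-1}\cdots\Delta_3\,\rho_{X_1Y}\,\Delta_3^\dagger\cdots\Delta_{r-1}^\dagger$. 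Your finish via condition i) of Theorem~\ref{thm: QMT- Petz fact - log} (telescoping the log formula) is a nice touch that also neatly sidesteps the fact that Lemma~\ref{lemma4}~b) as you oriented it yields $I(U:X_rX_{r+1}|Y)=0$ rather than the condition $I(X_{r+1}:X_1\cdots X_r|Y)=0$ demanded by condition ii); the paper gets the latter directly by choosing the single new leaf as the distinguished branch. Your easy direction (local Markov property plus the chain rule \eqref{eq:QCMI-chain-rule} and non-negativity) matches what the paper leaves implicit.
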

\begin{proof} The proof follows by adding at each step a  node to the setting of Lemma~\ref{lemma4} (case b).  Shortly, consider the constructive ordering for the graph $\mathbb{X}=\{Y, X_1, \dots, X_n\}$. Start from  graph $\mathcal{G}_3$, where $V_3\equiv{Y, X_1, X_2, X_3}$, clearly in this case we are in the situation of Lemma~\ref{lemma4} b), then:
    \begin{equation}
        I_\rho(X_2:X_1|Y)=0\  I_\rho(X_3:X_1|Y)=0\ \Leftrightarrow\ I_\rho(X_3:X_1X_2|Y)=0.
    \end{equation}
    Observe that the two conditions $I_\rho(X_3:X_1X_1|Y)=0$ and $I_\rho(X_2:X_1|Y)=0$ are those required by Theorem~\ref{thm: QMT- Petz fact - log} s.t. there exists a Petz-factorizable d.o. $\rho_3$ over $\mathcal{G}_3$. 
    Next, we add the link $X_4-Y$ to the graph and verify that $I_\rho(X_4:X_1X_2X_3|Y)=0$ also holds. We need to use again Theorem~\ref{thm: QMT- Petz fact - log} to construct a Petz-factorizable $\rho_4$. This condition follows by applying Lemma~\ref{lemma4} b):
    \begin{equation}
    \begin{split}
      I_\rho(X_4:X_1X_2X_3|Y)=0\ \textnormal{iff} \\
      I_\rho(X_3:X_1X_2|Y)=0\ \textnormal{and}\ I_\rho(X_4:X_1X_2|Y)=0.
    \end{split}
    \end{equation}
    Where the first condition is the one we got in the previous step, the second comes from Lemma~\ref{lemma4} b):
    \begin{equation}
     \begin{split}
      I_\rho(X_4:X_1X_2|Y)=0\ \textnormal{iff} \\
      I_\rho(X_4:X_1|Y)=0\ \textnormal{and}\ I_\rho(X_4:X_2|Y)=0.
    \end{split}   
    \end{equation}
    Then, we keep adding nodes and decomposing the next required condition by Theorem~\ref{thm: QMT- Petz fact - log}. We notice that at each step, i.e. every time we add a node, in order to have a Petz decomposable d.o. on the new graph we have to add to the previous set of 3-chains, all the new 3-chains, i.e. the ones that involve the last added node.
\end{proof}
\begin{lemma}\label{lemma:relax} 
	\label{lemma: relaxation Vk->V}	
	\emph{ 
		Let $\rho\in\liouville{}$, where
		$\mathbb{X}=\{X_1,\dots,X_n\}$ and  $\hilbert{}=\bigotimes_{i=1}^n\hilbert{X_i}$,  such that
		$\rho\in\mbox{Comp}\left(\mathcal{C} \right) $ with
$\mathcal{G}_\mathcal{C}$ a tree (i.e., we work under Assumption~\ref{assump:1}, and take $X_1<\dots<X_n$ the constructive order).
If for some $\ell\leq n$ the following conditions hold
		\begin{equation}
		\label{eq:lemma: relaxation Vk->V.1}
		 I_{\rho_j}\left(X_i:\overline{Y_j}|\,Y_j\right)=0,\ \forall j=3,\dots, \ell,
		\end{equation}
then, by taking any $i$ and $m_i\geq i$ such that
	\begin{equation}
		\label{eq:lemma: relaxation Vk->V.2}
		\mbox{deg}X_i|_{\mathcal{G}_i}=\mbox{deg}X_i|_{\mathcal{G}_{m_i}}\quad \textnormal{and}\quad\mbox{deg}Y_i|_{\mathcal{G}_i}=\mbox{deg}Y_i|_{\mathcal{G}_{m_i}},
		\end{equation}
the following conditions also hold
		\begin{equation}
		 I_{\rho}\left(X_i:V_{r_i}\setminus\{X_i,Y_i\}|\,Y_i\right)=0,\ \forall r_i:\ i\leq r_i\leq m_i\leq \ell,
		\end{equation}
}
\end{lemma}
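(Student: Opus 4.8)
The plan is to deduce the statement from the order-independence of the Petz-factorization characterisation of Theorem~\ref{thm: QMT- Petz fact - log}. Fix $i$ and $m_i$ as in the hypothesis and let $r_i$ be arbitrary with $i\le r_i\le m_i\le\ell$. First I would record the structural consequence of the degree hypotheses~\eqref{eq:lemma: relaxation Vk->V.2}: since vertex degrees are non-decreasing along a constructive ordering and, by Proposition~\ref{prop:leaf tree}, $X_i$ is a leaf of $\mathcal{G}_i$, the equality $\mbox{deg}X_i|_{\mathcal{G}_i}=\mbox{deg}X_i|_{\mathcal{G}_{m_i}}$ forces $\mbox{deg}X_i|_{\mathcal{G}_{r_i}}=1$ for every $r_i$ in the range; since no edge is ever deleted, the unique neighbour of $X_i$ in $\mathcal{G}_{r_i}$ is still $Y_i$, so $X_i$ is a leaf of the tree $\mathcal{G}_{r_i}$ joined to $Y_i$. (The companion equality for $Y_i$ is what controls the adjacency set of $Y_i$ when this lemma is invoked later.)

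Next I would apply Theorem~\ref{thm: QMT- Petz fact - log} to the reduced state $\rho_{r_i}:=\pTr{\overline{V_{r_i}}}{\rho}$ and to its two-body marginals $\mathcal{C}_{r_i}$, which are admissible (being marginals of $\rho$), still invertible, and have associated graph the spanning tree $\mathcal{G}_{r_i}$ of $V_{r_i}$. Truncated to $j=3,\dots,r_i$, the hypotheses~\eqref{eq:lemma: relaxation Vk->V.1} are exactly condition ii) of that theorem for the constructive ordering $X_1<\dots<X_{r_i}$ (the clause $k=2$ being vacuous, since $\overline{Y_2}=\emptyset$). Hence $\rho_{r_i}$ is factorizable via Petz according to $\mathcal{C}_{r_i}$, so it satisfies condition i), which involves no ordering. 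By the remark following Theorem~\ref{thm: QMT- Petz fact - log} — condition i) is equivalent to condition ii) for \emph{any} constructive ordering — condition ii) then also holds for the ordering obtained by choosing an arborescence of $\mathcal{G}_{r_i}$ rooted away from $X_i$ (say at $Y_i$) and listing $X_i$ last, which is legitimate precisely because $X_i$ is a leaf. Reading off the clause of condition ii) for that last node $X_i$, whose predecessor is $Y_i$ and whose $\overline{Y_\bullet}$ is $V_{r_i}\setminus\{X_i,Y_i\}$, gives
$$I_{\rho_{r_i}}\!\left(X_i:V_{r_i}\setminus\{X_i,Y_i\}\,|\,Y_i\right)=0 .$$

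Finally, since the quantum conditional mutual information $I(X_i:V_{r_i}\setminus\{X_i,Y_i\}\,|\,Y_i)$ depends only on the marginal of $\rho$ on $V_{r_i}$, which is $\rho_{r_i}$, we obtain $I_\rho(X_i:V_{r_i}\setminus\{X_i,Y_i\}\,|\,Y_i)=0$; the degenerate range $r_i=i\le 2$ is immediate since then $V_{r_i}\setminus\{X_i,Y_i\}=\emptyset$. As $r_i$ was arbitrary in $[i,m_i]$, this is the claim. I expect the only delicate point to be the transfer of condition ii) across the change of constructive ordering: this is precisely the order-independence remark after Theorem~\ref{thm: QMT- Petz fact - log}, but one must apply it to $\rho_{r_i}$ and to the genuine spanning tree $\mathcal{G}_{r_i}$, not to $\rho$, and verify that truncating~\eqref{eq:lemma: relaxation Vk->V.1} to the first $r_i$ steps really does reproduce condition ii) there. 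A more computational alternative would be an induction on $r_i$: in passing from $r_i$ to $r_i+1$ one attaches the new leaf $X_{r_i+1}$ along the chain $X_i-Y_i-(\mathcal{G}_{r_i}\setminus\{X_i,Y_i\})$ using Lemma~\ref{lemma4}(a), together with the chain rule~\eqref{eq:QCMI-chain-rule} and positivity of conditional mutual information; the degree hypotheses are exactly what guarantee that the new node attaches neither to $X_i$ nor to $Y_i$, so the chain around $X_i$ is preserved at every step.
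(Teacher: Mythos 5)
Your proposal is correct and follows essentially the same route as the paper's proof: both apply Theorem~\ref{thm: QMT- Petz fact - log} to the reduced state $\rho_{r_i}$ with marginals $\mathcal{C}_{r_i}$ (a tree by Proposition~\ref{prop:leaf tree}), then pass to a constructive ordering in which $X_i$ is added last and read off the corresponding clause of condition ii), finally noting that the conditional mutual information only involves the marginal on $V_{r_i}$. The only substantive difference is how the re-ordering is justified. The paper does this concretely at the operator level: it writes the Petz factorization $\rho_{r_i}=\Delta_{r_i}\cdots\Delta_i\cdots\rho_{X_1X_2}\cdots\Delta_i\cdots\Delta_{r_i}$ and uses the degree hypotheses to get $[\Delta_i,\Delta_s]=0$ for $i<s\le r_i$ (no later node attaches to $X_i$ \emph{or} $Y_i$, so the later factors act on disjoint subsystems), which lets it pull $\Delta_i$ to the outside and exhibit a factorization where $X_i$ is last; this commutation step is where the hypothesis $\mbox{deg}Y_i|_{\mathcal{G}_i}=\mbox{deg}Y_i|_{\mathcal{G}_{m_i}}$ is actually consumed. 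You instead route through the order-free condition i) and the remark after Theorem~\ref{thm: QMT- Petz fact - log} that i) and ii) are equivalent for any constructive ordering, so you only need the $\mbox{deg}X_i$ hypothesis (to keep $X_i$ a leaf of $\mathcal{G}_{r_i}$ attached to $Y_i$, hence placeable last). That is a legitimate and slightly cleaner shortcut, with the caveat that the order-independence remark is asserted rather than proved in detail in the paper; the explicit commutation argument in the paper's proof is essentially the concrete verification of that remark in this situation, so your version trades the paper's self-contained computation for a dependence on that remark.
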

\begin{proof}
Take that Eq.~\eqref{eq:lemma: relaxation Vk->V.1} with $j=\rho_{r_i}$. 
By Theorem~\ref{thm: QMT- Petz fact - log}, we know that $\rho_{r_i}$ factorizes via Petz over its two-body marginals according to $\mathcal{C}_{r_i}$. 
Then, set $$\Delta_k:=\rho_{X_{k}Y_{k}}^\frac{1}{2}\rho_{Y_k}^{-\frac{1}{2}},$$ it follows that the factorization via Petz can be written as follows:
	\begin{equation}
	\label{eq: rhok}
	\rho_{r_i}=\Delta_{r_i}\Delta_{{r_i}-1}\dots\Delta_{i}\dots\dots\rho_{X_{1}X_{2}}\dots\Delta_i\dots\Delta_{{r_i}-1}\Delta_{r_i},
	\end{equation} 
	where, in general, $[\Delta_i,\Delta_j]\neq 0$. Note that, from the definition of $m_i$ it must be the case that $[\Delta_{r_i},\Delta_{s}]= 0\,\ \forall s: i\leq s\leq {r_i}$. This follows since Eq.~\eqref{eq:lemma: relaxation Vk->V.2} imposes that no more nodes are connected to $X_i$ and $Y_i$ when adding nodes from step $i$ to $m_i$; and therefore, the additional $\Delta_k$'s operate on different Hilbert spaces. 
	Then, Eq.~\eqref{eq: rhok} to be written as:
	\begin{equation}
	\rho_k=\Delta_{i}\dots\Delta_{r_i}\dots\rho_{X_{1}X_{2}}\dots\Delta_{r_i}\dots\Delta_{i}.
	\end{equation}
Now consider a new, but equivalent, constructive ordering $<'$ 
$$X_1<'\dots <'X_{i-1}<'X_{r_i}<'X_{i+1}<'\dots<'X_{r_i-1}<'X_{i}$$, obtained from the order $<$ by exchanging $r_i$ with $i$. By using Theorem~\ref{thm: QMT- Petz fact - log} ii) with the order $<'$, we get in $\mathcal{C}_{r_i}'$ the condition
	\begin{equation}
I_{\rho_{r'_i}}\left(X_{r'_i}:\overline{Y_{r'_i}}|Y_{r'_i}\right)=0.
	\end{equation}
Which for the usual order $<$ can be stated as:
		\begin{equation}
	I_{\rho}\left(X_i:V_{r_i}\backslash\{X_i,Y_i\}|Y_i\right)=0.
	\end{equation}
The latter equality is valid for all $r_i:\ i\leq r_i\leq m_i\leq \ell$, since the only property used was the fact that $\mbox{deg}X_i|_{\mathcal{G}_i}=\mbox{deg}X_i|_{\mathcal{G}_{r_i}}$.
\end{proof}

\end{document}